\newif\ifabstract
\newif\iffull
\newcommand{\myparskip}{3pt}
\newcommand{\gset}{\mathcal{U}}
\newcommand{\gkrv}{\ensuremath{\gamma_{\mbox{\tiny{\sc CMG}}}}}
\newcommand{\gKRV}{\ensuremath{\gamma_{\mbox{\tiny{\sc CMG}}}}}
\newcommand{\trset}{\tilde{\mathcal{R}}}
\newcommand{\PoS}{Path-of-Sets System\xspace}
\newcommand{\pos}{\mathbb{P}}
\newcommand{\hpos}{\mathbb{H}}
\newcommand{\ceil}[1]{\ensuremath{\left\lceil#1\right\rceil}}
\newcommand{\floor}[1]{\ensuremath{\left\lfloor#1\right\rfloor}}
\def\etal{et al.\xspace}
\newcommand{\set}[1]{\left\{ #1 \right\}}
\newcommand{\sse}{\subseteq}
\newcommand{\pset}{{\mathcal{P}}}
\newcommand{\tpset}{\tilde{\mathcal{P}}}
\newcommand{\qset}{{\mathcal{Q}}}
\newcommand{\hqset}{\hat{\mathcal{Q}}}
\newcommand{\hrset}{\hat{\mathcal{R}}}
\newcommand{\hM}{\hat{M}}
\newcommand{\hw}{\hat{w}}
\newcommand{\hN}{\hat N}
\newcommand{\hD}{\hat D}
\newcommand{\tqset}{\tilde{\mathcal{Q}}}
\newcommand{\hG}{\hat G}
\newcommand{\bset}{{\mathcal{B}}}
\newcommand{\cset}{{\mathcal{C}}}
\newcommand{\tcset}{\tilde {\mathcal{C}}}
\newcommand{\jset}{{\mathcal{J}}}
\newcommand{\yset}{{\mathcal{Y}}}
\newcommand{\rset}{{\mathcal{R}}}
\newcommand{\sset}{{\mathcal{S}}}
\newcommand{\nots}{\overline S}
\newcommand{\be}{\begin{enumerate}}
\newcommand{\ee}{\end{enumerate}}
\newcommand{\bd}{\begin{description}}
\newcommand{\ed}{\end{description}}
\newcommand{\bi}{\begin{itemize}}
\newcommand{\ei}{\end{itemize}}
\newtheorem{theorem}{Theorem}[section]
\newtheorem{lemma}[theorem]{Lemma}
\newtheorem{observation}[theorem]{Observation}
\newtheorem{corollary}[theorem]{Corollary}
\newtheorem{claim}[theorem]{Claim}
\newtheorem*{definition}{Definition.}
\newenvironment{proof}{\par \smallskip{\bf Proof:}}{\hfill\stopproof}
\def\stopproof{\square}
\def\square{\vbox{\hrule height.2pt\hbox{\vrule width.2pt height5pt \kern5pt
\vrule width.2pt} \hrule height.2pt}}
\newenvironment{proofof}[1]{\noindent{\bf Proof of #1.}}%
{\hfill\stopproof}
\renewcommand{\phi}{\varphi}
\newcommand{\half}{\ensuremath{\frac{1}{2}}}
\newcommand{\poly}{\operatorname{poly}}
\newcommand{\tw}{\mathrm{tw}}
\newenvironment{properties}[2][0]
{
\begin{enumerate} \setcounter{enumi}{#1}}{\end{enumerate}}
\newcommand{\mynote}[1]{{\sc\bf{[#1]}}}
\newcommand{\dmax}{d_{\mbox{\textup{\footnotesize{max}}}}}
\newcommand{\out}{\operatorname{out}}
\newcommand{\Erdos}{Erdos\xspace}
\newcommand{\Posa}{P\'{o}sa\xspace}
\newcommand{\EP}{{\Erdos-\Posa}\xspace}
\begin{document}

\title{Towards Tight(er) Bounds for the Excluded Grid Theorem\footnote{Extended abstract appeared in SODA 2019.}}

\author{Julia Chuzhoy\thanks{Toyota Technological Institute at Chicago. Email: {\tt cjulia@ttic.edu}. Part of the work was done while the author was a Weston visiting professor in the Department of Computer Science and Applied Mathematics, Weizmann Institute. Supported in part by NSF grants CCF-1318242 and CCF-1616584.}\and Zihan Tan\thanks{Computer Science Department, University of Chicago. Email: {\tt zihantan@uchicago.edu}. Supported in part by NSF grants CCF-1318242 and CCF-1616584.}}

\begin{titlepage}
\thispagestyle{empty}
\maketitle
\begin{abstract}
We study the Excluded Grid Theorem, a fundamental structural result in graph theory, that was proved by Robertson and Seymour in their seminal work on graph minors. The theorem states that there is a function $f: \mathbb{Z}^+ \to \mathbb{Z}^+$, such that for every integer $g>0$, every graph of treewidth at least $f(g)$ contains the $(g\times g)$-grid as a minor. For every integer $g>0$, let $f(g)$ be the smallest value for which the theorem holds. Establishing tight bounds on $f(g)$ is an important graph-theoretic question. Robertson and Seymour showed that $f(g)=\Omega(g^2\log g)$ must hold. For a long time, the best known upper bounds on $f(g)$ were super-exponential in $g$. The first polynomial upper bound of $f(g)=O(g^{98}\poly\log g)$ was proved  by Chekuri and Chuzhoy. It was later improved to $f(g) = O(g^{36}\poly \log g)$, and then to $f(g)=O(g^{19}\poly\log g)$. In this paper we further improve this bound to $f(g)=O(g^{9}\poly \log g)$. We believe that our proof is significantly simpler than the proofs of the previous bounds. Moreover, while there are natural barriers that seem to prevent the previous methods from yielding tight bounds for the theorem, it seems conceivable that the techniques proposed in this paper can lead to even tighter bounds on $f(g)$.
\end{abstract}

\thispagestyle{empty}

\end{titlepage}

\section{Introduction}


The Excluded Grid theorem is a fundamental result in graph theory,  that was proved by Robertson and Seymour~\cite{RS-grid} in their Graph Minors series. The theorem states that there is a function $f: \mathbb{Z}^+ \to \mathbb{Z}^+$, such that for every integer $g>0$, every graph of treewidth at least $f(g)$ contains the $(g\times g)$-grid as a minor. The theorem has found many applications in graph theory and algorithms, including routing problems~\cite{robertson1995graph}, fixed-parameter tractability~\cite{DemaineH-survey,DemaineH07}, and \EP-type results \cite{RS-grid,thomassen1988presence, Reed-chapter, fomin2011strengthening}. 
For an integer $g>0$, let $f(g)$ be the smallest value, such that  every graph of treewidth at least $f(g)$ contains the $(g\!\times\!g)$-grid as a minor. An important open question is establishing tight bounds on $f$. Besides being a fundamental graph-theoretic question in its own right, improved upper bounds on $f$ directly affect the running times of numerous algorithms that rely on the theorem, as well as parameters in various graph-theoretic results, such as, for example, \EP-type results.

On the negative side, it is easy to see that $f(g)=\Omega(g^2)$ must hold. Indeed, the complete graph on $g^2$ vertices has treewidth $g^2-1$, while the size of the largest grid minor in it is $(g\times g)$.  Robertson \etal \cite{RobertsonST94} showed a slightly stronger bound of $f(g)=\Omega(g^2\log g)$, by using $\Omega(\log n)$-girth constant-degree expanders, and they conjectured that this bound is tight. Demaine \etal~\cite{demaine2009algorithmic} conjectured that $f(g)=\Theta(g^3)$.

On the positive side, for a long time, the best known upper bounds on $f(g)$ remained super-exponential in $g$: the original bound of \cite{RS-grid} was improved by Robertson, Seymour and Thomas in~\cite{RobertsonST94} to $f(g)=2^{O(g^5)}$. It was further improved to $f(g)=2^{O(g^2/\log g)}$ by Kawarabayashi and Kobayashi \cite{kawarabayashi2012linear} and by Leaf and Seymour \cite{leaf2015tree}. The first polynomial upper bound of $f(g)=O(g^{98}\poly\log g)$ was proved by Chekuri and Chuzhoy~\cite{CC14}.  The proof is constructive and provides a randomized algorithm that, given an $n$-vertex graph $G$ of treewidth $k$, finds a model of the $(g\times g)$-grid minor in $G$, with $g=\tilde\Omega(k^{1/98})$, in time polynomial in both $n$ and $k$. Unfortunately, the proof itself is quite complex.
In a subsequent paper, Chuzhoy~\cite{GMT-STOC} suggested a relatively simple framework for the proof of the theorem, that can be used to obtain a polynomial bound $f(g)=O(g^c)$ for some constant $c$. Using this framework, she obtained an upper bound of $f(g)=O(g^{36}\poly \log g)$, but unfortunately the attempts to optimize the constant in the exponent resulted in a rather technical proof. Combining the ideas from~\cite{CC14} and \cite{GMT-STOC}, the upper bound was further improved to  $f(g) = O(g^{19}\poly \log g)$ in~\cite{chuzhoy2016improved}. We note that the results in~\cite{GMT-STOC} and \cite{chuzhoy2016improved}  are existential. 

The main result of this paper is the proof of the following theorem.
\begin{theorem}
	\label{thm:main}
	There exist constants $c_1,c_2>0$, such that for every integer $g\geq 2$, every graph of treewidth at least $k=c_1g^{9}\log^{c_2} g$ contains the $(g\times g)$-grid as a minor. 
\end{theorem}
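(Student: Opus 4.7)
The plan is to follow the now-standard two-phase framework introduced in \cite{CC14} and refined in \cite{GMT-STOC,chuzhoy2016improved}: first, convert the treewidth assumption into a structural object $\pos$, a \PoS with a controlled number $h$ of clusters and a controlled ``width" $w$ inside each cluster (the boundary size together with a well-linkedness guarantee); second, convert $\pos$ into a model of the $(g\times g)$-grid minor. The target parameters are $h, w = \tilde\Omega(g^{\alpha})$ for small $\alpha$, combined in such a way that the overall loss from $k$ to $g$ is $\tilde O(g^9)$ rather than $\tilde O(g^{19})$.

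For the first phase, I would start from a graph $G$ of treewidth $k$ and extract a large well-linked set of vertices, together with a bounded-congestion linkage between two halves of it. I would then run an iterative carving procedure: at each step, peel off one cluster $C_i$ from $G$ so that $C_i$ is bounded-congestion well-linked with respect to a prescribed boundary of size about $w$, while preserving enough residual treewidth in $G \setminus C_i$ to proceed. The goal is to show that $hw^{O(1)} = \tilde\Omega(k)$ suffices, with the exponent in $w$ pushed as close to a small constant as possible. A key primitive is a routing lemma that, given a well-linked terminal set, produces a clustering whose clusters simultaneously support linkages between designated halves of their boundaries with logarithmic rather than polynomial congestion loss per iteration.

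For the second phase, given $\pos$ with suitable parameters, I would convert it to a grid minor by building $g$ ``horizontal'' paths that traverse the clusters in order, and $g$ ``vertical'' paths inside each cluster that pairwise intersect the horizontal ones in the correct cyclic order. The natural tool here is a non-crossing routing argument: one first routes many paths through each cluster using its well-linkedness, and then iteratively uncrosses them into an aligned, nearly disjoint grid-like pattern. Balancing the parameters so that $h$ and $w$ are both of order $\tilde\Omega(g^c)$ for appropriate $c$, and so that the conversion cost from $\pos$ to the grid minor is $\tilde O(g^2)$, should yield $g = \tilde\Omega(k^{1/9})$.

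The main obstacle, in my view, lies in the first phase: obtaining a \PoS in which \emph{both} $h$ and $w$ are simultaneously large, with a tight polynomial tradeoff against $k$. Previous proofs typically spent a factor in $w$ to gain control of $h$, or vice versa, and the technical heart of the improvement should be a reformulation of the iterative carving step so that the well-linkedness degradation per iteration is only polylogarithmic. The conversion phase, though combinatorially intricate, is comparatively robust and can largely follow the blueprint of \cite{GMT-STOC,chuzhoy2016improved}, once the input \PoS has the right parameters.
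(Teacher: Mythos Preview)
Your proposal follows the iterative cluster-splitting framework of \cite{GMT-STOC,chuzhoy2016improved}: start from a width-$\Omega(k)$, length-$1$ \PoS and repeatedly split clusters, doubling the length while losing some factor in width, until the length reaches $\Theta(g^2)$. You correctly locate the bottleneck in the per-iteration loss and assert that the fix is a splitting primitive with only polylogarithmic degradation. But you give no mechanism for such a primitive, and the paper explicitly identifies this as a barrier: since $\Theta(\log g)$ doubling rounds are needed, even a constant factor $c>1$ per round costs $g^{\Theta(\log c)}$ in width, and there is no known way to push $c$ down to $1+o(1)$. Your plan, as stated, is a hope rather than a proof.

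The paper's route is qualitatively different and bypasses this barrier. It does \emph{not} try to grow a long \PoS by repeated splitting. Instead it (i) builds a \emph{short} hairy \PoS of length only $\Theta(\log g)$ but width $k/\poly\log k$; (ii) introduces a new object, the $(A,B,X)$-\emph{crossbar}, and proves a dichotomy (Theorem~\ref{thm: main for building crossbar}): in each cluster one either finds a width-$g^2$ crossbar, or a minor already containing a strong \PoS of length and width $\Omega(g^2)$; (iii) if every cluster yields a crossbar, the $\Theta(\log g)$ crossbars are used, via the cut-matching game, to embed a bounded-degree $\tfrac12$-expander on $g^2$ vertices as a minor, and the grid is then extracted from the expander. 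The $g^9$ arises inside the crossbar dichotomy (through a pseudo-grid construction, slicing, and a well-linked decomposition), not from any iterated splitting. Your second phase, the ``uncrossing horizontal/vertical paths inside a long \PoS'' picture, is also not what happens here; the grid comes from an expander, not from routing inside a length-$g^2$ \PoS that you never build.
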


Aside from significantly improving the best current upper bounds for the Excluded Grid Theorem, we believe that our framework is significantly simpler than the previous proofs. Even though a relatively simple strategy for proving the Excluded Grid Theorem was suggested in~\cite{GMT-STOC}, this strategy only led to weak polynomial bounds on $f(g)$, and obtaining tighter bounds required technically complex proofs. For example, the best previous bound of   $f(g) = O(g^{19}\poly \log g)$ is an 80-page manuscript. Moreover, there are natural barriers that we discuss below, that prevent the strategy proposed in~\cite{GMT-STOC} from yielding tight bounds on $f(g)$, while it is conceivable that the approach proposed in this paper will lead to even tighter bounds on $f(g)$.

\noindent{\bf Our Techniques.}
We now provide an overview of our techniques, and of the techniques employed in the previous proofs~\cite{CC14,GMT-STOC,chuzhoy2016improved} that achieve polynomial bounds on $f(g)$.

One of the central graph-theoretic notions that we use is that of well-linkedness. Informally, we say that a subset $T$ of vertices of a graph $G$ is well-linked if the vertices of $T$ are, in some sense, well-connected in $G$. Formally, for every pair $T',T''\subseteq T$ of disjoint subsets of $T$ with $|T'|=|T''|$, there must be a collection $\pset$ of paths connecting every vertex of $T'$ to a distinct vertex of $T''$ in $G$, such that the paths in $\pset$ are disjoint in their vertices --- we call such a set $\pset$ of paths a set of \emph{node-disjoint paths}. It is well known that, if $T$ is the largest-cardinality subset of vertices of $G$, such that $T$ is well-linked in $G$, then the treewidth of $G$ is $\Theta(|T|)$  (see e.g.~\cite{Reed-chapter}).

As in the proofs of~\cite{CC14,GMT-STOC,chuzhoy2016improved}, the main combinatorial object that we use is the {\em Path-of-Sets System}, that was introduced in~\cite{CC14}; a somewhat similar object (called a {\em grill}) was also studied by Leaf and Seymour~\cite{leaf2015tree}. 
A \PoS $\pos$ of width $w$ and length $\ell$ (see Figure~\ref{fig:PoS}) consists of a sequence $\cset=(C_1,\ldots,C_{\ell})$ of $\ell$ connected sub-graphs of the input graph $G$ that we call clusters. For each cluster $C_i\subseteq G$, we are given two disjoint subsets $A_i,B_i\subseteq V(C_i)$ of its vertices of cardinality $w$ each. We require that the vertices of $A_i\cup B_i$ are well-linked  in $C_i$\footnote{We use a somewhat weaker property than well-linkedness here, but for clarity of exposition we ignore these technicalities for now.}. Additionally, for each $1\leq i<\ell$, we are given a set $\pset_i$ of $w$ node-disjoint paths, connecting every vertex of $B_i$ to a distinct vertex of $A_{i+1}$. The paths in $\bigcup_i \pset_i$ must be all mutually disjoint, and they cannot contain the vertices of $\bigcup_{i'=1}^{\ell}C_{i'}$ as inner vertices. Chekuri and Chuzhoy~\cite{CC14}, strengthening a similar result of Leaf and Seymour~\cite{leaf2015tree}, showed that, if a graph $G$ contains a \PoS of width $g^2$ and length $g^2$, then $G$ contains an $(\Omega(g)\times \Omega(g))$-grid as a minor. Therefore, in order to prove Theorem~\ref{thm:main}, it is enough to show that a graph of treewidth $\Omega(g^{9}\log^{c_2} g)$ contains a \PoS of width and length $\Omega(g^2)$.

\begin{figure}[h]
\centering
\subfigure[A \PoS]{\scalebox{0.5}{\includegraphics{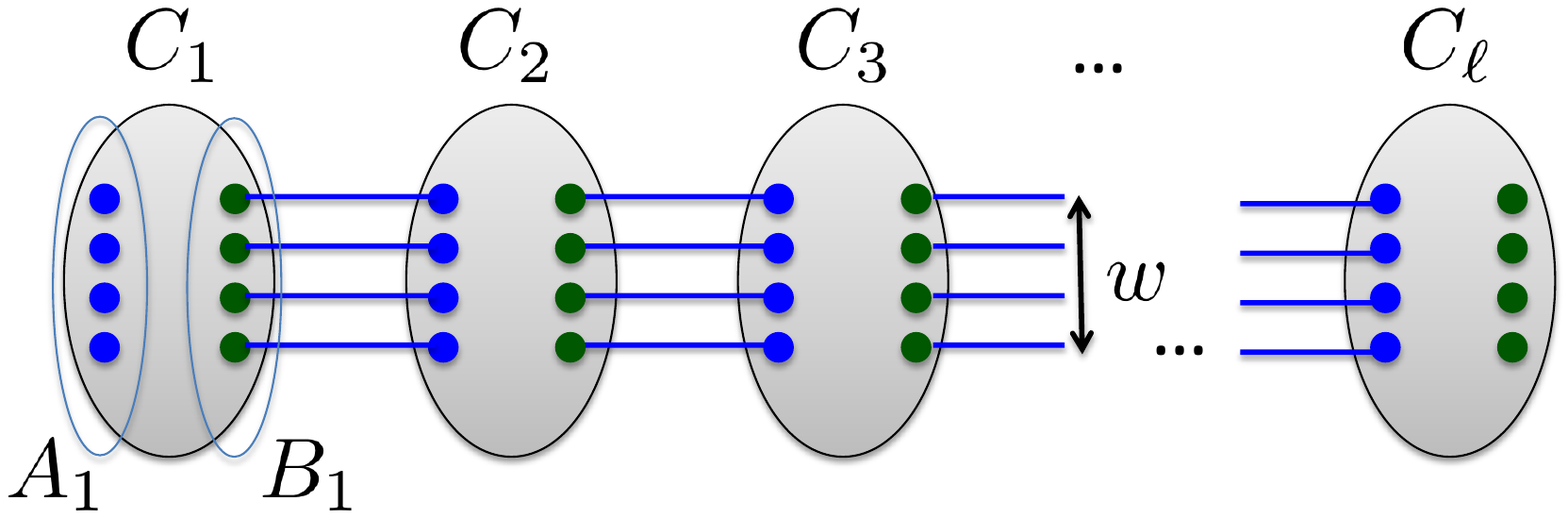}}\label{fig:PoS}}
\hspace{1cm}
\subfigure[A hairy \PoS]{
\scalebox{0.35}{\includegraphics{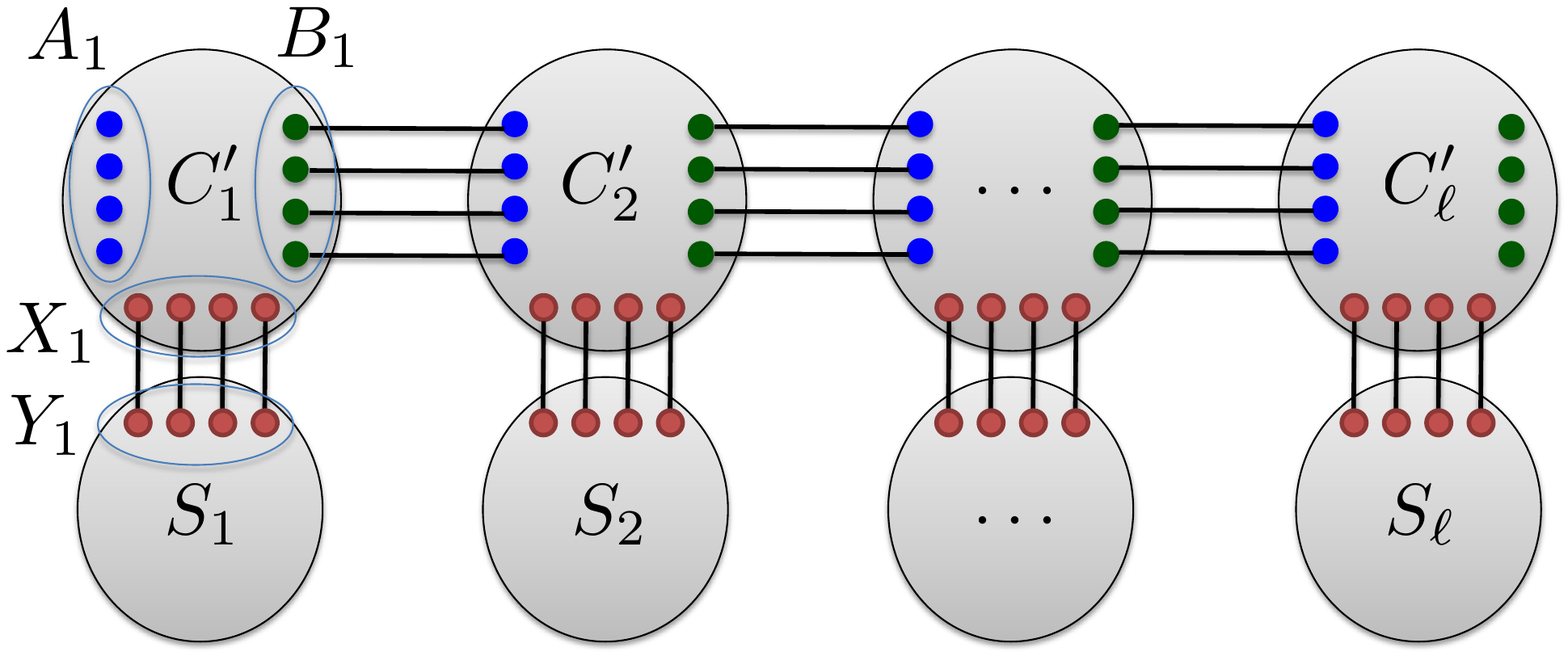}}\label{fig: hairy-PoS}}
\caption{A \PoS and a hairy \PoS \label{fig: two PoS's}}
\end{figure}

Note that, if a graph $G$ has treewidth $k$, then it contains a set $T$ of $\Omega(k)$ vertices, that we call terminals, that are well-linked in $G$.
In~\cite{CC14}, the following approach was employed to construct a large \PoS in a large-treewidth graph. Let $C$ be any connected sub-graph of $G$, and let $\Gamma(C)$ be the set of the boundary vertices of $C$ --- all vertices of $C$ that have a neighbor lying outside of $C$. We say that $C$ is a \emph{good router} if: (i) the vertices of $\Gamma(C)$ are well-linked\footnote{Here, a much weaker definition of well-linkedness was used, but we ignore these technicalities in this informal overview.} in $C$; and (ii) there is a large set of node-disjoint paths connecting vertices of $\Gamma(C)$ to the terminals in $T$. The proof of~\cite{CC14} consists of two steps. First, they show that, if the treewidth of $G$ is large, then $G$ contains a large number of disjoint good routers. In the second step, a large subset of the good routers are combined into a \PoS. Both these steps are quite technical, and rely on many previous results, such as the cut-matching game~\cite{KRV}, graph-reduction step preserving element-connectivity~\cite{element-connectivity,ChekuriK09}, edge-splitting~\cite{edge-connectivity}, and LP-based approximation algorithms for bounded-degree spanning tree~\cite{Singh-Lau}, to name just a few. While the bound on $f(g)$ that this result produces is weak: $f(g)=O(g^{98}\poly\log g)$, this result has several very useful consequences that were exploited in all subsequent proofs of the Excluded Grid Theorem, including the one in the current paper. First, the result implies that for any integer $\ell>0$, a graph of treewidth $k$ contains a \PoS of length $\ell$ and width $\Omega(k/(\poly(\ell\log k)))$. In particular, setting $\ell=\Theta(\log k)$, we can obtain a \PoS of length $\ell$ and width $\Omega(k/\poly\log k)$. This fact was used in~\cite{tw-sparsifiers} to show that any graph $G$ of treewidth $k$ contains a sub-graph $G'$ of treewidth $\Omega(k/\poly\log k)$, whose maximum vertex degree bounded by a constant, where the constant bounding the degree can be made as small as $3$. This latter result proved to be a convenient starting point for subsequent improved bounds for the Excluded Grid Theorem.

In~\cite{GMT-STOC}, a different strategy for obtaining a \PoS was suggested. Recall that, if a graph $G$ has treewidth $k$, then it contains a set $T$ of $\Omega(k)$ vertices, that we call terminals, which are well-linked in $G$. Partitioning the terminals into two equal-cardinality subsets $A_1$ and $B_1$, and letting $C_1=G$, we obtain a \PoS of width $\Omega(k)$ and length $1$. 
The strategy now is to perform a number of iterations, where in every iteration, the length of the current \PoS doubles, while its width decreases by some small constant factor $c$. Since we eventually need to construct a \PoS of length $g^2$, we will need to perform roughly $2\log g$ iterations, eventually obtaining a \PoS of length $g^2$ and width $\Omega(k/c^{2\log g})=\Omega(k/g^{2\log c})$. In order to execute a single iteration, a subroutine is employed, that, given a single cluster of the \PoS, splits this cluster into two. Equivalently, given a \PoS of length $1$ and width $w$, it produces a \PoS of length $2$ and width $w/c$. By iteratively applying this procedure to every cluster of the current \PoS, one obtains a new \PoS, whose length is double the length of the original \PoS, and the width decreases by factor $c$. Recall that the width of the final \PoS that we obtain is $\Omega(k/g^{2\log c})$, and we require that it is at least $\Omega(g^2)$, so $k\geq \Omega(g^{2\log c+2})$ must hold. Therefore, the factor $ c$ that we lose in the splitting of a single cluster is critical for the final bound on $f(g)$, and even if this factor is quite small (which seems very non-trivial to achieve), it seems unlikely that this approach would lead to tight bounds for the Excluded Grid Theorem. The best current bound on the loss parameter $c$ is estimated to be roughly $2^{15}$.
Finally, in~\cite{chuzhoy2016improved}, the ideas from~\cite{CC14} and~\cite{GMT-STOC} are carefully combined to obtain a tighter bound of $f(g)=\tilde O(g^{19})$. We note that both the results of~\cite{GMT-STOC} and~\cite{chuzhoy2016improved} critically require that the maximum vertex degree of the input graph is bounded by a small constant, which can be achieved using the results of~\cite{tw-sparsifiers} and~\cite{CC14}, as discussed above.

Our proof proceeds quite differently. Our starting point is a \PoS of length $\ell=\Theta(\log k)$ and width $w=k/\poly\log k$, where $k$ is the treewidth of the input graph $G$. The \PoS can be constructed, using, e.g., the results of~\cite{CC14}. We then transform it into a structure called a \emph{hairy} \PoS (see Figure~\ref{fig: hairy-PoS}) by further splitting every cluster $C_i$ of the original \PoS into two clusters, $C'_i$ and $S_i$. The clusters $C'_1,\ldots,C'_{\ell}$ are connected into a \PoS as before, albeit with a somewhat smaller width $w/c$ for some constant $c$, and for each $1\leq i\leq \ell$, there is a set $\qset_i$ of $w$ node-disjoint paths, connecting $C'_i$ to $S_i$, that are internally disjoint from both clusters. Let $X_i$ and $Y_i$ denote the sets of endpoints of the paths of $\qset_i$ that belong to $C'_i$ and $S_i$, respectively. We require that $Y_i$ is well-linked in $S_i$, and that $A_i\cup B_i\cup X_i$ is well-linked in $C'_i$. The construction of the hairy \PoS from the original \PoS employs a theorem from~\cite{chuzhoy2016improved}, that allows us to split the clusters of the \PoS appropriately.

The main new combinatorial object that we define is a \emph{crossbar}. Recall that we are interested in showing that the input graph $G$ contains the $(g\times g)$-grid as a minor. Intuitively, a crossbar inside cluster $C'_i$ of a hairy \PoS consists of a set $\pset^*_i$ of $g^2$ disjoint paths, connecting vertices of $A_i$ to vertices of $B_i$; and another set $\qset^*_i$ of $g^2$ disjoint paths, where each path of $\qset^*_i$ connects a distinct path of $\pset^*_i$ to a distinct vertex of $X_i$ (see Figure~\ref{fig: crossbar}). Moreover, we require that the paths of $\qset^*_i$ are internally disjoint from the paths in $\pset^*_i$. The main technical result that we prove is that, if the width $w'$ of the hairy \PoS is sufficiently large, then, for every cluster $C'_i$, either it contains a crossbar, or a minor of $C'_i$ contains a \PoS of length and width $\Omega(g^2)$. If the latter happens in any cluster $C'_i$, then we immediately obtain the $(g\times g)$-grid minor inside $C'_i$. Therefore, we can assume that each cluster $C'_i$ contains a crossbar. We then exploit these crossbars in order to show that the graph $G$ must contain an expander on $\Omega(g^2)$ vertices as a minor, such that the maximum vertex degree in the expander is bounded by $O(\log g)$. 
We can then employ known results to show that such an expander must contain the $(g\times g)$-grid as a minor.

\begin{figure}[h]
\centering
\scalebox{0.4}{\includegraphics{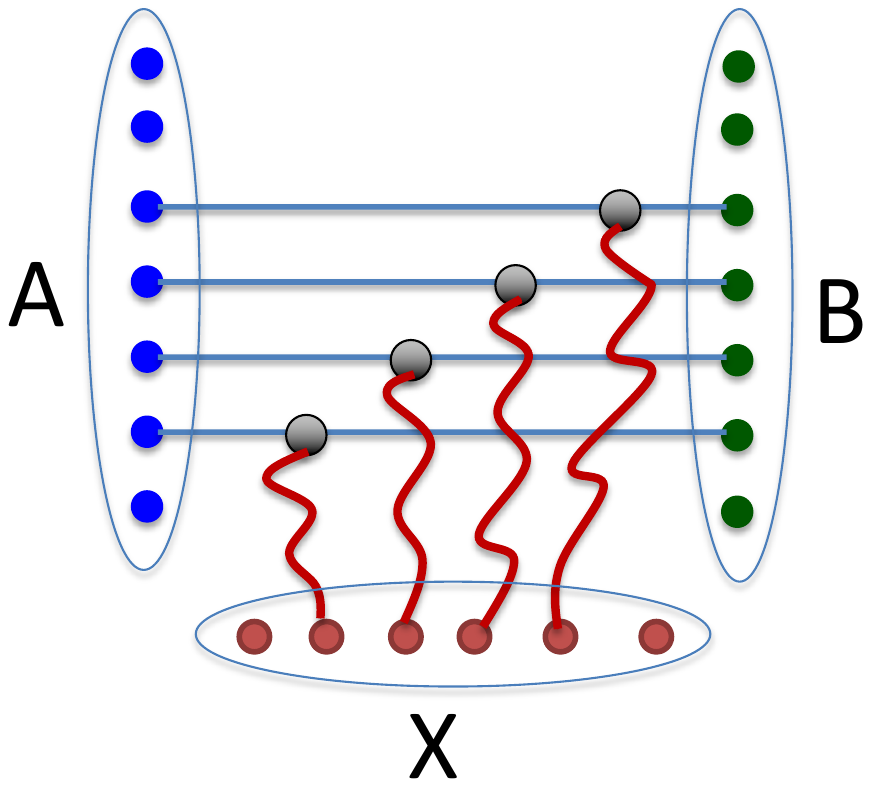}}
\caption{A Crossbar. The paths of $\pset^*$ are shown in blue and the paths of $\qset^*$ in red.\label{fig: crossbar}}
\end{figure}

\noindent{\bf Organization.} We start with preliminaries in Section~\ref{sec:prelim}.  In Section~\ref{sec: Building the Grid} we define a crossbar, state our main structural theorem regarding its existence, and provide the proof of Theorem~\ref{thm:main} using it. 
We provide the proof of the structural theorem in the following two sections: in Section~\ref{sec: Building the Crossbar} we provide a simpler proof of a slightly weaker version of the theorem, 
and in Section~\ref{sec: stronger structural theorem} we provide its full proof. 

\section{Preliminaries}
\label{sec:prelim}

All logarithms in this paper are to the base of $2$. All graphs are finite and they do not have loops. By default, graphs are not allowed to have parallel edges; graphs with parallel edges are explicitly called multi-graphs.

We say that a path $P$ is \emph{disjoint} from a set $U$ of vertices, if $U\cap V(P)=\emptyset$.
We say that it is \emph{internally disjoint} from $U$, if every vertex of $U\cap V(P)$ is an endpoint of $P$. 
Given a set $\pset$ of paths in $G$, we denote by $V(\pset)$ the set of all vertices participating in the paths in $\pset$. 
We say that two paths $P,P'$ are \emph{internally disjoint}, if, for every vertex $v\in V(P)\cap V(P')$, $v$  is an endpoint of both paths.
For two subsets $S,T\subseteq V(G)$ of vertices and a set $\pset$ of paths, we say that $\pset$ \emph{connects} $S$ to $T$ if every path in $\pset$ has one endpoint in $S$ and another in $T$ (or it consists of a single vertex lying in $S\cap T$). We say that a set $\pset$ of paths is \emph{node-disjoint}  iff every pair $P,P'\in \pset$ of distinct paths are disjoint, that is, $V(P)\cap V(P')=\emptyset$. Similarly, we say that a set $\pset$ of paths is \emph{edge-disjoint}  iff for every pair $P,P'\in \pset$ of distinct paths, $E(P)\cap E(P')=\emptyset$.
We sometimes refer to connected subgraphs of a given graph as \emph{clusters}.

\noindent{\bf Treewidth, Minors and Grids.}
The \emph{treewidth} of a graph $G=(V,E)$ is defined via tree-decompositions.  A tree-decomposition of $G$ consists of a tree $\tau$, and, for each node $v\in V(\tau)$, a subset $B_v\subseteq V$ of vertices of $G$ (called a \emph{bag}), such that: (i) for each edge $(v,v') \in E$, there is a node $u \in V(\tau)$ with $v,v' \in B_u$; and (ii) for each vertex $v \in V$, the set $\{u\in V(\tau)\mid v\in B_u\}$ of nodes of $\tau$ induces a non-empty connected subtree of $\tau$. The {\em width} of a tree-decomposition is $\max_{v \in V(\tau)}\set{|B_v|} - 1$, and the \emph{treewidth} of a graph $G$, denoted by $\tw(G)$, is the width of a minimum-width tree-decomposition of $G$.

We say that a graph $H$ is a \emph{minor} of a graph $G$, iff $H$ can be
obtained from $G$ by a sequence of vertex deletion, edge deletion, and edge contraction operations. Equivalently, a graph $H$ is a minor of $G$ iff there is a function $\phi$, mapping each vertex $v\in V(H)$ to a connected subgraph $\phi(v)\subseteq G$, and each edge $e=(u,v)\in E(H)$ to a path $\phi(e)$ in $G$ connecting a vertex of $\phi(u)$ to a vertex of $\phi(v)$, such that:
(i) for all $u,v\in V(H)$, if $u\neq v$, then $\phi(u)\cap \phi(v)=\emptyset$; and
	(ii) the paths in set $\set{\phi(e)\mid e\in E(H)}$ are pairwise internally disjoint, and they are internally disjoint from $\bigcup_{v\in V(H)}\phi(v)$.
A map $\phi$ satisfying these conditions is called a \emph{model}\footnote{Note that this is somewhat different from the standard definition of a model, where  for each edge $e\in E(H)$, $\phi(e)$ is required to be a single edge, but it is easy to see that the two definitions are equivalent.} of $H$ in $G$. We sometimes also say that $\phi$ is an \emph{embedding} of $H$ into $G$, and, for all $v\in V$ and $e\in E$, we specifically refer to $\phi(v)$ as the embedding of  vertex $v$ and to $\phi(e)$ as the embedding of  edge $e$.

The $(g\!\times\!g)$-grid is a graph whose vertex set is: $\set{v(i,j)\mid 1\leq i,j\leq g}$.
The edge set consists of two subsets: a set of \emph{horizontal edges} $E_1=\set{(v(i,j),v(i,j+1))\mid 1\leq i\leq g; 1\leq j<g}$; and a set of \emph{vertical edges} $E_2=\set{(v(i,j),v(i+1,j))\mid 1\leq i<g; 1\leq j\leq g}$. 
We say that a graph $G$ contains the $(g\times g)$-grid minor iff some minor $H$ of $G$ is isomorphic to the $(g \times g)$-grid.

\noindent{\bf Well-linkedness and Linkedness.}

\begin{definition}
Let $G$ be a graph and let $T$ be a subset of its vertices. We say that $T$ is  \emph{node-well-linked} in $G$, iff for every pair  $T',T''$ of disjoint subsets of $T$, there is a set $\pset$ of {\bf node-disjoint} paths in $G$ connecting vertices of $T'$ to vertices of $T''$, with $|\pset|=\min\set{|T'|,|T''|}$. We say that $T$ is \emph{edge-well-linked} in $G$,  iff for every pair  $T',T''$ of disjoint subsets of $T$, there is a set $\pset'$ of {\bf edge-disjoint} paths in $G$ connecting vertices of $T'$ to vertices of $T''$, with $|\pset'|=\min\set{|T'|,|T''|}$. (Note that in the latter definition we allow the paths of $\pset'$ to share their endpoints and inner vertices).

\end{definition}

Even though we do not use it directly, a useful fact to keep in mind is that, if $T$ is the largest-cardinality subset of vertices of a graph $G$, such that $T$ is node-well-linked, then the treewidth of $G$ is between $|T|/4-1$ and $|T|-1$ (see e.g.~\cite{Reed-chapter}).

\begin{definition}
Let $G$ be a graph, and let $A,B$ be two disjoint subsets of its vertices. We say that $(A,B)$ are \emph{node-linked} (or simply \emph{linked}) in $G$, iff for every pair  $A'\subseteq A, B'\subseteq B$ of vertex subsets, there is a set $\pset$ of node-disjoint paths in $G$ connecting vertices of $A'$ to vertices of $B'$, with $|\pset|=\min\set{|A'|,|B'|}$.
\end{definition}

\noindent{\bf A Path-of-Sets System.}
As in previous proofs of the Excluded Grid Theorem, we rely on the notion of \PoS, that we define next (see Figure~\ref{fig:PoS}). 

\begin{definition}
Given integers $\ell,w>0$,
a \PoS $\pos$ of length $\ell$ and width $w$ consists of the following three ingredients: (i) a sequence $\cset=(C_1,\ldots,C_{\ell})$ of mutually disjoint clusters; (ii) for each $1\leq i\leq \ell$, two disjoint subsets $A_i,B_i\subseteq V(C_i)$ of vertices of cardinality $w$ each, and (iii) for each $1\leq i<\ell$, a set $\pset_i$ of $w$ node-disjoint paths connecting $B_i$ to $A_{i+1}$, such that all paths in $\bigcup_{i=1}^{\ell-1}\pset_i$ are node-disjoint, and they are internally disjoint from $\bigcup_{i=1}^{\ell}V(C_i)$. In other words, a path $P\in \pset_i$ starts at a vertex of $B_i$, terminates at a vertex of $A_{i+1}$, and is otherwise disjoint from the clusters in $\cset$.

We say that $\pos$ is a \emph{weak} \PoS iff for each $1\leq i\leq \ell$, $A_i\cup B_i$ is edge-well-linked in $C_i$.
We say that $\pos$ is a \emph{strong} \PoS iff for each $1\leq i\leq \ell$, each of the sets $A_i,B_i$ is node-well-linked in $C_i$, and $(A_i,B_i)$ are linked in $C_i$. 

We sometimes call the vertices of $\bigcup_i(A_i\cup B_i)$ the \emph{nails} of the \PoS.
\end{definition}

Note that a \PoS $\pos$ of length $\ell$ and width $w$ is completely determined by $\cset, \{\pset_i\}_{i=1}^{\ell-1}$, $A_1$ and $B_{\ell}$, so we will denote $\pos=(\cset, \{\pset_i\}_{i=1}^{\ell-1}, A_1, B_{\ell})$. The following theorem was proved in~\cite{CC14}; a similar theorem with slightly weaker bounds was proved in~\cite{leaf2015tree}.

\begin{theorem}
\label{thm: PoS to GM}
There is a constant $c\geq 1$, such that for every integer $g\geq 2$, and for every graph $G$, if $G$ contains a strong \PoS of length $\ell=g^2$ and width  $w=g^2$, then it contains the $(g'\times g')$-grid as a minor, for $g'=\floor{ g/c}$.
\end{theorem}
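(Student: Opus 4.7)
My plan is to realize the $(g'\times g')$-grid minor with its rows living inside $g'$ carefully chosen clusters of the \PoS and its columns running as long ``spine paths'' threaded through the entire \PoS. The $g^2\times g^2$ dimensions give ample room: the length $g^2$ lets us space out the $g'$ row clusters with many ``buffer'' clusters in between, and the width $g^2$ lets us pick $g'$ column paths while leaving a large unused reserve of nails and vertices for the row constructions.

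Using the linkedness of $(A_i, B_i)$ inside each $C_i$ together with the connecting sets $\pset_i$, I would first stitch together $w := g^2$ pairwise node-disjoint spine paths $Q_1, \ldots, Q_{w}$ that run from $A_1$ through every cluster to $B_{g^2}$; since the relevant endpoint sets at every cluster boundary all have size $w$, any matching of endpoints works. I would then select $g'$ evenly spaced row clusters $C_{r_1}, \ldots, C_{r_{g'}}$ and designate a subset $Q_{s_1}, \ldots, Q_{s_{g'}}$ of the spine paths as the column paths of the grid.

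The heart of the argument is to build, inside each row cluster $C_{r_i}$, a ``row path'' $R_i$ that touches the restrictions of the column paths to $C_{r_i}$ in the prescribed order $Q_{s_1}, Q_{s_2}, \ldots, Q_{s_{g'}}$ and is internally disjoint from every other spine path passing through $C_{r_i}$. I would construct $R_i$ as a concatenation of $g'-1$ short ``hops,'' where the $j$-th hop joins a vertex on $Q_{s_j}$ to a vertex on $Q_{s_{j+1}}$ inside $C_{r_i}$. To produce these hops I would exploit the node-well-linkedness of $A_{r_i}$ and $B_{r_i}$ inside $C_{r_i}$: reserve a small collection of spare nails not used by any column path, and invoke well-linkedness to obtain pairwise node-disjoint paths inside $C_{r_i}$ whose endpoints land on the column paths at the desired places; these paths can then be truncated and concatenated into a single row path $R_i$. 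Because $g'$ is much smaller than $w$, there are both enough spare nails and enough unused vertices in $C_{r_i}$ to carry this out without colliding with the other spine paths.

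Finally, the grid model is assembled in the obvious way: $v(i,j)$ corresponds to the portion of $Q_{s_j}$ lying between the first and last touch of $R_i$ with $Q_{s_j}$ inside $C_{r_i}$, together with the connecting fragment of $R_i$; the edge $(v(i,j), v(i+1,j))$ is the stretch of $Q_{s_j}$ running from $C_{r_i}$ to $C_{r_{i+1}}$; and the edge $(v(i,j), v(i,j+1))$ is a subpath of $R_i$. The chief obstacle I anticipate is the row-path construction inside each cluster, where one must simultaneously choose which spine paths serve as columns and route $R_i$ visiting them in the correct order while staying node-disjoint from the unused spine paths. Both the node-well-linkedness of $A_i, B_i$ and the linkedness of $(A_i, B_i)$ are crucial here, and the constant factor $c$ in the theorem absorbs the slack needed for this rerouting as well as for the spacing of row clusters along the \PoS.
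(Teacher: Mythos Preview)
The paper does not prove this theorem; it is quoted from \cite{CC14} (with a weaker version in \cite{leaf2015tree}). So your proposal should be measured against those proofs rather than anything in the present paper.

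Your high-level architecture --- thread $w=g^2$ spine paths through the whole \PoS to serve as columns, and build each row inside a dedicated cluster --- matches the standard picture. The difficulty you correctly identify as the ``chief obstacle'' is also exactly the crux of the real proof, and your proposed resolution of it does not work. Node-well-linkedness of $A_{r_i}$ (or of $B_{r_i}$) only guarantees, for any two equal-size subsets of nails, a family of node-disjoint paths joining them. It gives you no control over which \emph{other} spine paths those paths cross. So your ``hop'' from column $s_j$ to column $s_{j+1}$ may pass through column $s_{j+5}$, or through a non-column spine path, or both. You cannot, from well-linkedness alone, force the row path to visit the designated columns in the designated order while being internally disjoint from everything else; having ``enough spare nails'' and ``enough unused vertices'' is not the issue.

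The proofs in \cite{CC14} and \cite{leaf2015tree} get around this by \emph{not} fixing the column set in advance. In each cluster one uses well-linkedness to produce a second large family of disjoint paths (for instance between two halves of $A_i$), and then analyzes the intersection pattern between this family and the spine paths. One shows that some path in the new family must meet many spine paths, which yields a row path hitting many (uncontrolled) columns in some (uncontrolled) order. The $g^2$ length is then spent, via an averaging/pigeonhole step over the clusters, to find $\Omega(g)$ clusters whose row paths hit a common set of $\Omega(g)$ spine paths in a common order. Your sketch fixes the columns first and tries to route to them, which is the step that cannot be made to work directly; the missing idea is precisely this ``find local crossings first, then extract a globally consistent subsystem'' maneuver.
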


\noindent{\bf Stitching a \PoS.}
Suppose we are given a \PoS $\pos=(\cset,\set{\pset_i}_{i=1}^{\ell-1},A_1,B_{\ell})$, with $\cset=(C_1,\ldots,C_{\ell})$. Assume that for each odd-indexed cluster $C_{2i-1}$, we select some subsets $A'_{2i-1}\subseteq A_{2i-1}$ ,$B'_{2i-1}\subseteq B_{2i-1}$ of vertices of cardinality $w'$ each, that have some special properties that we desire. We would like to construct a new \PoS, whose clusters are all odd-indexed clusters of $\cset$, and whose nails are $\bigcup_{i=1}^{\ceil{\ell/2}}(A'_{2i-1}\cup B'_{2i-1})$. The stitching procedure allows us to do so, by exploiting the even-indexed clusters of $\pos$ as connectors. The proof of the following claim is straightforward and is deferred to the Appendix.

\begin{claim}\label{claim: stitching}
Let $\pos=(\cset,\set{\pset_i}_{i=1}^{\ell-1},A_1,B_{\ell})$ be a \PoS of length $\ell$ and width $w$ for some $\ell,w\geq 1$. Suppose we are given, for all $1\leq i\leq \ceil{\ell/2}$, subsets  $A'_{2i-1}\subseteq A_{2i-1}$ ,$B'_{2i-1}\subseteq B_{2i-1}$ of vertices of cardinality $w'$ each. Then there is a \PoS  $\hat\pos= (\hat \cset,\set{\hat\pset_i}_{i=1}^{\ceil{\ell/2}-1},\hat A_1,\hat B_{\ceil{\ell/2}})$ of length $\ceil{\ell/2}$ and width $w'$, such that $\hat \cset=(C_1,C_3,\ldots,C_{2\ceil{\ell/2}-1})$; and for each $1\leq i\leq \ceil{\ell/2}$, $\hat A_i=A'_{2i-1}$ and $\hat B_i=B'_{2i-1}$.
\end{claim}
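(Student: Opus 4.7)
The plan is to build each new connector set $\hat\pset_i$, for $1\le i<\ceil{\ell/2}$, by routing through the discarded cluster $C_{2i}$, using the two original connector sets $\pset_{2i-1}$ and $\pset_{2i}$ as the left and right ``stubs''. The clusters of $\hat\pos$ are just the odd-indexed clusters of $\pos$ reused verbatim, and $\hat A_i=A'_{2i-1}$, $\hat B_i=B'_{2i-1}$ are handed to us by hypothesis, so the only real content is producing $\hat\pset_i$.

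Concretely, let $\pset^{L}_i\subseteq \pset_{2i-1}$ be the $w'$ paths of $\pset_{2i-1}$ whose endpoints in $B_{2i-1}$ lie in $B'_{2i-1}$; these paths terminate at a size-$w'$ set $A''_{2i}\subseteq A_{2i}$. Symmetrically let $\pset^{R}_i\subseteq \pset_{2i}$ be the $w'$ paths of $\pset_{2i}$ whose endpoints in $A_{2i+1}$ lie in $A'_{2i+1}$; their other endpoints form a size-$w'$ set $B''_{2i}\subseteq B_{2i}$. I would then find a set $\rset_i$ of $w'$ node-disjoint paths inside $C_{2i}$ connecting $A''_{2i}$ to $B''_{2i}$. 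Given such $\rset_i$, each path of $\rset_i$ from some $a\in A''_{2i}$ to some $b\in B''_{2i}$ can be glued to the unique path of $\pset^{L}_i$ ending at $a$ and the unique path of $\pset^{R}_i$ starting at $b$, producing a single path from $B'_{2i-1}$ to $A'_{2i+1}$. Collecting these $w'$ concatenations yields $\hat\pset_i$.

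The main (indeed only nontrivial) obstacle is the existence of $\rset_i$: this is a pure in-cluster routing statement that requires linkedness inside $C_{2i}$ and is where one invokes the underlying linkage structure of $\pos$ --- when $\pos$ is a strong \PoS, $(A_{2i},B_{2i})$ are linked in $C_{2i}$, so $w'$ node-disjoint paths from any $A''_{2i}\subseteq A_{2i}$ to any $B''_{2i}\subseteq B_{2i}$ of equal size exist immediately; when only weak well-linkedness is available one applies the analogous edge statement combined with standard node-splitting to get node-disjoint paths. The argument is applied in the paper to \PoSs with at least this much structure, so $\rset_i$ is always available.

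The remaining verifications are routine. Inside one $\hat\pset_i$, the paths are node-disjoint because the three pieces (prefixes in $\pset_{2i-1}$, middles in $\rset_i\subseteq C_{2i}$, suffixes in $\pset_{2i}$) are each separately node-disjoint, while the three regions meet only at the junction vertices $A''_{2i}$ and $B''_{2i}$, each of which is used by exactly one prefix, one middle, and one suffix. For distinct $i\ne i'$, $\hat\pset_i$ is supported in $V(\pset_{2i-1})\cup V(C_{2i})\cup V(\pset_{2i})$ and $\hat\pset_{i'}$ in the analogous region for $i'$; these supports are disjoint by the original \PoS properties. Finally, a path in $\hat\pset_i$ meets an odd-indexed cluster only at its two endpoints --- one in $B'_{2i-1}\subseteq V(C_{2i-1})$ and one in $A'_{2i+1}\subseteq V(C_{2i+1})$ --- so $\hat\pset_i$ is internally disjoint from $\bigcup_{j}V(C_{2j-1})=V(\hat\cset)$, as required for $\hat\pos$ to be a \PoS of length $\ceil{\ell/2}$ and width $w'$.
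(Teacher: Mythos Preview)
Your proof is correct and follows essentially the same route as the paper's: select the subpaths of $\pset_{2i-1}$ and $\pset_{2i}$ incident to the chosen nails, route through $C_{2i}$ using the linkedness of $(A_{2i},B_{2i})$, and concatenate. You even make explicit the point the paper's proof leaves implicit---namely that the existence of the in-cluster routing $\rset_i$ relies on the strong \PoS assumption (node-linkedness of $(A_{2i},B_{2i})$), which is indeed the only setting in which the claim is invoked.
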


Notice that, if $\pos$ is a strong \PoS in the statement of Claim~\ref{claim: stitching}, then so is $\hat \pos$.

\noindent{\bf Hairy \PoS.} Our starting point is another structure, closely related to the \PoS, that we call a \emph{hairy} \PoS (see Figure~\ref{fig: hairy-PoS}). Intuitively, the hairy \PoS is defined similarly to a strong \PoS, except that now, for each $1\leq i\leq \ell$, we have an additional cluster $S_i$ that connects to $C_i$ with a collection of $w$ node-disjoint paths. We require that the endpoints of these paths are suitably well-linked in $C_i$ and $S_i$, respectively.

\begin{definition}
A hairy \PoS $\hpos$ of length $\ell$ and width $w$ consists of the following four ingredients: 

\begin{itemize}
\item a strong \PoS $\pos=(\cset,\set{\pset_i}_{i=1}^{\ell-1},A_1,B_{\ell})$ of length $\ell$ and width $w$; 
\item  a sequence $\sset=(S_1,\ldots,S_{\ell})$ of disjoint clusters, such that each cluster $S_i$ is disjoint from $\bigcup_{j=1}^{\ell}V(C_j)$ and from $\bigcup_{j=1}^{\ell-1}V(\pset_j)$; 

\item for each $1\leq i\leq \ell$, a set $Y_i\subseteq V(S_i)$ of $w$ vertices that are node-well-linked in $S_i$, and a set $X_i\subseteq V(C_i)$ of $w$ 
vertices, such that $X_i\cap (A_i\cup B_i)=\emptyset$, and $(A_i,X_i)$ are node-linked in $C_i$; and

\item for each $1\leq i\leq \ell$, a collection $\qset_i$ of $w$ node-disjoint paths connecting $X_i$ to $Y_i$, such that all paths in $\bigcup_{j=1}^{\ell}\qset_j$ are disjoint from each other and from the paths in $\bigcup_{j=1}^{\ell-1}\pset_j$, and they are internally disjoint from $\bigcup_{j=1}^{\ell}(S_j\cup C_j)$.
\end{itemize}

\end{definition}

Note that a hairy \PoS $\hpos$ of length $\ell$ and width $w$ is completely determined by $\cset, \sset, \{\pset_i\}_{i=1}^{\ell-1}$, $\{\qset_i\}_{i=1}^{\ell}$, $A_1$ and $B_{\ell}$, so we will denote $\hpos=(\cset, \sset, \{\pset_i\}_{i=1}^{\ell-1}, \{\qset_i\}_{i=1}^{\ell}, A_1, B_{\ell})$. 

We note that Chekuri and Chuzhoy~\cite{CC14} showed that for all integers $\ell,w,k>1$ with $k/\poly\log k=\Omega(w\ell^{48})$, every graph $G$ of treewidth at least $k$ contains a strong \PoS of length $\ell$ and width $w$. 
We prove an analogue of this result for the hairy \PoS.
The proof mostly follows from previous work and is delayed to the Appendix. We will exploit this  theorem only for the setting where $\ell=\Theta(\log k)$ and $w=k/\poly\log k$.

\begin{theorem}
\label{thm: building hairy PoS}
There are constants $c,c'>0$, such that for all integers $\ell,w,k>1$ with $k/\log^{c'}k>cw\ell^{48}$, every graph $G$ of treewidth at least $k$ contains a subgraph $G'$ of maximum vertex degree $3$, such that $G'$ contains a hairy \PoS of length $\ell$ and width $w$.
\end{theorem}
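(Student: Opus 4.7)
The plan is to derive the hairy \PoS from a standard strong \PoS via two additional ingredients: (i) the treewidth-sparsifier theorem of \cite{tw-sparsifiers}, which from any graph of treewidth $k$ extracts a subgraph $G'$ of maximum degree $3$ and treewidth $\Omega(k/\log^{O(1)} k)$, and (ii) a cluster-splitting theorem from \cite{chuzhoy2016improved}, the same tool already invoked informally in the technical overview. First I pass to this sparsifier $G'$; everything afterwards lives inside $G'$, which automatically delivers the max-degree-$3$ conclusion. Inside $G'$, I then apply the \cite{CC14} \PoS theorem (which requires $\tw(G')/\mathrm{polylog}\,\tw(G') = \Omega(w' \ell^{48})$) to construct a strong \PoS $\pos = (\cset, \{\pset_i\}_{i=1}^{\ell-1}, A_1, B_\ell)$ of length $\ell$ and inflated width $w' = \alpha w$, where $\alpha$ is a constant to be fixed by the splitting step below. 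The two $\mathrm{polylog}\,k$ factors, coming from the sparsifier and from \cite{CC14}, compose into the single $\log^{c'} k$ loss of the hypothesis, so for suitably large constants $c,c'$ the assumption $k/\log^{c'} k > c w \ell^{48}$ is exactly what the composition needs.

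Next, for each cluster $C_i$ of $\pos$ independently, I apply the splitting theorem of \cite{chuzhoy2016improved} to carve $C_i$ into two vertex-disjoint sub-clusters $C'_i$ and $S_i$ joined by a set $\qset_i$ of $w$ node-disjoint paths from $X_i \subseteq V(C'_i)\setminus (A_i \cup B_i)$ to $Y_i \subseteq V(S_i)$, in such a way that: $A_i$ and $B_i$ remain node-well-linked and linked to each other inside $C'_i$ (after trimming their sizes from $w'$ back to $w$); the pair $(A_i, X_i)$ is node-linked in $C'_i$; and $Y_i$ is node-well-linked in $S_i$. Because the inter-cluster paths $\pset_i$ of the original strong \PoS are internally disjoint from all $V(C_j)$, they are in particular internally disjoint from the sub-clusters $S_j$ and from the hair linkages $\qset_j$; after trimming to width $w$ (selecting which endpoints to retain via the linkedness of $(A_i, B_i)$ in the original $C_i$, in the spirit of Claim~\ref{claim: stitching}), they serve as the new inter-cluster paths. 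Assembling $\cset' = (C'_1, \ldots, C'_\ell)$, $\sset = (S_1, \ldots, S_\ell)$, the trimmed inter-cluster paths, and the hair paths $\qset_1, \ldots, \qset_\ell$ yields a hairy \PoS of length $\ell$ and width $w$ inside the max-degree-$3$ subgraph $G'$, as required.

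The main obstacle is the splitting step itself: it must simultaneously produce four properties in a single carve of $C_i$ into $C'_i$ and $S_i$ — node-well-linkedness of $A_i$ and $B_i$ in $C'_i$ with their mutual linkage preserved, node-linkedness of $(A_i, X_i)$ in $C'_i$, node-well-linkedness of $Y_i$ in $S_i$, and the full width-$w$ linkage $\qset_i$ between $X_i$ and $Y_i$ — while losing only a constant factor between the original width $w'$ and the output width $w$. Achieving all four at once is precisely what the splitting machinery of \cite{chuzhoy2016improved} is engineered to do, and is why the construction must start with an inflated width $w' = \alpha w$ in the first place. A secondary, technically easier issue is synchronizing the trimmed endpoints of $\pset_i$ with the sub-linkage chosen inside each split cluster so that all disjointness conditions of the hairy \PoS (paths of $\bigcup_i \pset_i$ and $\bigcup_i \qset_i$ all mutually disjoint and internally disjoint from $\bigcup_j (C'_j \cup S_j)$) are maintained; this is routine once the linkedness of $(A_i, B_i)$ inside $C_i$ is in hand, and can be arranged by a stitching argument of the same flavor as the one behind Claim~\ref{claim: stitching}.
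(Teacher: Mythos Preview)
Your proposal identifies the correct three ingredients --- the degree-reduction sparsifier of \cite{tw-sparsifiers}, the strong \PoS theorem of \cite{CC14}, and the cluster-splitting theorem of \cite{chuzhoy2016improved} --- and combines them in the right order. However, the step you label ``secondary, technically easier'' is precisely where the argument breaks.

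When you apply the splitting theorem to a cluster $C_i$, it hands you \emph{specific} subsets $A'_i\subseteq A_i\cap C'_i$ and $B'_i\subseteq B_i\cap C'_i$; you do not get to choose them. The original inter-cluster paths $\pset_i$ give a fixed bijection between $B_i$ and $A_{i+1}$. The paths of $\pset_i$ that terminate in $A'_{i+1}$ originate at some set $\tilde B_i\subseteq B_i$, but there is no reason for $\tilde B_i$ to coincide with, or even to lie inside, $B'_i$ --- in fact $\tilde B_i$ need not be contained in $C'_i$ at all, since the splitting theorem only promises $B'_i\subseteq B_i\cap C'_i$, not $B_i\subseteq C'_i$. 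You therefore cannot simply ``trim'' $\pset_i$; you must reroute. Rerouting inside the \emph{original} $C_i$ (using the linkedness of $(A_i,B_i)$ there) is unavailable, because the vertices of $C_i$ are now partitioned among $C'_i$, $S_i$, and $\qset_i$, and any routing through $C_i\setminus C'_i$ would violate the disjointness conditions of the hairy \PoS.

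The paper resolves this by starting with a strong \PoS of length $2\ell$ rather than $\ell$, splitting only the odd-indexed clusters, and then invoking Claim~\ref{claim: stitching} verbatim: the even-indexed clusters $C_{2i}$ remain intact and their linkedness supplies the missing routing from $B'_{2i-1}$ to $A'_{2i+1}$. The length halves back to $\ell$. So the stitching is not ``in the spirit of'' Claim~\ref{claim: stitching} --- it \emph{is} Claim~\ref{claim: stitching}, and it costs a factor of $2$ in length that must be paid for up front when invoking the \PoS theorem.
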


\section{Proof of the Excluded Grid Theorem}
\label{sec: Building the Grid}

In this section, we provide a proof of Theorem~\ref{thm:main}, with some details delayed to Sections~\ref{sec: Building the Crossbar} and~\ref{sec: stronger structural theorem}. 
We start by introducing the main new combinatorial object that we use, called a \emph{crossbar}.

\begin{definition}
Let $H$ be a graph, let $A,B,X$ be three disjoint subsets of its vertices, and let $\rho>0$ be an integer. An $(A,B,X)$-crossbar of width $\rho$ 
consists of a collection $\pset^*$ of $\rho$ paths, each of which connects a vertex of $A$ to a vertex of $B$, and, for each path $P\in \pset^*$, a path $Q_P$, connecting a vertex of $P$ to a vertex of $X$, such that:

\begin{itemize}
\vspace{-3mm}
\item The paths in $\pset^*$ are completely disjoint from each other;
\vspace{-3mm}
\item The paths in $\qset^*=\set{Q_P\mid P\in \pset^*}$ are completely disjoint from each other; and
\vspace{-3mm}
\item For each pair $P\in \pset^*$ and $Q\in \qset^*$ of paths, if $Q\neq Q_P$, then $P$ and $Q$ are disjoint; otherwise $P\cap Q$ contains a single vertex, which is an endpoint of $Q_P$ (see Figure~\ref{fig: crossbar}).\end{itemize}\end{definition}
\vspace{-3mm}

The following theorem is the main technical result of this paper.

\begin{theorem}\label{thm: main for building crossbar}
Let $H$ be a graph and let $g\geq 2$ be an integer, such that $g$ is an integral power of $2$. Let $A,B,X$ be three disjoint sets of vertices of $H$, each of cardinality $\kappa\geq 2^{22}g^9\log g$, such that every vertex in $X$ has degree $1$ in $H$. Assume further that there is a set $\tpset$ of $\kappa$ node-disjoint paths connecting vertices of $A$ to vertices of $B$ in $H$, and a set $\tqset$ of $\kappa$ node-disjoint paths connecting vertices of $A$ to vertices of $X$ in $H$ (but the paths $P\in \tpset$ and $Q\in \tqset$ are not necessarily disjoint). Then, either $H$ contains an $(A,B,X)$-crossbar of width $g^2$, or there is a minor $H'$ of $H$, that contains a strong \PoS of length $\Omega(g^2)$ and width $\Omega(g^2)$.
\end{theorem}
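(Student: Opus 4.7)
The plan is a dichotomy based on the interaction between $\tilde{\pset}$ and $\tilde{\qset}$. First, I would restrict attention to the sub-graph $H_0 = H[V(\tilde{\pset}) \cup V(\tilde{\qset}) \cup X]$, since the target structures only need to use vertices already covered by the given paths. For each $Q \in \tilde{\qset}$, traversing $Q$ from $A$ to $X$, let $v_Q$ be the last vertex of $Q$ that lies on some path of $\tilde{\pset}$; since $\tilde{\pset}$ is node-disjoint, there is a unique $P(Q) \in \tilde{\pset}$ containing $v_Q$. Write $Q^{\mathrm{tail}}$ for the sub-path of $Q$ from $v_Q$ to $X$. By construction $Q^{\mathrm{tail}}$ meets $\tilde{\pset}$ only at its anchor $v_Q$, and since $\tilde{\qset}$ is node-disjoint the tails $\{Q^{\mathrm{tail}}\}_{Q\in\tilde{\qset}}$ are pairwise disjoint. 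This already has crossbar flavor: the tails will play the role of the $\qset^*$-paths, and their anchors are the single crossing points required by the definition.

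If the image of the map $P(\cdot):\tilde{\qset}\to\tilde{\pset}$ has cardinality at least $g^2$, I would build the crossbar directly: pick $g^2$ distinct paths $P_1,\ldots,P_{g^2}$ from the image, choose one preimage $Q_i$ for each $P_i$, and take $\pset^*=\{P_1,\ldots,P_{g^2}\}$ together with $\qset^*=\{Q_i^{\mathrm{tail}}\}$. The crossbar disjointness properties follow immediately from the node-disjointness of $\tilde{\pset}$ and of the tails, combined with the fact that each $Q_i^{\mathrm{tail}}$ meets $\tilde{\pset}$ only at $v_{Q_i}\in P_i$. Otherwise, by pigeonhole some $P^*\in\tilde{\pset}$ receives $\Omega(\kappa/g^2)$ tails, producing a dense ``broom'' along $P^*$. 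In this hard case I would partition $P^*$ into $\Theta(g^2)$ consecutive arcs, each carrying $\Omega(\kappa/g^4)$ anchor vertices, and form candidate clusters $C_i$ by contracting each arc together with all tails attached to it in a minor of $H$. The remaining $\kappa-1$ paths of $\tilde{\pset}\setminus\{P^*\}$ provide a huge reservoir of disjoint $A$-$B$ paths, and sub-paths of them passing between consecutive arcs of $P^*$ would supply the inter-cluster connectors needed by a \PoS of length $\Omega(g^2)$.

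The main obstacle is certifying the strong \PoS conditions in the hard case: each cluster's nail sets $A_i, B_i$ must individually be node-well-linked, and $(A_i,B_i)$ must be linked in $C_i$. A broom on its own is essentially tree-like, and the $X$-endpoints have degree one, so the required well-linkedness cannot come from tails alone; it must be extracted from how the $A$-$B$ paths of $\tilde{\pset}\setminus\{P^*\}$ weave through consecutive arcs of $P^*$, combined with standard well-linked decomposition techniques. I expect this to require either a recursive application of the dichotomy inside the hard case, or an auxiliary routing/well-linkedness step in the spirit of Theorem~\ref{thm: building hairy PoS}. The staged presentation of Sections~\ref{sec: Building the Crossbar} and~\ref{sec: stronger structural theorem} suggests that a simpler weakened version is proved first and then boosted to the claimed $\kappa=\Theta(g^9\log g)$ bound --- a budget that naturally absorbs two pigeonhole factors of $g^2$, the final $g^2\times g^2$ \PoS target dimensions, and polylogarithmic overhead from well-linked routing.
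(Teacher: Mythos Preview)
Your ``easy case'' does work: if at least $g^2$ distinct paths of $\tilde{\pset}$ receive a tail, then selecting one tail per such path gives an $(A,B,X)$-crossbar directly, for exactly the disjointness reasons you state.

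The hard case, however, does not work, and the gap is structural rather than technical. You write that ``sub-paths of [the paths of $\tilde{\pset}\setminus\{P^*\}$] passing between consecutive arcs of $P^*$ would supply the inter-cluster connectors.'' But the paths of $\tilde{\pset}$ are node-disjoint from one another, so no path of $\tilde{\pset}\setminus\{P^*\}$ touches $P^*$ at all---there is nothing to pass between. The only objects in your picture that touch $P^*$ are the tails (which terminate at degree-$1$ vertices of $X$) and the non-tail prefixes of the $Q$-paths (which you do not use). Your candidate clusters are therefore brooms hung off a single path: they have no internal width, and the only connectors available between consecutive clusters are single edges along $P^*$, giving a \PoS of width $1$, not $\Omega(g^2)$. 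Your own remark that ``well-linkedness cannot come from tails alone'' is exactly right, but you have not supplied an alternative source; the suggestion to recurse or to invoke Theorem~\ref{thm: building hairy PoS} is circular, since that theorem is what this result feeds into. Try your construction on the paper's own counterexample (a $(\kappa+2)\times(\kappa+2)$ grid with $X$ attached to the first column): with the natural choice of $\tilde\pset$ as columns and $\tilde\qset$ as staircase paths, every tail anchors to column~$1$, and your broom clusters are stars with no width.

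The paper's approach is different in kind. Rather than a single dichotomy, it iterates $D=\Theta(g^4)$ times: at each step it contracts the surviving paths of $\tilde\pset$, asks whether $g^2$ disjoint paths reach $X$ in the contracted graph (if so, a crossbar exists), and otherwise extracts a separator of size $<g^2$ that defines a layer $\rset_i\subseteq\tilde\pset$. After $D$ rounds this produces a \emph{pseudo-grid}: $D$ layers of at most $g^2$ paths each, such that almost every remaining $\qset$-path hits every layer. It is this depth---many layers, not one broom---that substitutes for the grid's rows and supplies the width. The pseudo-grid is then \emph{sliced} along the $\rset$-paths (using a perfect-unique-linkage argument) into $\Theta(g^4\log g)$ pieces, each piece is cleaned up and run through a well-linked decomposition to extract clusters, and the clusters are chained into a weak \PoS via a combinatorial lemma about overlapping index sets. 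None of these steps is approximated by the single-path broom.
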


We defer the proof of Theorem~\ref{thm: main for building crossbar} to Sections~\ref{sec: Building the Crossbar} and~\ref{sec: stronger structural theorem}. In Section~\ref{sec: Building the Crossbar} we provide a simpler proof of a slightly weaker version of Theorem~\ref{thm: main for building crossbar}, where we require that $\kappa=\Omega(g^{10}\log g)$, leading to a slightly weaker bound of $f(g)=O(g^{10}\poly\log g)$ for Theorem~\ref{thm:main}. In Section~\ref{sec: stronger structural theorem}, we provide the full proof of Theorem~\ref{thm: main for building crossbar}.

We use the following theorem to complete the proof of Theorem~\ref{thm:main}.

\begin{theorem}\label{thm: grid minor from hairy PoS}
There is a constant $\tilde c$, such that the following holds. Let $G$ be any graph with maximum vertex degree at most $3$, such that $G$ contains a hairy \PoS $\hpos=(\cset, \sset, \{\pset_i\}_{i=1}^{\ell-1}, \set{\qset}_{i=1}^{\ell},A_1,B_{\ell})$ of length $\ell=\tilde c\log g$ and width $\tilde w\geq g^2$, for some integer $g\ge 2$ that is an integral power of $2$. Assume further that for every odd integer $1\leq i\leq \ell$, there is an $(A_i,B_i,Y_i)$-crossbar in graph $C_i\cup \qset_i$ of width $g^2$. Then $G$ contains the $(g'\times g')$-grid as a minor, for $g'=\Omega(g/\log^{1.5}g)$.
\end{theorem}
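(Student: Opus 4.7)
My plan is to use the hairy \PoS together with the crossbars to implement the Khandekar--Rao--Vazirani cut-matching game as a minor-construction scheme: each pair consisting of a main cluster $C_i$ and its hair cluster $S_i$ realizes one round of the game. This produces, as a minor of $G$, an expander on $g^2$ vertices with maximum degree $O(\log g)$, from which a grid minor of the claimed size follows by the standard expander-to-grid reduction.

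\emph{Step 1 (Stitching to the odd clusters).} First I would apply Claim~\ref{claim: stitching} to the strong \PoS $\pos$ underlying $\hpos$, choosing, for every odd index $i$, the subsets $A'_i\sse A_i$ and $B'_i\sse B_i$ to be the endpoints of the $g^2$ crossbar paths $\pset^*_i$. The resulting object $\hat \pos$ is a strong \PoS of length $\ell'=\ceil{\ell/2}=\Theta(\log g)$ and width exactly $g^2$, supported on the clusters $C_1,C_3,\ldots$; the hair clusters $S_i$, the hair paths $\qset_i$, and the crossbars inside the odd clusters are all untouched. Re-indexing, I treat the remaining clusters as $C_1,\ldots,C_{\ell'}$.

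\emph{Step 2 (Super-strands).} Next, for each $j\in[g^2]$ I concatenate the $j$-th crossbar path of $\pset^*_i$ in every cluster $C_i$ with the corresponding inter-cluster path of $\hat\pos$, obtaining $g^2$ pairwise vertex-disjoint super-strands $L_1,\ldots,L_{g^2}$. Each $L_j$ carries, in every cluster $C_i$, a distinguished hair path (a segment of $\qset^*_i$ followed by a segment of $\qset_i$) ending at a distinct vertex $y^{(i)}_j\in Y_i$ of the hair cluster $S_i$.

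\emph{Step 3 (Simulating the cut-matching game).} Playing both sides of the cut-matching game on the vertex set $V=\{v_1,\ldots,v_{g^2}\}$, I proceed in $\ell'$ rounds. In round $i$, given the previous matchings $M_1,\ldots,M_{i-1}$, the KRV cut player outputs a bipartition $(U_i,\overline U_i)$ with $|U_i|\le g^2/2$. Setting $Y'_i=\{y^{(i)}_j:v_j\in U_i\}$ and picking any $Y''_i\sse\{y^{(i)}_j:v_j\in\overline U_i\}$ with $|Y''_i|=|U_i|$, the node-well-linkedness of $Y_i$ in $S_i$ yields a set $\rset_i$ of $|U_i|$ node-disjoint paths in $S_i$ from $Y'_i$ to $Y''_i$; let $M_i$ be the matching on $V$ induced by the endpoint pairings of $\rset_i$.

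\emph{Step 4 (Extracting the expander minor).} Define the minor model $\phi$ of $\mathcal X:=\bigcup_{i=1}^{\ell'}M_i$ by letting $\phi(v_j)$ consist of $L_j$ together with all its attached hair paths up to the vertices $y^{(i)}_j$, and by letting $\phi(e)$, for $e=(v_j,v_k)\in M_i$, be the $S_i$-path of $\rset_i$ joining $y^{(i)}_j$ to $y^{(i)}_k$. The pairwise disjointness of the super-strands, the definitional disjointness of the $S_i$'s from all clusters and path families in the hairy \PoS, and the node-disjointness of $\rset_i$ inside each $S_i$ together show that $\phi$ is a valid model of $\mathcal X$ in $G$. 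Each vertex receives one new edge per round, so $\mathcal X$ has maximum degree $\ell'=O(\log g)$, and the standard KRV analysis (or a refined variant converging in $O(\log n)$ rounds) guarantees that after $\ell'=O(\log g)$ rounds $\mathcal X$ has constant edge expansion.

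\emph{Step 5 (Grid from expander).} Finally, I would invoke the expander-to-grid reduction used in~\cite{CC14,chuzhoy2016improved}: an $n$-vertex, constant-expansion expander with maximum degree $d$ contains the $(g'\times g')$-grid as a minor with $g'=\Omega\bigl(\sqrt{n}/(d\sqrt{\log n})\bigr)$. Substituting $n=g^2$ and $d=O(\log g)$ yields $g'=\Omega(g/\log^{1.5}g)$, proving the theorem.

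\emph{Main obstacle.} The crux of the argument is the cut-matching step: one must ensure that $\ell'=O(\log g)$ rounds suffice for the game to certify constant expansion, which rules out the na\"\i ve $O(\log^2 n)$-round KRV bound and instead requires one of the refined variants. A related subtlety is that our routing delivers node-disjoint (not merely edge-disjoint) matchings, which is needed to guarantee that $\phi$ is a valid minor model in the degree-$3$ graph $G$; this is exactly why the definition of a hairy \PoS demands node-well-linkedness of $Y_i$ rather than the weaker edge-well-linkedness. The constant $\tilde c$ in the theorem is determined by the number of rounds required by the chosen cut-matching game together with the factor $2$ lost in the stitching step.
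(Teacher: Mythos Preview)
Your proposal is correct and follows essentially the same approach as the paper: stitch to the odd clusters, form $g^2$ super-strands from the crossbar paths, run the $O(\log N)$-round cut-matching game of Khandekar et al.\ (using node-well-linkedness of $Y_i$ in $S_i$ to route each matching), extract a $\tfrac12$-expander on $g^2$ vertices with maximum degree $O(\log g)$ as a minor, and finish with an expander-to-grid reduction. The only noticeable difference is in the last step: the paper invokes the Krivelevich--Nenadov separator theorem (rather than a result from~\cite{CC14,chuzhoy2016improved}) by showing that every separator in the expander has size $\Omega(N/d)$ and concluding that any graph with $O(g^2/\log^3 g)$ edges is a minor, which yields the same $g'=\Omega(g/\log^{1.5}g)$ bound you state.
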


We first complete the proof of Theorem~\ref{thm:main} using Theorems~\ref{thm: grid minor from hairy PoS} and \ref{thm: main for building crossbar}. Let $G$ be any graph, and let $k$ be its treewidth. Let $g\geq 2$ be an integer, such that $g$ is an integral power of $2$, and such that for some large enough constants $c_1',c_2'$, $k/(\log k)^{c'_2}>c'_1g^9$. We show that $G$ contains a grid minor of size $(\Omega(g/\poly\log g)\times (\Omega(g/\poly\log g))$. 

Let $\ell=\tilde c\log g=O(\log k)$, where $\tilde c$ is the constant from Theorem~\ref{thm: grid minor from hairy PoS}, and let $w=2^{22}g^9\log g=O(g^9\log k)$. By setting the constants $c_1'$ and $c_2'$ in the bound on $k$ appropriately, we can ensure that the conditions of Theorem~\ref{thm: building hairy PoS}, hold for $\ell,w$ and $k$. From Theorem~\ref{thm: building hairy PoS}, there is a subgraph $G'$ of $G$, of maximum vertex degree $3$, such that $G'$ contains a hairy \PoS $\hpos=(\cset, \sset, \{\pset_i\}_{i=1}^{\ell-1}, \set{\qset_i}_{i=1}^{\ell} A_1,B_{\ell})$ of  length $\ell$ and width $w$.

Let $1\leq i\leq \ell$ be an odd integer. Consider the graph $C_i\cup \qset_i$ of the hairy \PoS. Since every pair of the vertex subsets $A_i,B_i,X_i$ are linked in $C_i$, there is a set $\tpset_i$ of $w$ node-disjoint paths connecting vertices of $A_i$ to vertices of $B_i$, and a set $\tqset'_i$ of $w$ node-disjoint paths connecting vertices of $A_i$ to vertices of $X_i$ in $C_i$. By concatenating the paths in $\tqset'_i$ with the paths in $\qset_i$, we obtain a set $\tqset_i$ of $w$ node-disjoint paths connecting vertices of $A_i$ to vertices of $Y_i$. It is also immediate to verify that every vertex in $Y_i$ has degree $1$ in $C_i\cup \qset_i$.  We can therefore apply Theorem~\ref{thm: main for building crossbar} to the graph $C_i\cup \qset_i$, and the sets $A_i,B_i,Y_i$ of its vertices. If, for any odd integer $1\leq i\leq \ell$, the outcome of Theorem~\ref{thm: main for building crossbar} is a strong \PoS of length $\Omega(g^2)$ and width $\Omega(g^2)$ in some minor of $C_i\cup \qset_i$, then from Theorem~\ref{thm: PoS to GM}, graph $G$ contains a grid minor of size $(\Omega(g)\times \Omega(g))$. Therefore, we can assume from now on that for every odd  integer $1\leq i\leq \ell$, the outcome of Theorem~\ref{thm: main for building crossbar} is an  $(A_i,B_i,Y_i)$-crossbar in $C_i\cup \qset_i$ of width $g^2$. But then from Theorem~\ref{thm: grid minor from hairy PoS}, $G$ contains the $(g'\times g')$-grid as a minor, for $g'=\Omega(g/\log^{1.5}g)$. We conclude that there are constants $c_1',c_2'$, such that for all integers $k,g$ with $k/\log^{c_2'}k\geq c_1'g^{9}$, if a graph $G$ has treewidth $k$, then it contains a grid minor of size $(g'\times g')$, where $g'=\Omega(g/\log^5 g)$. (If $g$ is not an integral power of $2$, then we round it up to the closest integral power of $2$ and absorb this additional factor of $2$ in the constants $c'_1$ and $c_2'$). It follows that there are constants $c''_1,c''_2$, such that for all integers $k,g'$ with $k/\log^{c''_2}(k)\geq c''_1(g')^{9}$, if a graph $G$ has treewidth $k$, then it contains a grid minor of size $(g'\times g')$.

One last issue is that we need to replace the $\poly\log k$ in the above bound on $k$ by $\poly\log g$, as in the statement of Theorem~\ref{thm:main}. Assume that we are given a graph $G$ of treewidth $k$, and an integer $g\geq 2$, such that $k\geq c_1g^{9}\log^{c_2}g$ holds, for large enough constants $c_1,c_2$. Let $k'(g)\leq k$ be the smallest integer for which the inequality $k'(g)\geq c_1g^{9}\log^{c_2}g$ holds. Clearly, for some large enough constant $c$ that is independent of $g$, $k'(g)\leq cg^{10}$, and so $\log k'(g)=\Theta(\log g)$. The treewidth of $G$ is at least $k'(g)$, and, by choosing the constants $c_1$ and $c_2$ appropriately, we can guarantee that $k'(g)/\log^{c''_2}(k'(g))\geq c''_1 g^{9}$. From the above arguments, graph $G$ contains the $(g\times g)$-grid as a minor.
It now remains to prove Theorem~\ref{thm: grid minor from hairy PoS}.

\begin{proofof}{Theorem~\ref{thm: grid minor from hairy PoS}}

We assume that $\ell$ is an even integer. For every odd integer $1\leq i\leq \ell$, let $(\tilde \pset^*_i,\tilde \qset^*_i)$ be the $(A_i,B_i,Y_i)$-crossbar of width $g^2$  in $C_i\cup \qset_i$, and let $\tilde A_i\subseteq A_i$, $\tilde B_i\subseteq B_i$ be the sets of endpoints of the paths in $\tilde \pset^*_i$, lying in $A_i$ and $B_i$, respectively, so that $|\tilde A_i|=|\tilde B_i|=g^2$. Using the stitching claim (Claim~\ref{claim: stitching}), we can obtain a strong \PoS $\pos'=(\cset', \{\pset'_i\}_{i=1}^{\ell/2-1}, A'_1,B'_{\ell/2})$ of length $\ell/2$ and width $g^2$, where $\cset'=(C_1',\ldots,C'_{\ell/2})$, and for $1\leq i\leq \ell/2$, $C'_i=C_{2i-1}$.  We denote, for each $1\leq i\leq \ell/2$, $A'_i=\tilde A_{2i-1}$ and $B'_i=\tilde B_{2i-1}$. We also denote $\pset^*_i=\tilde \pset^*_{2i-1}$ and $\qset^*_i=\tilde \qset^*_{2i-1}$. Recall that for each $1\leq i<\ell/2$, the paths in $\pset'_i$ connect the vertices of $B'_i$ to the vertices of $A'_{i+1}$. Combined with the clusters $S_1,S_3,\ldots, S_{\ell-1}$ and the corresponding path sets $\qset_1,\qset_3,\ldots,\qset_{\ell-1}$, we now obtain a hairy \PoS of length $\ell/2$ and width $g^2$. For convenience, abusing the notation, we denote $\ell/2$ by $\ell$, and we denote this new hairy \PoS by $\hpos=(\cset, \sset, \{\pset_i\}_{i=1}^{\ell-1}, \{\qset_i\}_{i=1}^{\ell}, A_1,B_{\ell})$, where for all $1\leq i<\ell$, paths in $\pset_i$ connect $B_i\subseteq V(C_i)$ to $A_{i+1}\subseteq V(C_{i+1})$.

\begin{definition}We say that a multi-graph $H=(V,E)$ is an $\alpha$-expander, iff $\min_{\stackrel{S\sse V:}{|S|\leq|V|/2}}\set{\frac{|E(S,\nots)|}{|S|}}\geq \alpha$.
\end{definition}

We use the following lemma to show that $G$ contains a large enough expander as a minor.

\begin{lemma}\label{lem: model of expander}
	There is a multi-graph $H$, with $|V(H)|=g^2$, and maximum vertex degree $O(\log g)$, such that $H$ is a $1/2$-expander, and it is a minor of $G$. 
\end{lemma}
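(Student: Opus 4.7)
The plan is to use the $\ell=\tilde c\log g$ clusters $S_1,\ldots,S_\ell$ as routers in a cut-matching game, building a $g^2$-vertex expander as a minor of $G$ whose supernodes correspond to $g^2$ node-disjoint backbone paths that traverse the hairy \PoS.

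First I would build the backbones and their correspondence with the $Y_i$'s. For each $1\le i\le\ell$, $\pset^*_i$ is a set of $g^2$ node-disjoint paths from $A_i$ to $B_i$ inside $C_i$ and, since $|\pset^*_i|=|A_i|=|B_i|=g^2$, it saturates $A_i\cup B_i$; likewise $\pset_i$ saturates $B_i\cup A_{i+1}$. By renumbering the paths of $\pset^*_{i+1}$ and $\pset_i$ appropriately, I concatenate them into $g^2$ node-disjoint ``backbone'' paths $\Pi_1,\ldots,\Pi_{g^2}$, where $\Pi_j$ starts in $A_1$, ends in $B_\ell$, and uses a unique path $R^i_j\in\pset^*_i$ inside each $C_i$. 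For every $(i,j)$ the crossbar provides a path $Q^i_j:=Q_{R^i_j}\in\qset^*_i$ from an interior vertex of $R^i_j$ to a vertex $y^i_j\in Y_i$, inducing a bijection $\pi_i:[g^2]\to Y_i$ with $\pi_i(j)=y^i_j$. By the crossbar and hairy-\PoS axioms, $\Pi_j\cup\bigcup_i Q^i_j$ is connected, different $j$'s yield vertex-disjoint subgraphs, and each such subgraph meets $V(S_i)$ exactly at the single vertex $y^i_j\in Y_i$.

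Next I would run a standard cut-matching game on $[g^2]$, using $S_i$ to realize the matching of round $i$. Given $M_1,\ldots,M_{i-1}$, a cut-player strategy that guarantees $\Omega(1)$-expansion in $O(\log n)$ rounds (the OSVV-type game, or an appropriate amplification of the KRV game) produces a balanced bisection $(U_i,\overline{U_i})$ of $[g^2]$. Since $Y_i$ is node-well-linked in $S_i$, there exist $g^2/2$ node-disjoint paths $\{\mu^i_e\}_{e\in M_i}$ in $S_i$ realizing some perfect matching $M_i$ between $\pi_i(U_i)$ and $\pi_i(\overline{U_i})$; any such matching suffices for the cut-matching analysis. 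After $\ell=\tilde c\log g$ rounds, with $\tilde c$ exceeding the constant from the cut-matching game, the multigraph $H:=([g^2],\,M_1\cup\cdots\cup M_\ell)$ is a $\tfrac{1}{2}$-expander of maximum degree $\ell=O(\log g)$.

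Finally I would exhibit $H$ as a minor of $G$ via $\phi(v_j):=\Pi_j\cup\bigcup_i Q^i_j$ and $\phi(e):=\mu^i_e$ for each $e=(v_j,v_{j'})\in M_i$. Vertex-model disjointness follows from the disjointness of the backbones, the internal disjointness of crossbar paths within each $C_i\cup\qset_i$, and the pairwise vertex-disjointness of the pieces $C_i\cup\qset_i$ and $S_i$ in the hairy \PoS. Edge-model disjointness follows because distinct $S_i$'s are vertex-disjoint and, inside a single $S_i$, the routing paths $\{\mu^i_e\}$ are node-disjoint and touch $Y_i$ only at their endpoints (if an interior vertex of some $\mu^i_e$ were a $Y_i$-vertex it would be a $\mu^i_{e'}$-endpoint, contradicting node-disjointness). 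Hence no $\mu^i_e$ meets any vertex model except at its two endpoints, which lie in the appropriate $\phi(v_j),\phi(v_{j'})$. The main obstacle is quantitative: the argument hinges on a cut-matching game delivering $\Omega(1)$-expansion in $O(\log n)$ rounds rather than $O(\log^2 n)$, because $\ell$ directly controls the exponent in the final bound on $f(g)$; everything else is a clean combination of well-linkedness with the disjointness built into the hairy \PoS and the crossbars.
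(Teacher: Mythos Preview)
Your proposal is correct and follows essentially the same approach as the paper: build $g^2$ backbone paths through the \PoS as the vertex models, and run the cut-matching game, using node-well-linkedness of $Y_i$ in $S_i$ to route the $i$th matching. The paper places the crossbar paths $Q^i_j$ in the edge models (so $\phi(v_j)$ is just the backbone and $\phi(e)$ is the concatenation $Q^i_j\cup\mu^i_e\cup Q^i_{j'}$) rather than in the vertex models as you do, but your organization works equally well, and your disjointness argument is sound once you note that every representative $y^i_k$ is an endpoint of some $\mu^i_{e'}$ because $M_i$ is a perfect matching. The one point to sharpen is the cut-matching reference: OSVV runs for $O(\log^2 n)$ rounds, not $O(\log n)$; the result you actually need (and which the paper invokes as Theorem~\ref{thm: cut-matching game}) is the $O(\log n)$-round cut-player strategy of Khandekar--Khot--Orecchia--Vishnoi, which directly removes the quantitative obstacle you flag at the end.
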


\begin{proof}	
We use the cut-matching game of Khandekar, Rao and Vazirani~\cite{KRV}, defined as follows. We are given a set $V$ of $N$ nodes, where $N$ is an even integer, and two
players, the cut player and the matching player. 
The game is played in iterations. We start with a graph $H$ with node set $V$ and an empty edge set. In every iteration, some edges are added to $H$. The game ends when $H$ becomes a $\half$-expander. 
The goal of the cut player is to construct a $\half$-expander in as few iterations as
possible, whereas the goal of the matching player is to prevent the construction of the expander for as long as possible. The iterations proceed as follows.  In every iteration $j$, the cut player chooses a partition $(Z_j, Z'_j)$ of $V$ with $|Z_j| = |Z'_j|$, and the
matching player chooses a perfect matching $M_j$ that matches the nodes of $Z_j$ to the nodes of $Z'_j$.  The edges of $M_j$ are then
added to $H$.  Khandekar, Rao, and Vazirani \cite{KRV} showed that there is a strategy for the cut player that guarantees that after
$O(\log^2{N})$ iterations the graph $H$ is a $(1/2)$-expander. Orecchia \etal \cite{OrecchiaSVV08} strengthened
this result by showing that after $O(\log^2{N})$ iterations the graph $H$ is an $\Omega(\log{N})$-expander.  We use another strengthening of the result of \cite{KRV}, due to Khandekar \etal \cite{logarithmic-CMG}.


\begin{theorem} [\cite{logarithmic-CMG}]
	\label{thm: cut-matching game}
	There is a constant $\gkrv$, and an adaptive strategy for the cut player such that,
	no matter how the matching player plays, after $\ceil{\gkrv \log N}$ iterations, multi-graph $H$ is a $1/2$-expander with high probability.
\end{theorem}

We note that the strategy of the cut player is a randomized algorithm that computes a partition $(Z_j,Z'_j)$  of $V$ to be used in the subsequent iteration. It is an adaptive strategy, in that the partition depends on the previous partitions and on the responses of the matching player in previous iterations.
Note that the maximum vertex degree in the resulting expander is exactly $\ceil{\gkrv \log N}$, since the set of its edges is a union of $\ceil{\gkrv \log N}$ matchings.

We assume that the constant $\tilde c$ in the statement of Theorem~\ref{thm: grid minor from hairy PoS} is at least $8\gkrv$, so that the length $\ell$ of the hairy \PoS $\hpos$ is at least $\ceil{\gkrv \log (g^2)}$. We construct a $\half$-expander $H$ over a set $U$ of $g^2$ vertices, and embed it into $G$, as follows.

Let $\rset$ be the set of $g^2$ node-disjoint paths, obtained by concatenating the paths of $\pset^*_1,\pset_1,\ldots,\pset_{\ell-1},\pset^*_{\ell}$.	
For each vertex $u\in U$, we let $\phi(u)$ be any path of $\rset$, that we denote by $R_u$, such that for $u\neq u'$, $R_u\neq R_{u'}$. Next, we will construct the set of edges of $H$ over the course of $\ceil{\gkrv \log (g^2)}$ iterations, by running the cut-matching game. Recall that in each iteration $1\leq j\leq \ceil{\gkrv \log (g^2)}$, the cut player computes a partition $(Z_j,Z_j')$ of $U$ into two equal-cardinality subsets. Our goal is to compute a perfect matching $M_j$ between $Z_j$ and $Z'_j$, whose edges will be added to $H$. For each edge $e\in M_j$, we will also compute its embedding $\phi(e)$ into $G$, such that $\phi(e)$ is contained in $ \qset^*_{j}\cup S_{j}$, and for all $e\neq e'$, $\phi(e)\cap \phi(e')=\emptyset$.

We now fix some index $1\leq j\leq \ceil{\gkrv \log (g^2)}$, and consider iteration $j$ of the Cut-Matching game. 
Then for every vertex $v\in U$, there is a path $Q_j(v)\in \qset^*_j$, that connects a vertex of $R_v$ to some vertex of $Y_{j}$, that we denote by $y_j(v)$. We think of $y_j(v)$ as the \emph{representative} of vertex $v$ for iteration $j$. Note that the paths of $\qset^*_j$ are internally disjoint from the paths of $\rset$.

Consider the partition $(Z_j,Z'_j)$ of $U$ computed by the cut player. This partition naturally defines a partition $(\hat Z_j,\hat Z'_j)$ of $Y_{j}$ into two equal-cardinality subsets, where for each vertex $v\in U$, if $v\in Z_j$, then $y_j(v)$ is added to $\hat Z_j$, and otherwise it is added to $\hat Z'_j$ (see Figure~\ref{fig: concatenation3}). Since the vertices of $Y_{j}$ are node-well-linked in $S_{j}$, there is a set $\qset''_j$ of $|U|/2$ node-disjoint paths in $S_{j}$, connecting vertices of $\hat Z_j$ to vertices of $\hat Z'_j$. The set $\qset''_j$ of paths defines a matching $M_j$ between $Z_j$ and $Z'_j$: for each path $Q\in \qset''_j$, if $y_j(v),y_j(v')$ are the endpoints of $Q$, then we add the edge $e=(v,v')$ to $M_j$, and we let the embedding $\phi(e)$ of this edge be the concatenation of the paths $Q_j(v),Q,Q_j(v')$. It is easy to verify that $\phi(e)$ connects a vertex of $R_v$ to a vertex of $R_{v'}$, and it is internally disjoint from the paths in $\rset$. Moreover, for $e\neq e'$, $\phi(e)\cap \phi(e')=\emptyset$. From Theorem~\ref{thm: cut-matching game}, after $\ceil{\gkrv \log (g^2)}$ iterations, graph $H$ becomes a $\half$-expander, and we obtain a model of $H$ in $G$.

\begin{figure}[h]
\centering
\scalebox{0.5}{\includegraphics[scale=0.8]{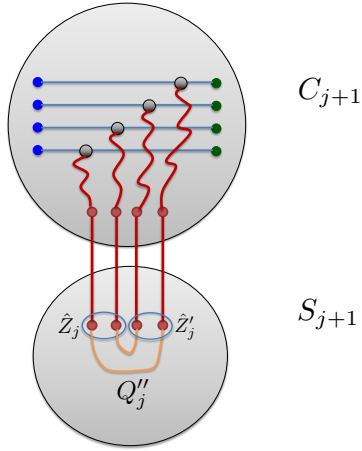}}
\caption{The execution of the $j$th iteration of the cut-matching game.\label{fig: concatenation3}}
\end{figure}

\end{proof}

Let $H$ be the  from Lemma~\ref{lem: model of expander}. 
Recall that $H$ is a $1/2$-expander on $g^2$ vertices, and that its maximum vertex degree is bounded by $O(\log g)$.

 In general, it is well-known that a large enough expander contains a large enough grid as a minor. For example, Kleinberg and Rubinfeld~\cite{KleinbergR} show that a bounded-degree $n$-vertex expander contains any graph with at most $n/\log^{c} n$ edges as a minor, for some constant $c$. Unfortunately, our expander is not bounded-degree, but has degree $O(\log n)$ (where $n$ denotes the number of vertices in the expander). 
 Instead, we will use a recent result of Krivelevich and Nenadov (see Theorem 8.1 in~\cite{expander-minor}). We note that a similar, but somewhat weaker result was independently proved in~\cite{minors-in-expanders}.
 
 Before we state the result, we need the following definition from~\cite{expander-minor}.
 
 \begin{definition}
 Let $G=(V,E)$ be a graph on $n$ vertices. A vertex set $S\subseteq V$ is a \emph{separator} in $G$ if there is a partition $V=(A\cup B\cup S)$ of $V$, such that $|A|,|B|\leq 2n/3$, and $G$ has no edges between $A$ and $B$.
 \end{definition}
 
 The following theorem follows from Theorem 8.1 in~\cite{expander-minor} and the subsequent discussion.
 
\begin{theorem}\label{thm: minor in expander} There exist constants $c,n_0$, such that the following holds. Let $G$ be a graph on $n\geq n_0$ vertices, $0<\alpha<1$ a parameter, and let $H$ be a graph with at most $cn\alpha^2/\log n$ vertices and edges. Then $G$ has a separator of size at most $\alpha n$, or $G$ contains $H$ as a minor.
\end{theorem}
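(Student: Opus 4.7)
The plan is to prove the dichotomy in embedding form: assuming $G$ has no separator of size at most $\alpha n$, I will construct a model of $H$ in $G$. The first step is to translate the ``no small separator'' hypothesis into vertex expansion on a large induced subgraph. I would iteratively peel away low-expansion sets: starting with $W_0=V(G)$, while some $S\subseteq V(W_i)$ with $|S|\leq|V(W_i)|/2$ violates $|N_{W_i}(S)|\geq\alpha'|S|$ for a fixed $\alpha'=\Omega(\alpha)$, remove $S$ and set $W_{i+1}=W_i\setminus S$. A standard charging argument shows the peeling must terminate with $|V(W)|\geq n/2$: otherwise, considering the first moment more than $n/3$ vertices have been removed, the peeled union $U$ would satisfy $n/3\leq|U|\leq 2n/3$ and have neighborhood $N(U)$ of size at most $\alpha'|U|\leq \alpha n$, giving a balanced separator of $G$ and contradicting the hypothesis. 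The outcome is an induced subgraph $W$ on $\Omega(n)$ vertices in which $|N_W(S)|\geq\alpha'|S|$ for every $S\subseteq V(W)$ of size at most $|V(W)|/2$.

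Next I exploit this expansion to embed $H$ into $W$ via a greedy BFS-ball construction. Expansion implies $|B_r(v)|\geq\min\{|V(W)|/2,(1+\alpha')^r\}$ for every vertex $v$ and radius $r$, so balls of radius $r^{*}=\Theta(\log n/\alpha')$ already reach a constant fraction of $V(W)$, while balls of half that radius contain polynomially many vertices. Fix any ordering $v_1,\dots,v_h$ of $V(H)$ and process vertices in turn, maintaining pairwise disjoint connected branch sets $\phi(v_1),\dots,\phi(v_i)$ and pairwise disjoint connecting paths in $W$ for the edges of $H$ already processed. At step $i+1$, pick an as-yet-unused vertex, grow a BFS ball of radius $O(\log n/\alpha')$ around it inside the still-available subgraph to serve as $\phi(v_{i+1})$, and for each edge $\{v_{i+1},v_j\}$ with $j\leq i$ route a path of length $O(\log n/\alpha')$ from $\phi(v_{i+1})$ to $\phi(v_j)$ through uncommitted vertices. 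Expansion guarantees both that the required ball can grow and that such a short connecting path exists.

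The bookkeeping pins down the $\alpha^2/\log n$ factor exactly: each branch set and each connecting path consumes $O(\log n/\alpha)$ vertices, so the total vertex consumption throughout the embedding is $O\bigl((|V(H)|+|E(H)|)\log n/\alpha\bigr)$. As long as this stays below $\alpha'|V(W)|/4=\Omega(\alpha n)$, the residual portion of $W$ retains enough expansion to support the next ball and the next connecting path; rearranging gives $|V(H)|+|E(H)|\leq cn\alpha^2/\log n$ for a suitable constant $c$, matching the hypothesis. The main obstacle, and the technical heart of the argument, is the accompanying ``robust expansion'' lemma: one must show that deleting up to $\alpha'|V(W)|/4$ arbitrary vertices from $W$ still leaves an induced subgraph in which BFS balls grow at rate $1+\Omega(\alpha)$ up to linear size, so that balls and short connecting paths continue to exist through all $|V(H)|+|E(H)|$ iterations. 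This robustness---together with the requirement that $n$ be large enough for balls of radius $\Theta(\log n/\alpha)$ to fit inside $W$ before saturating---is precisely what fixes the constants $c$ and $n_0$ in the theorem.
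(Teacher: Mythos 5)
You should first note that the paper does not actually prove this statement: it is imported verbatim from Krivelevich and Nenadov (Theorem~8.1 of~\cite{expander-minor} and the discussion following it), so there is no in-paper argument to compare against. Your sketch is essentially a reconstruction of the strategy of that cited work --- peel to a vertex-expanding induced subgraph, then greedily embed $H$ via short paths --- so the overall route is the right one. But as a proof it is incomplete, and the incompleteness sits exactly where you say it does. The ``robust expansion'' lemma you defer is not a routine verification: vertex expansion is \emph{not} stable under deleting $\Theta(\alpha n)$ adversarially placed vertices. The deleted (committed) set can swallow the entire neighborhood of a low-degree unused vertex, so ``pick an as-yet-unused vertex and grow a BFS ball around it'' can fail outright at some iteration, and more generally a ball grown in the residual graph need not reach size $|V(W)|/2$, so two balls need not meet and the connecting path need not exist. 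The cited proof handles this by not starting from single vertices: branch sets are taken to be connected sets of size $\Theta(\log n/\alpha)$ constructed so that each retains a large boundary into the residual graph, together with a counting argument showing that only a bounded number of candidate seeds can ever become ``dead''. Without some version of this, the induction does not close, so the proposal has a genuine gap at its technical heart.

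Two smaller points also need repair. First, in the peeling step the last removed set $S_i$ may have size up to $|W_{i-1}|/2$, so at the first moment the peeled union exceeds $n/3$ it can already have size close to $5n/6$; the partition you exhibit then violates the $|A|,|B|\leq 2n/3$ requirement in the paper's definition of a separator, and you must split or re-balance the last set (or peel with a smaller threshold) to produce a legitimate balanced separator. Second, the accounting should distinguish the ball you \emph{explore} (which must reach $\Omega(n)$ vertices of $W$ for two balls to intersect) from the path you \emph{commit} (length $O(\log n/\alpha)$); only the latter is consumed, which is why the budget works out to $O\bigl((|V(H)|+|E(H)|)\log n/\alpha\bigr)$, but this is also precisely why the branch sets themselves must be engineered to keep expanding after deletions rather than being arbitrary small balls.
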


Let $N=g^2$ denote the number of vertices in the $1/2$-expander $H$, and let $d=O(\log g)$ be the maximum vertex degree in $H$. We claim that every separator in $H$ has cardinality at least $N/(24d)$. Indeed, assume otherwise, and let $S$ be a separator of cardinality less than $N/(24d)$ in $H$. Let $(A,B,S)$ be the corresponding partition of $V(H)$, so that $|A|,|B|\leq 2N/3$, and $H$ has no edges between $A$ and $B$. Assume w.l.o.g. that $|A|\leq |B|$. Since $H$ is a $\half$-expander, at least $|A|/2$ edges are leaving $A$, and each such edge must have an endpoint in $S$. Since maximum vertex degree in $H$ is bounded by $d$, $|S|\geq |A|/(2d)$ must hold. However, since we have assumed that $|S|\leq N/(24d)$, and $|B |\leq 2N/3$, we get that $|A|\geq N/6$ must hold, and so $|S|\geq N/(12d)$, a contradiction.
 
 We can now use Theorem~\ref{thm: minor in expander} in graph $H$, with $\alpha=1/(24d)$, to conclude that for some constant $c'$, every graph $H'$ with at most $c'g^2/\log^3 g$  edges and vertices is a minor of $H$. In particular, a grid of size $(g'\times g')$, where $g'=g\cdot \sqrt{c'/\log^3g}$ is a minor of $H$, and hence of $G$.
 \end{proofof}

\section{Building the Crossbar}
\label{sec: Building the Crossbar}

In this section we provide a proof of a weaker version of Theorem~\ref{thm: main for building crossbar}, where we require that $\kappa\ge 2^{22}g^{10}\log g$. We note that this weaker version of Theorem~\ref{thm: main for building crossbar} immediately implies a slightly weaker bound of $f(g)=O(g^{10}\poly\log g)$ for Theorem~\ref{thm:main}. Although this version of Theorem~\ref{thm: main for building crossbar} is slightly weaker, its proof is simpler and it provides a clean framework that includes all the main ideas and the technical machinery needed for the full proof of Theorem~\ref{thm: main for building crossbar}. 
We complete the proof of Theorem~\ref{thm: main for building crossbar} in Section~\ref{sec: stronger structural theorem}.

Recall that we are given a graph $H$, and three disjoint subsets $A,B,X$ of its vertices, each of cardinality $\kappa\geq 2^{22}g^{10}\log g$, such that every vertex of $X$ has degree $1$ in $H$.
We are also given a set $\tpset$ of $\kappa$ node-disjoint paths connecting vertices of $A$ to vertices of $B$, and a set $\tqset$ of $\kappa$ node-disjoint paths connecting vertices of $A$ to vertices of $X$. Our goal is to prove that either $H$ contains an $(A,B,X)$-crossbar of width $g^2$, or  that its minor contains a strong Path-of-Sets system whose length and width are both at least $\Omega(g^2)$. Recall that an $(A,B,X)$-crossbar of width $g^2$ consists of a set $\pset^*$ of $g^2$ node-disjoint paths connecting vertices of $A$ to vertices of $B$, and another set $\qset^*$ of $g^2$ node-disjoint paths, such that the paths in $\qset^*$ are internally disjoint from the paths in $\pset^*$, and for each path $P\in \pset^*$, there is a path $Q_P\in \qset^*$, whose first vertex lies on $P$ and last vertex belongs to $X$. 

It is easy to see that such a crossbar does not always exist. For instance, suppose $H$ is a union of a grid of size $((\kappa+2)\times (\kappa+2))$ and another set $X$ of $\kappa$ vertices, where every vertex of $X$ connects to a distinct vertex of the first column of the grid. Let $A$ contain all vertices of the first row of the grid, excluding the grid corners, and let $B$ contain all vertices of the last row of the grid, excluding the grid corners. The set $\tpset$ of paths is a subset of the columns of the grid, and the existence of the set $\tqset$ of paths is easy to verify. However, there is no $(A,B,X)$-crossbar of width greater than $1$ in this graph. We will show that in such a case, we can find a \PoS of length and width at least $\Omega(g^2)$ in a minor of $H$. 

Our first step is to construct two sets of paths: a set $\pset$ of $\kappa$ node-disjoint paths connecting every vertex of $A$ to a distinct vertex of $B$, and a set $\qset$ of $\kappa$ node-disjoint paths, connecting every vertex of $A$ to a distinct vertex of $X$. Such two sets of paths are guaranteed to exist, as we can use $\pset=\tpset$ and $\qset=\tqset$. 
Let $H(\pset,\qset)=\bigcup_{P\in \pset\cup\qset}P$ be the graph obtained by the union of these paths.
Among all such pairs $(\pset,\qset)$ of path sets, we select the sets $\pset,\qset$ that minimize the number of edges in $H(\pset,\qset)$. For each path $P\in \pset$, we denote by $Q_P\in \qset$ the path originating at the same vertex of $A$ as $P$.
Even though the graph is undirected, it is convenient to think of the paths in $\pset\cup \qset$ as directed away from $A$.

The remainder of the proof consists of six steps. In the first step, we define a new structure that we call a \emph{pseudo-grid}. Informally, a pseudo-grid of depth $D$ consists of a collection $\rset_1,\ldots,\rset_D$ of disjoint subsets of paths in $\pset$ (that is, for all $1\leq i\leq D$, $\rset_i\subseteq \pset$), such that for all $i$, $|\rset_i|\leq g^2$. Additionally, if we denote $\pset'=\pset\setminus\bigcup_i\rset_i$, then there must be a large subset $\qset'\subseteq \set{Q_P\mid P\in \pset'}$ of paths, such that, for all $1\leq i\leq D$, every path $Q\in \qset'$ intersects at least one path of $\rset_i$. We show that, either $H$ contains an $(A,B,X)$-crossbar of width $g^2$, or it contains a pseudo-grid of a large enough depth. 

In the second step, we \emph{slice} this pseudo-grid into a large enough number $M$ of smaller pseudo-grids. Specifically, for each
path $R\in \rset$, we define a sequence $\sigma_1(R),\ldots,\sigma_M(R)$ of disjoint sub-paths of $R$, that appear on $R$ in this order. Let $\Sigma_i=\set{\sigma_i(R)\mid R\in \rset}$. For all $1\leq i\leq M$, we let $\qset_i\subseteq \qset'$ contain only those paths $Q$, for which all vertices of $Q\cap V(\rset)$ belong to $V(\Sigma_i)$. We perform the slicing in a way that ensures that for all $i$, $|\qset_i|$ is large enough. 

In general, for all $1\leq i\leq M$, there are many intersections between the paths in $\qset_i$ and the paths in $\Sigma_i$. But it is possible that some paths $R\in \Sigma_i$ intersect few paths of $\qset_i$ and vice versa. Our third step is a clean-up step, in which we discard all such paths, so that eventually, each path $R\in \Sigma_i$ intersects a large number of paths of $\qset_i$ and vice versa.

In the fourth step, we create clusters that will be used to construct the final \PoS. Specifically, for each $1\leq i\leq M$, we show that there is some cluster $C_i$ in the graph obtained from the union of the paths in $\Sigma_i$ and $\qset_i$, such that there is a large enough collection $\Sigma'_i\subseteq \Sigma_i$ of paths, each of which is contained in $C_i$, and moreover, the endpoints of the paths in $\Sigma'_i$ are well-linked in $C_i$. This step uses standard well-linked decomposition, though its analysis is somewhat subtle.

In the fifth step, we exploit the paths in $\rset$ in order to select a subset of the clusters $C_i$ and link them into a \PoS. Unfortunately, we will only be able to guarantee that, for each cluster $C_i$, the resulting vertex set $A_i\cup B_i$ is edge-well-linked in $C_i$; recall that such a \PoS is called a {weak} \PoS. We then turn it into a strong \PoS using standard techniques in our last step. 

We now provide a formal proof of a weaker version of Theorem~\ref{thm: main for building crossbar}, where we assume that $\kappa\ge 2^{22}g^{10}\log g$, using the sets $\pset,\qset$ of paths that we have defined above.

\subsection{Step 1: Pseudo-Grid}

We define a pseudo-grid, one of our central combinatorial objects.

\begin{definition} Let $D>0$ be an integer.
A pseudo-grid of depth $D$ consists of the following two ingredients.
The first ingredient is a family $\set{\rset_1,\rset_2,\ldots,\rset_D}$ of subsets of $\pset$, where for all $1\leq i\leq D$, $|\rset_i|\leq g^2$, and for all $1\leq i\neq j\leq D$, $\rset_i\cap \rset_j=\emptyset$.
 Let $\rset=\bigcup_{i=1}^D\rset_i$, and let $\pset'=\pset \setminus \rset$.
 The second ingredient is a set $\qset'$ of $\ceil{\kappa/4}$ disjoint paths, where each path $Q\in \qset'$ is a sub-path of a distinct path of $\set{Q_P\mid P\in \pset'}$ (so in particular, $|\pset'|\geq |\qset'|= \ceil{\kappa/4}$), and exactly one endpoint of $Q$ lies in $X$.
Additionally, the following two properties must hold:

\begin{properties}{P}
\item The paths in $\pset'$ are completely disjoint from the paths in $\qset'$; and \label{prop: disjointness of P and Q}
\item For every $1\leq i\leq D$, the number of paths $Q\in \qset'$ with $Q\cap (\bigcup_{P\in \rset_i}P)=\emptyset$ is at most $2g^2$. In other words, all but at most $2g^2$ paths of $\qset'$ must intersect some path of $\rset_i$. \label{prop: intersect each layer} 
 \end{properties}
\end{definition}

The main result of this subsection is the following theorem. 

\begin{theorem}\label{thm: pseudo-grid} Let $D$ be any integer with $1\leq D\leq \kappa/(2g^2)$. Then
either $H$ contains an $(A,B,X)$-crossbar of width $g^2$, or it contains a pseudo-grid of depth $D$.
\end{theorem}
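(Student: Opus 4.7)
The plan is to build the layers $\rset_1,\ldots,\rset_D$ one at a time, with the invariant that at each step either we halt with a crossbar or we add a layer of size at most $g^2-1$. Having fixed $\rset_1,\ldots,\rset_{i-1}$, set $\pset^{(i)}=\pset\setminus\bigcup_{j<i}\rset_j$. For each $P\in\pset^{(i)}$, the path $Q_P$ starts at the $A$-endpoint of $P$, which lies in $V(P)\subseteq V(\pset^{(i)})$, and ends in $X$, so it has a well-defined last vertex on $V(\pset^{(i)})$; call it $v_P^{(i)}$, and let $P^{*}(i,P)\in\pset^{(i)}$ be the unique path containing $v_P^{(i)}$. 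The central quantity for step $i$ is the \emph{target set} $T_i=\set{P^{*}(i,P):P\in\pset^{(i)}}$.

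The dichotomy is driven by $|T_i|$. If $|T_i|\geq g^2$, I would extract an $(A,B,X)$-crossbar of width $g^2$ and terminate: pick any $g^2$ distinct target paths $P'_1,\ldots,P'_{g^2}\in T_i$ and, for each $j$, a voter $P_j\in\pset^{(i)}$ with $P^{*}(i,P_j)=P'_j$ (allowing $P_j=P'_j$, which happens when $P'_j$ is ``self-loose''). Set $\pset^{*}=\set{P'_j}$ and let $R_j$ be the suffix of $Q_{P_j}$ from $v_{P_j}^{(i)}$ to its $X$-endpoint, with $\qset^{*}=\set{R_j}$. By the choice of $v_{P_j}^{(i)}$, the suffix $R_j$ avoids $V(\pset^{(i)})$ after its starting vertex, so $R_j\cap V(P'_j)=\set{v_{P_j}^{(i)}}$ (an endpoint of $R_j$) and $R_j$ is disjoint from every other $P'_k$; the $R_j$'s are pairwise disjoint because distinct targets force distinct voters (since $P^{*}(i,\cdot)$ is a function) and the paths $Q_{P_j}$ are pairwise disjoint.

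Otherwise $|T_i|<g^2$, and I set $\rset_i=T_i$. For every $P\in\pset^{(i+1)}=\pset^{(i)}\setminus\rset_i$, $P\notin T_i$ forces $P^{*}(i,P)\neq P$, hence $P^{*}(i,P)\in T_i=\rset_i$ and $v_P^{(i)}\in V(\rset_i)$. After $D$ iterations without ever hitting the crossbar case, define $\pset'=\pset\setminus\bigcup_i\rset_i$ and, for each $P\in\pset'$, let $Q'_P$ be the suffix of $Q_P$ starting immediately after the last vertex of $V(\pset')$ on $Q_P$. Since $v_P^{(i)}\in V(\rset_i)$ is disjoint from $V(\pset')$, it lies strictly later on $Q_P$ than the last $V(\pset')$-vertex, so $Q'_P$ passes through $v_P^{(i)}\in V(\rset_i)$; thus (P\ref{prop: intersect each layer}) holds with zero exceptions, and (P\ref{prop: disjointness of P and Q}) is immediate from the suffix construction. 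Finally, $|\pset'|\geq \kappa-Dg^2\geq \kappa/2\geq \ceil{\kappa/4}$, so any $\ceil{\kappa/4}$-subset of $\set{Q'_P:P\in\pset'}$ serves as $\qset'$. The one subtle point is that a naive count of ``how many $P$'s vote for each other $P'$'' treats self-voters (loose $P$'s with $P^{*}(i,P)=P$) separately and forces a weak bound on per-layer exceptions; the unified treatment via $T_i$ is what makes the dichotomy crisp, since a loose path contributes itself as a target, so abundant looseness automatically yields the crossbar case.
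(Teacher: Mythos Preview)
Your argument is correct, and it is genuinely different from the paper's proof. The paper builds layer $\rset_i$ by contracting each surviving path $P\in\pset'_{i-1}$ to a vertex, invoking Menger's theorem between the contracted vertices and $X$, and taking the separator $J_i$; this separator may contain both contracted-path vertices (giving $\rset_i$) and ordinary vertices $J''_i$, and the latter are what force the ``not $i$-good'' exceptions and the final $2g^2$ slack in Property~(P\ref{prop: intersect each layer}). Your approach bypasses Menger entirely: you read off the target set $T_i$ directly from the last-exit vertices $v_P^{(i)}$ of the paths $Q_P$ on $V(\pset^{(i)})$, and the dichotomy on $|T_i|$ immediately yields either a crossbar (from the suffixes of the voting $Q_{P_j}$'s) or a layer $\rset_i=T_i$ with the clean invariant that every surviving $P$ has $v_P^{(i)}\in V(\rset_i)$. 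The payoff is a shorter, more elementary argument that in fact achieves zero exceptions in (P\ref{prop: intersect each layer}) and layers of size at most $g^2-1$, though neither improvement is needed downstream. The paper's separator-based route is arguably more robust to settings where the $\pset$--$\qset$ pairing via shared $A$-endpoints is unavailable, but in the present setup your direct argument is the cleaner one.
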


We note that the theorem holds for any value of $\kappa$, and its proof does not rely on specific bounds on $\kappa$.

\begin{proof}
From the definition of the sets $\pset,\qset$ of paths, since every vertex of $X$ has degree $1$, we can assume that no path in $\pset$ contains a vertex of $X$, and that every path of $\qset$ contains exactly one vertex of $X$, that serves as its endpoint.

We perform $D$ iterations, where in iteration $i$ we either construct a crossbar of width $g^2$, or we compute the path set $\rset_i$ of the pseudo-grid. For each $1\leq i\leq D$, we will  denote by $\pset'_i=\pset\setminus \left(\rset_1\cup\cdots\cup\rset_i\right)$ the collection of the remaining paths of $\pset$. We will ensure that for all $i$, $|\rset_i|\leq g^2$. We let $\pset'_0=\pset$.

We now describe the $i$th iteration of our algorithm, whose input is a set $\pset'_{i-1}\subseteq \pset$ of at least $\kappa-(i-1)g^2$ paths.
In order to execute the $i$th iteration, we build a graph $H_i$, that is obtained from the graph $H$, by contracting every path $P\in \pset'_{i-1}$ into a single vertex $v_P$. We keep parallel edges but discard loops. Let $S_i=\set{v_P\mid P\in \pset'_{i-1}}$ be the resulting set of vertices corresponding to the contracted paths. We now compute the largest set $\hat \qset$ of node-disjoint paths in $H_i$, connecting the vertices of $S_i$ to the vertices of $X$. We consider two cases.

\paragraph{Case 1.} The first case happens if $\hat\qset$ contains at least $g^2$ paths.

 In this case, we show that we can construct  an $(A,B,X)$-crossbar of width $g^2$. Consider some path $Q\in \hat\qset$. We can assume without loss of generality that $Q$ contains exactly one vertex of $S_i$, that serves as one of its endpoints. Let $u(Q)$ be this vertex. If $\hat \qset$ contains more than $g^2$ paths, we discard paths from $\hat \qset$ arbitrarily, until $|\hat \qset|=g^2$ holds. We then define $\pset^*$ to be the set of all paths $P\in \pset'_{i-1}$, such that $v_P=u(Q)$ for some path $Q\in \hat \qset$. Finally, we define the set $\qset^*$ of paths of the crossbar, as follows. For each path $Q\in \hat \qset$ in graph $H_i$, we will define a corresponding path $Q'$ in graph $H$, and we will set $\qset^*=\set{Q'\mid Q\in \hat \qset}$. Consider now some path $Q\in \hat \qset$, and let $P\in \pset^*$ be the path with $v_P=u(Q)$. Recall that every vertex of $Q$ is either a vertex of $H$, or it is a vertex of the form $v_{P'}$ for some path $P'\in \pset'_{i-1}$. Let $U'(Q)$ be the set of all vertices of $Q$ that belong to $H$, and let $U''(Q)$ be the set of all vertices lying on the paths $P'\in \pset'_{i-1}$, such that $v_{P'}\in V(Q)$. Finally, let $U(Q)=U'(Q)\cup U''(Q)$. Notice that for any two paths $Q,Q'\in \hat Q$, if $Q\neq Q'$, then $U(Q)\cap U(Q')=\emptyset$, as the two paths are node-disjoint. Let $H(Q)$ be the sub-graph of $H$ induced by the vertices of $U(Q)$. Then $H(Q)$ is a connected graph, that contains at least one vertex of $P$ and at least one vertex of $X$. We let $Q'$ be any path in $H(Q)$ connecting a vertex of $P$ to a vertex of $X$, such that $Q'$ is internally disjoint from $P$. Setting $\qset^*=\set{Q'\mid Q\in \hat \qset}$, we now obtain  an $(A,B,X)$-crossbar $(\pset^*,\qset^*)$ of width $g^2$. Indeed, from the above discussion, it is immediate that the paths of $\pset^*$ are mutually node-disjoint, and so are the paths of $\qset^*$. From our construction, each path $Q'\in \qset^*$ connects a distinct path $P\in \pset^*$ to a vertex of $X$. Consider now any pair $P\in \pset^*,Q'\in \qset^*$ of such paths. If $v_P=u(Q')$, then from our construction $P\cap Q'$ consists of a single vertex, that serves as an endpoint of $Q'$. Otherwise, $Q'\cap P=\emptyset$: indeed, if $\hat Q\in \hat\qset$ is the path with $u(\hat Q)=v_{P}$, then, since $Q'$ and $\hat Q$ are disjoint from each other, $Q'$ may not contain the vertex $v_P$, and so $Q'$ may not contain any vertex of $P$.

\paragraph{Case 2.} We now assume that $\hat \qset$ contains fewer than $g^2$ paths. From Menger's theorem, there is a set $J_i$ of at most $g^2$ vertices in graph $ H_i$, such that in $ H_i\setminus J_i$, there is no path connecting a vertex of $S_i$ to a vertex of $X$. Note that $J_i$ may contain vertices of $S_i\cup X$. 

We partition $J_i$ into two subsets: $J_i'=J_i\cap S_i$, and $J_i''=J_i\setminus S_i$. Notice that each vertex in $J_i''$ is also a vertex in the original graph $H$. We then let $\rset_i\subseteq \pset'_{i-1}$ be the set of all paths $P\in \pset'_{i-1}$, whose corresponding vertex $v_P\in J_i'$. Clearly, $|\rset_i|\leq |J_i|\leq g^2$. We define $\pset'_i=\pset\setminus(\rset_1\cup\cdots\cup \rset_i)=\pset'_{i-1}\setminus\rset_i$.
Let $V_i=J''_i\cup(\bigcup_{P\in \rset_i}V(P))$, a set of vertices of the original graph $H$, and let $\qset'_i=\set{Q_P\mid P\in \pset'_i}$. Then each path in $\qset'_i$ must contain a vertex of $V_i$. For each such path $Q\in \qset'_i$, let $v_i(Q)$ be the last vertex of $Q$ that belongs to $V_i$, and let $\sigma_i(Q)$ be the sub-path of $Q$ between $v_i(Q)$ and the endpoint of $Q$ that belongs to $X$. Note that, as $|J''_i|\leq g^2$, for all but at most $g^2$ paths $Q\in \qset'_i$, the vertex $v_i(Q)$ lies on some path of $\rset_i$. We call such a path $Q\in \qset'_i$ an \emph{$i$-good path}. We will use the following immediate observation:

\begin{observation}\label{obs: disj}
For each path $Q\in \qset'_i$, the segment $\sigma_i(Q)$ cannot contain any vertex of $\bigcup_{P\in \pset'_i}V(P)$.
\end{observation}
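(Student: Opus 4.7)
The plan is to argue by contradiction, using only the defining property of $J_i$ from Case 2 of iteration $i$: that $J_i$ separates $S_i$ from $X$ in $H_i$. The minimality of $|E(H(\pset,\qset))|$ will not be needed for this particular observation — it becomes relevant later, when exploiting sub-path structure on $\qset$.

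I will start by assuming that some $Q=Q_P\in\qset'_i$ (with $P\in\pset'_i$) has a vertex $u\in V(P')$ lying on $\sigma_i(Q)$ for some $P'\in\pset'_i$. The first step is to show $u\neq v_i(Q)$. This follows because $v_i(Q)\in V_i=J''_i\cup V(\rset_i)$, while $V(P')$ is disjoint from $V_i$: paths of $\pset$ are node-disjoint, so $V(P')$ avoids $V(\rset_i)$; and the vertices of $P'\in\pset'_{i-1}$ are contracted in $H_i$ to $v_{P'}\in S_i$, whereas $J''_i\subseteq V(H_i)\setminus S_i$ corresponds in $H$ to uncontracted vertices. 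Consequently $u$ lies strictly after $v_i(Q)$ along $Q$, so the subpath $Q[u,x]$ sits inside $V(\sigma_i(Q))\setminus\{v_i(Q)\}$ and is therefore disjoint from $V_i$.

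Next, I will project $Q[u,x]$ into $H_i$: contracting the paths in $\pset'_{i-1}$ at worst turns some edges of $Q[u,x]$ into self-loops (which get discarded) and leaves the rest intact, yielding a walk in $H_i$ from $v_{P'}$ (the image of $u$) to $x$. The main content is checking that this walk avoids $J_i=J'_i\cup J''_i$ and that its endpoints also avoid $J_i$. Avoidance of $J''_i\subseteq V_i$ is immediate from $V(Q[u,x])\cap V_i=\emptyset$, and avoidance of $J'_i=\{v_{P''}\mid P''\in\rset_i\}$ follows similarly, because the walk only sits at a contracted vertex $v_{P''}$ when $Q[u,x]$ visits $V(P'')$, and $V(P'')\subseteq V(\rset_i)\subseteq V_i$ for $P''\in\rset_i$. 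For the endpoints, $v_{P'}\notin J_i$ since $P'\in\pset'_i$ is not in $\rset_i$ and $v_{P'}\in S_i$; and $x\notin J_i$, because $x\notin S_i$ rules out $x\in J'_i$, while a potential inclusion $x\in J''_i\subseteq V_i$ would force $v_i(Q)=x$, collapse $\sigma_i(Q)$ to the single vertex $x$, and therefore force $u=x$, contradicting $u\in V(P')$ together with $X\cap V(P')=\emptyset$ (valid because $X$-vertices have degree $1$ in $H$ and $P'$ is a path from $A$ to $B$ with $A,B,X$ pairwise disjoint).

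Putting this together yields a walk, and hence a path, in $H_i\setminus J_i$ from $v_{P'}\in S_i$ to $x\in X$, contradicting the defining property of $J_i$. The two subtle points I expect to require the most care are ruling out the degenerate situations $u=v_i(Q)$ and $x\in J''_i$; both reduce to the two structural facts that $V(\pset'_i)\cap V_i=\emptyset$ and that $X$ consists of degree-$1$ vertices disjoint from $V(\pset)$, so once these are recorded the argument is essentially a single application of Menger's theorem in contrapositive form.
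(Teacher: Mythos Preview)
Your proof is correct. The paper states this observation without proof (labeling it ``immediate''), and your argument is precisely the expected unpacking: the separator property of $J_i$ in the contracted graph $H_i$ forces any path in $H$ from $V(\pset'_i)$ to $X$ to meet $V_i$, contradicting that $v_i(Q)$ is the \emph{last} vertex of $Q$ in $V_i$.
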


We continue this process for $D$ iterations. If we did not construct an $(A,B,X)$-crossbar of width $g^2$, then we obtain the sets $\rset_1,\ldots,\rset_D\subseteq\pset$ of paths, where for all $i$, $|\rset_i|\leq g^2$, and we set $\pset'=\pset'_D$. Clearly, for all $1\leq i\neq j\leq D$, $\rset_i\cap \rset_j=\emptyset$. Since $D\leq \kappa/(2g^2)$, we get that $|\pset'|\geq \kappa/4$.  We let $\qset'=\set{\sigma_D(Q)\mid Q\in \qset'_D}$. If $|\qset'|>\ceil{\kappa/4}$, then we discard arbitrary paths from $\qset'$ until $|\qset'|=\ceil{\kappa/4}$ holds.
We now claim that $\set{\rset_1,\ldots,\rset_D}$ and $\qset'$ define a pseudo-grid of depth $D$. Indeed, as already observed, for each $1\leq i\leq D$, $|\rset_i|\leq g^2$, and for all $1\leq i\neq j\leq D$, $\rset_i\cap \rset_j=\emptyset$.
From the definition of the segments $\sigma_D(Q)$ for $Q\in \qset'_D$, every path in $\qset'$ has exactly one endpoint in $X$.
 From Observation~\ref{obs: disj}, the paths in $\qset'$ are disjoint from the paths in $\pset'$, thus establishing Property~(\ref{prop: disjointness of P and Q}).

It now remains to establish Property~(\ref{prop: intersect each layer}).
Consider some index $1\leq i\leq D$, and some path $Q\in \qset'$, that is both $i$-good and $D$-good. Consider the two corresponding segments $\sigma_i(Q)$ and $\sigma_D(Q)$. Recall that, since $Q$ is $i$-good, $\sigma_i(Q)$ intersects some path of $\rset_i$, but, since $\rset_D\subseteq \pset'_i$, from Observation~\ref{obs: disj},  $\sigma_i(Q)$ cannot contain a vertex of $\bigcup_{P\in \rset_D}V(P)$. As $\sigma_D(Q)$ is $D$-good, it contains a vertex of $\bigcup_{P\in \rset_D}V(P)$. We conclude that $\sigma_i(Q)\subseteq\sigma_D(Q)$, and so $\sigma_D(Q)$ intersects some path of $\rset_i$. As at most $g^2$ paths of $\qset'$ are not $i$-good, and at most $g^2$ paths are not $D$-good,  all but at most $2g^2$ paths of $\qset'$ must intersect some path of $\rset_i$.
 \end{proof}

 We apply Theorem~\ref{thm: pseudo-grid} to graph $H$ with the depth parameter $D=64g^4$. If the outcome is an $(A,B,X)$-crossbar of width $g^2$, then we return this crossbar and terminate the algorithm. Therefore, we assume from now on that the outcome of the theorem is a pseudo-grid of  depth $D$. 

\subsection{Step 2: Slicing the Paths of $\rset$}
\label{subsec:slicing}
 Recall that for each $1\leq i\leq D$, there are at most $2g^2$ paths $Q\in\qset'$, such that $Q$ does not intersect any path of $\rset_i$. We discard all such paths from $\qset'$, obtaining a set $\qset''\subseteq\qset'$ of paths. Observe that we discard at most $2g^2 D=128g^6<\kappa/8$ paths, and, since $\qset'=\ceil{\kappa/4}$, we get that $|\qset''|\geq \kappa/8$.
 We now have the following property:
 
 \begin{properties}{I}
\item For each path $Q\in \qset''$, for every $1\leq i\leq D$, path $Q$ intersects at least one path of $\rset_i$.  \label{prop: intersection}

 \end{properties}
 
 
 We denote  $|\rset|=N$, so $D\leq N\leq Dg^2$. Let $A'\subseteq A$ and $B'\subseteq B$ be the sets of endpoints of the paths of $\rset$ lying in $A$ and $B$, respectively. 
 
 Let $H'$ be the sub-graph of $H$, obtained by taking the union of all paths in $\rset$ and all paths in $\qset''$.  
The next observation follows from the definition of the sets $\pset$ and $\qset$ of paths.

\begin{observation}\label{obs: irrelevant edge} Let $e$ be any edge of $H'$ lying on any path in $\rset$, such that $e$ does not lie on any path $Q$ of the original set $\qset$. Then the largest number of node-disjoint paths connecting vertices of $A'$ to vertices of $B'$ in $H'\setminus\set{e}$ is at most $N-1$.
\end{observation}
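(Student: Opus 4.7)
The plan is to prove the observation by contradiction, leveraging the edge-minimality of the pair $(\pset,\qset)$. Recall that $(\pset,\qset)$ was chosen, among all pairs consisting of $\kappa$ node-disjoint $A$-$B$ paths and $\kappa$ node-disjoint $A$-$X$ paths, to minimize $|E(H(\pset,\qset))|$. The idea is that if removing $e$ from $H'$ still leaves $N$ node-disjoint $A'$-$B'$ paths, then we can swap these new paths in for $\rset$ without affecting $\qset$ and strictly decrease the edge count, obtaining a contradiction.

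Concretely, suppose for contradiction that there exists a set $\rset^{\mathrm{new}}$ of $N$ node-disjoint $A'$-$B'$ paths contained in $H'\setminus\{e\}$. I would define $\pset^{\mathrm{new}}=(\pset\setminus\rset)\cup\rset^{\mathrm{new}}$ and argue that $\pset^{\mathrm{new}}$ is a valid set of $\kappa$ node-disjoint $A$-$B$ paths. The endpoints line up because the paths of $\rset$ that we removed had their endpoints in $A'$ and $B'$, exactly where $\rset^{\mathrm{new}}$ begins and ends. Node-disjointness within $\pset\setminus\rset$ and within $\rset^{\mathrm{new}}$ is immediate; the point requiring care is that a path of $\pset\setminus\rset$ must not share a vertex with a path of $\rset^{\mathrm{new}}$. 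This follows because $V(\rset^{\mathrm{new}})\subseteq V(\rset)\cup V(\qset'')$, while $\pset\setminus\rset$ is vertex-disjoint from $\rset$ (as $\pset$ is node-disjoint) and vertex-disjoint from $\qset''\subseteq\qset'$ (by Property~(\ref{prop: disjointness of P and Q}) of the pseudo-grid).

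Next I would compare $E(H(\pset^{\mathrm{new}},\qset))$ with $E(H(\pset,\qset))$. Every edge of $\rset^{\mathrm{new}}$ lies in $E(H')=E(\rset)\cup E(\qset'')\subseteq E(\pset)\cup E(\qset)$, so $E(H(\pset^{\mathrm{new}},\qset))\subseteq E(H(\pset,\qset))$. On the other hand, the edge $e$ lies on some path of $\rset\subseteq\pset$ and by hypothesis on no path of $\qset$, so $e\in E(H(\pset,\qset))$. But $e$ is absent from $\pset^{\mathrm{new}}$: it is excluded from $\rset^{\mathrm{new}}$ by construction, and, since $\pset$ is node-disjoint and hence edge-disjoint, $e$ lies on the unique $\rset$-path that was removed, so $e\notin E(\pset\setminus\rset)$. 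Therefore $E(H(\pset^{\mathrm{new}},\qset))\subsetneq E(H(\pset,\qset))$, contradicting edge-minimality and completing the proof.

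The main thing to be careful about is the vertex-disjointness between $\pset\setminus\rset$ and $\rset^{\mathrm{new}}$, since $\rset^{\mathrm{new}}$ may freely reuse vertices of $\qset''$; this is precisely where Property~(\ref{prop: disjointness of P and Q}), which guarantees $\pset'$ is disjoint from $\qset'$ (and thus from $\qset''$), is essential. Everything else is bookkeeping: $\qset$ is never altered, so the swap affects only the $\pset$-side, and the only edge that can possibly disappear is $e$ itself while no new edge outside $E(H(\pset,\qset))$ can appear.
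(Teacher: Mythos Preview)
Your proof is correct and follows essentially the same approach as the paper: both argue by contradiction, replacing $\rset$ with the hypothetical $N$ node-disjoint $A'$--$B'$ paths in $H'\setminus\{e\}$ to obtain a pair $(\pset^{\mathrm{new}},\qset)$ with strictly fewer edges, and both use Property~(\ref{prop: disjointness of P and Q}) to ensure $\pset'=\pset\setminus\rset$ is vertex-disjoint from $H'$ (and hence from $\rset^{\mathrm{new}}$). Your write-up is more explicit about the bookkeeping, but the idea is identical.
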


\begin{proof}
The key is that the paths in $\qset'$ are disjoint from the paths of $\pset'=\pset\setminus \rset$, and so the paths in $\pset'$ are disjoint from the graph $H'$. Assume for contradiction that $H'\setminus\set{e}$ contains a set $\pset^*$ of $N$ node-disjoint paths connecting vertices of $A'$ to vertices of $B'$. Then we could have defined $\pset$ as $\pset'\cup \pset^*$, and the resulting graph obtained by the union of the paths in $\pset$ and $\qset$ would have contained strictly fewer edges than the corresponding graph with the original definition of $\pset$, contradicting the definition of the sets $\pset,\qset$ of paths.
\end{proof}

 We need the following definition.

\begin{definition}
Given a graph $\hat H$, and two subsets $Y,Z$ of its vertices, with $|Y|=|Z|=r$ for some integer $r$, we say that $\hat H$ has the \emph{unique linkage property} with respect to $Y,Z$ iff there is a set $\rset$ of $r$ node-disjoint paths in $\hat H$ connecting every vertex of $Y$ to a distinct vertex of $Z$ (that we call a $(Y,Z)$-linkage), and moreover this set of paths is unique. We say that $\hat H$ has the \emph{perfect unique linkage property} with respect to $Y,Z$ iff additionally every vertex of $\hat H$ lies on some path of $\rset$ -- the unique $(Y,Z)$-linkage in $\hat H$.
\end{definition}

Recall that graph $H'$ is the union of the paths in $\rset$ and the paths in $\qset''$. Next, we will slightly modify the graph $H'$ by contracting some of its edges, so that the resulting graph has the perfect unique linkage property with respect to $A'$ and $B'$, while preserving Property~(\ref{prop: intersection}). 
We do so by performing the following two steps:

\begin{itemize}
	\item While there is an edge $e=(u,v)$ in $H'$ that belongs to a path of $\rset$ and to a path of $\qset$, contract edge $e$ by unifying $u$ and $v$; update the corresponding paths of $\rset$ and $\qset$ accordingly.
	\item While there is a vertex $u\in V(H')$ that lies on a path of $\qset''$, but does not belong to any path in $\rset$, contract any one of the (at most two) edges incident to $v$, by unifying $v$ with one of its neighbors; update the corresponding path of $\qset''$.	
\end{itemize}

Let $H''$ be the graph obtained at the end of this procedure.

\begin{observation}\label{obs: properties of H''}
 Graph $H''$ is a minor of $H$ and Property~(\ref{prop: intersection}) holds in $H''$. 
 Moreover, $H''$ has the perfect unique linkage property with respect to $A'$ and $B'$, with the unique linkage being $\rset$.
 \end{observation}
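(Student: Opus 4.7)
The plan is to verify the three assertions in turn. For the minor claim, $H'$ is a subgraph of $H$ by construction, and $H''$ is obtained from $H'$ by a sequence of edge contractions, so $H''$ is a minor of $H$. For Property~(\ref{prop: intersection}), edge contractions cannot destroy intersections of paths: if $Q\in\qset''$ and $R\in\rset_i$ share a vertex $v$ in $H'$, then the images of the updated $Q$ and updated $R$ in $H''$ still share (the image of) $v$, and the property carries over from $H'$ by construction.

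For the perfect unique linkage property, I would first handle existence and the ``perfect'' part. The updated paths of $\rset$ remain $N$ node-disjoint $A'$-to-$B'$ paths in $H''$: Step~1 only contracts edges whose two endpoints both lie on a single $\rset$-path (as the $\rset$-paths are node-disjoint in $H'$), and Step~2 merges a non-$\rset$ vertex with a neighbor, which cannot create an overlap between two distinct $\rset$-paths. Since Step~2 is iterated until no vertex outside $\bigcup_{R\in\rset}V(R)$ remains, every vertex of $H''$ lies on some updated $\rset$-path, which gives the ``perfect'' condition.

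The main obstacle is uniqueness of the linkage. Suppose for contradiction that $\rset^{**}$ is a size-$N$ $(A',B')$-linkage in $H''$ distinct from the updated $\rset$. Since $H''$ is a minor of $H'$, each $H''$-vertex corresponds to a connected bag in $H'$, and a standard minor-lifting argument yields a size-$N$ $(A',B')$-linkage $\rset^*$ in $H'$. Call an edge of $H'$ \emph{essential} if it lies on some path of $\rset$ but on no path of $\qset$; by Observation~\ref{obs: irrelevant edge}, every essential edge is used by every size-$N$ $(A',B')$-linkage in $H'$, and in particular by $\rset^*$. Step~1 contracts precisely those $\rset$-path edges that also lie on $\qset$-paths, and Step~2 contracts only edges incident to vertices outside $\bigcup_{R\in\rset}V(R)$, so neither step removes an essential edge. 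Hence the essential edges survive as edges of $H''$ and together form exactly the edges of the updated $\rset$. Consequently $\rset^{**}$ uses every edge of the updated $\rset$. Tracing from each $A'$-vertex along the forced updated $\rset$-edges --- using that each internal vertex of an updated $\rset$-path has both of its incident $\rset$-edges committed to the same $\rset^{**}$-path --- compels each path of $\rset^{**}$ to coincide with the corresponding updated $\rset$-path, so $\rset^{**}$ equals the updated $\rset$, the desired contradiction.
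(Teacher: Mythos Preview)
Your proof is correct and rests on the same key ingredient as the paper, namely Observation~\ref{obs: irrelevant edge}. The only difference is in how uniqueness is extracted from it: the paper argues that if $\rset'\neq\rset$ in $H''$ then (using the ``perfect'' property) some edge $e$ of $\rset$ is missing from $\rset'$, and since every surviving $\rset$-edge lies on no path of $\qset$, the lifted linkage in $H'\setminus\{e\}$ directly contradicts Observation~\ref{obs: irrelevant edge}; you instead show that \emph{every} essential edge is forced into the lifted linkage and then trace to conclude $\rset^{**}=\rset$. Both are valid, but the paper's single-edge contradiction avoids the tracing step and is a bit shorter.
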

 
 \begin{proof} 
 It is immediate to verify that $H''$ is a minor of $H'$ and hence of $H$, and that each of the two transformation steps preserve Property~(\ref{prop: intersection}). It is also immediate to verify that every vertex of $H''$ lies on some path of $\rset$, as otherwise we could execute the second step. 
 
 Finally, observe that for every edge $e\in E(H'')$ that lies on a path of $\rset$, $e$ cannot lie on any path of the original set $\qset$, since each such edge was contracted by the first step.  
Clearly, $\rset$ remains an $(A',B')$-linkage. Assume for contradiction that a different $(A',B')$-linkage $\rset'\neq \rset$ exists in $H''$. Then there is some edge $e$ that is used by the paths in $\rset$ but it is not used by the paths in $\rset'$. Therefore, $H''\setminus\set{e}$ contains an $(A',B')$-linkage of cardinality $N$, and so does $H'\setminus\set{e}$. As $e$ cannot lie on any path of the original set $\qset$, this contradicts Observation~\ref{obs: irrelevant edge}.
\end{proof}

Next, we define a new combinatorial object that will be central to this step: an $\hat M$-slicing of a set of paths.
Eventually, we will use this object to slice the paths of $\rset$. 

A convenient way to think about the $\hM$-slicing is that it will allow us to ``slice'' the pseudo-grid into $\hat M$ smaller such structures.
 
\begin{definition}
Suppose we are given a set $\hrset$ of node-disjoint paths, where for each path $R\in \hrset$, one of its endpoints $a(R)$ is designated as the first endpoint of $R$, and the other endpoint $b(R)$ is designated as the last endpoint of $R$. Given an integer  $\hat M>0$, an $\hM$-slicing $\Lambda=\set{\Lambda(R)}_{R\in\hrset}$ of $\hrset$ consists of a sequence $\Lambda(R)=(v_0(R),v_1(R),\ldots,v_{\hM}(R))$ of $\hM+1$ vertices of $R$, for every path $R\in \hrset$, such that $v_0(R)=a(R)$, $v_{\hM}(R)=b(R)$, and $v_0(R),v_1(R),\ldots,v_{\hM}(R)$ appear on $R$ in this order (we allow the same vertex to appear multiple times in $\Lambda(R)$.)
\end{definition}

Assume now that we are given some set $\hrset$ of node-disjoint paths, and another set $\hqset$ of node-disjoint paths, in some graph $\hat G$, such that each path $Q\in \hqset$ intersects at least one path of $\hrset$. Assume also that we are given an $\hM$-slicing $\Lambda=\set{\Lambda(R)}_{R\in \hrset}$ of $\hrset$.
 For all $1\leq i\leq \hM$, we denote by $\sigma_i(R)$ the sub-path of $R$ lying strictly between $v_{i-1}(R)$ and $v_i(R)$, so it excludes these two vertices (notice that it is possible that $\sigma_i(R)=\emptyset$ if $v_{i-1}(R)=v_i(R)$, or if they are consecutive vertices on $R$). For each $1\leq i\leq \hM$, let $\Sigma_i=\set{\sigma_i(R)\mid R\in \hrset}$, and let $\hqset_i\subseteq \hqset$ contain all paths $Q\in \hqset$ with the following property: for every path $R\in \hrset$, for every vertex $v\in Q\cap R$, $v\in \sigma_i(R)$. Equivalently:
 
 \[\hqset_i=\set{Q\in \hqset\mid (Q\cap \hrset)\subseteq \bigcup_{\sigma\in \Sigma_i}\sigma }.\]
 
 We say that the \emph{width} of the $\hM$-slicing $\Lambda$ with respect to $\hqset$ is $\hat w$ iff $\min_{1\leq i\leq \hM}\{|\hqset_i|\}=\hw$.
Notice that from our definition, for all $i\neq j$, $|\hqset_i\cap \hqset_j|=\emptyset$.
We now provide sufficient conditions for the existence of an $\hat M$-slicing of a given width.
 
\begin{theorem}\label{thm: slicing}
Let $\hat G$ be a graph, $\hat A,\hat B$ two sets of its vertices of cardinality $\hat N>0$ each, and assume that $\hat G$ has the perfect unique linkage property with respect to $(\hat A,\hat B)$, with the unique linkage denoted by $\hrset$. Assume that there is another set $\hqset$ of node-disjoint paths in $\hat G$, such that each path $Q\in \hqset$ intersects at least one path of $\hrset$, and integers $\hM,\hw>0$, such that $|\hqset|\geq \hM\hw+(\hM+1)\hN$. Then there is an $\hM$-slicing of $\hrset$ of width $\hw$ with respect to $\hqset$ in $\hat G$.
\end{theorem}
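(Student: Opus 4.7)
My plan is to convert the perfect unique linkage property into a family of ``sweep cuts'' of $\hat G$, and then place the slicing boundaries by a greedy sweep that balances how many $\hqset$-paths each strip captures.

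The main structural ingredient I will establish is a cut lemma: for any tuple $(u_R)_{R \in \hrset}$ with $u_R \in V(R)$, the set $C = \{u_R : R \in \hrset\}$ is an $(\hat A, \hat B)$ vertex cut in $\hat G$, and moreover $\hat G \setminus C$ contains no path from a vertex of some $R$ strictly before $u_R$ to a vertex of some $R'$ strictly after $u_{R'}$. This will be a standard rerouting argument: any such cross path could be spliced into $\hrset$ to yield a distinct $(\hat A, \hat B)$-linkage, contradicting uniqueness; the \emph{perfect} aspect of the linkage (every vertex of $\hat G$ lies on some $R$) keeps the splicing bookkeeping tractable. I expect this to be the main technical obstacle, since Menger's theorem alone only gives the existence of some min-cut, not a description of every cut of size $\hN$.

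Given the cut lemma, every $Q \in \hqset$ can be classified with respect to any cut $C = (u_R)_R$ as \emph{behind} (every vertex of $Q$ on each $R$ lies strictly before $u_R$), \emph{ahead} (symmetric), or \emph{hit} (some vertex of $Q$ equals some $u_R$); no ``crossing-without-hitting'' status is possible. By node-disjointness of $\hqset$ and $|C| = \hN$, at most $\hN$ paths of $\hqset$ are hit by any single cut, and ``behind'' is monotone under the componentwise partial order $C \leq C'$. Thus, for any chain $C_0 = \hat A \leq C_1 \leq \cdots \leq C_{\hM} = \hat B$, any $Q$ that avoids hitting every $C_k$ belongs to exactly one $\hqset_j$ (the unique index where its status switches from ahead to behind), so the unhit paths split cleanly across the strips and $\sum_j |\hqset_j| \geq |\hqset| - (\hM+1)\hN \geq \hM\hw$.

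I will place $C_1, \ldots, C_{\hM-1}$ greedily: for $j = 1, \ldots, \hM-1$, starting from $C := C_{j-1}$, I repeatedly advance one coordinate $u_R$ along $R$ by a single position, halting as soon as the count of $Q$'s that are ahead of $C_{j-1}$ and behind the current $C$ reaches $\hw$, then set $C_j := C$. Because a single one-step advance adds at most one new path to the behind-class (namely the unique $Q$, if any, that contains the vertex just crossed, by node-disjointness), the count increments by $0$ or $1$ per step and equals exactly $\hw$ when first attained. A short count shows the sweep always succeeds: just before step $j$, at least $|\hqset| - j\hN - (j-1)\hw \geq (\hM-j+1)(\hw+\hN)$ paths are neither hit by $C_0, \ldots, C_{j-1}$ nor allocated to strips $1, \ldots, j-1$, and by the cut property all such paths are ahead of $C_{j-1}$; subtracting the at most $\hN$ paths hit by $\hat B$ leaves at least $(\hM-j+1)(\hw+\hN) - \hN \geq 2\hw + \hN > \hw$ that are also behind $\hat B$, so the strip-$j$ count reaches $\hw$ strictly before $C$ would reach $\hat B$. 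After the greedy terminates, the $\hM - 1$ completed strips together consume exactly $(\hM-1)\hw$ paths, so strip $\hM$ receives at least $\hM\hw - (\hM-1)\hw = \hw$ of the remaining unhit paths, giving width $\hw$ as required.
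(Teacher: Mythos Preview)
Your overall architecture---find a one-parameter family of $\hN$-vertex cuts sweeping from $\hat A$ to $\hat B$, then greedily place $\hM-1$ boundaries so that each strip captures at least $\hw$ paths of $\hqset$---matches the paper exactly, and your counting in the last two paragraphs is essentially the paper's Observation~4.7 and the ensuing iteration. The gap is in the structural ingredient you propose: the cut lemma is \emph{false} as stated.

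It is not true that an arbitrary transversal $C=\{u_R:R\in\hrset\}$ separates ``before'' from ``after''. Take $\hrset=\{R_1,R_2\}$ with $R_1=a_1xy b_1$, $R_2=a_2zw b_2$, and a single extra edge $(x,w)$. One checks directly that this graph has the perfect unique linkage property (the only other candidate pairing $a_1\to b_2$, $a_2\to b_1$ fails because $a_2$--$z$ cannot reach $b_1$ once $w$ is used). Now set $u_{R_1}=y$, $u_{R_2}=z$, so $C=\{y,z\}$. Vertex $x$ lies strictly before $u_{R_1}$ on $R_1$, vertex $w$ lies strictly after $u_{R_2}$ on $R_2$, and the edge $(x,w)$ survives in $\hat G\setminus C$. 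The rerouting you have in mind does not help here: the cross edge $(x,w)$ lets you build $a_1$--$x$--$w$--$b_2$, but there is no companion $a_2$--$b_1$ path avoiding $x$ and $w$, so no alternative linkage arises and uniqueness is not contradicted. The same example shows that even a sweep starting from $\hat A$ can reach a non-separating tuple (advance $u_{R_2}$ all the way to $b_2$ while leaving $u_{R_1}=a_1$), so the problem is not merely with the ``for any tuple'' quantifier but with the absence of any rule telling you \emph{which} coordinate to advance.

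What the paper does instead is invoke a lemma of Robertson and Seymour: under perfect unique linkage there is a bijection $\mu:V(\hat G)\to\{1,\ldots,|V(\hat G)|\}$ that respects the orientation of each $R\in\hrset$ and such that, for every $t$, the set $S_t=\{\text{first vertex }v\text{ on }R\text{ with }\mu(v)\geq t:R\in\hrset\}$ separates $\{v:\mu(v)<t\}$ from $\{v:\mu(v)\geq t\}$. This gives you a \emph{specific} chain $S_1\leq S_2\leq\cdots$ of transversals, each guaranteed to be a cut, with $|S_{t+1}\setminus S_t|\leq 1$; from there your greedy argument goes through verbatim. The point is that the numbering lemma is doing real work---its proof builds an auxiliary partial order on $V(\hat G)$ whose acyclicity is itself proved by a rerouting argument, but a more delicate one than ``splice one cross path''---and cannot be replaced by the blanket claim that every transversal is a cut.
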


\begin{proof}
We use the following result of Robertson and Seymour (Lemma 2.5 from~\cite{robertson1983graph}); we note that the lemma appearing in~\cite{robertson1983graph} is somewhat weaker, but their proof immediately implies the stronger result that we state below; for completeness we include its proof in the Appendix.

\begin{lemma}
	\label{lemma: numbering}
Let $\hat G$ be a graph, $\hat A,\hat B\subseteq V(\hat G)$ two subsets of its vertices, such that $|\hat A|=|\hat B|$, and $\hat G$ has the perfect unique linkage property with respect to $(\hat A,\hat B)$, with the unique $(\hat A,\hat B)$-linkage denoted by $\hrset$. Then there is a bijection $\mu: V(\hat G)\rightarrow \set{1,\ldots,|V(\hat G)|}$ such that the following holds. For an integer $t>0$, let $S_t$ contain, for every path $R\in\hrset$, the first vertex $v$ on $R$ with $\mu(v)\geq t$; if no such vertex exists, then we add the last vertex of $R$ to $S_t$. Let $Y_t=\set{v\in V(\hat G) \mid \mu(v)<t}$ and $Z_t=\set{v \in V(\hat G)\mid\mu(v)\geq t}$. Then $\mu$ has the following properties:
\begin{itemize}
\item For each path $R\in \hrset$, for every pair $v,v'$ of its vertices, if $v'$ appears strictly before $v$ on  $R$, then  $\mu(v')<\mu(v)$; and
\item For every integer $0<t\leq |V(\hat G)|$, graph $\hat G\setminus S_t$ contains no path connecting a vertex of $Y_t$ to a vertex of $Z_t$.
\end{itemize}
\end{lemma}


We apply Lemma~\ref{lemma: numbering} to graph $\hat G$ and the sets $\hat A$ and $\hat B$ of its vertices, obtaining a bijection $\mu: V(\hat G)\rightarrow \set{1,\ldots,|V(\hat G)|}$. 


 Consider now an integer $1\leq t\leq |V(\hat G)|$, and the corresponding set $S_t$ of vertices, that we refer to as \emph{separator}. This separator contains exactly $\hN$ vertices -- one vertex from each path $R\in \hrset$.
Recall that $\hat G\setminus S_t$ contains no path connecting $Y_t=\set{v: \mu(v)<t}$ and $Z_t=\set{v:\mu(v)\geq t}$. 

We denote by $\qset^0(S_t)\subseteq \hqset$ the subset of all paths $Q\in \hqset$, such that $Q\cap S_t\neq \emptyset$, so $|\qset^0(S_t)|\leq \hN$. Let $R\in \hrset$ be some path with endpoints $a\in \hat A$, $b\in \hat B$, and let $v$ be the unique vertex of $R$ that belongs to $S_t$. Then $v$ defines two sub-paths of $R$, as follows: $R_1(S_t)$ is the sub-path of $R$ from $a$ to $v$ (including these two vertices), and $R_2(S_t)$ is similarly defined as the sub-path of $R$ from $v$ to $b$. If $S_t$ contains $b$, then $R_1(S_t)=R$ and $R_2(S_t)=(b)$. Let $\qset^1(S_t)\subseteq \hqset\setminus \qset^0(S_t)$ be the set of all paths $Q\in \hqset$, such that $Q$ intersects some path in $\set{R_1(S_t)\mid R\in \hrset}$, and let $\qset^2(S_t)$ be defined similarly for  $\set{R_2(S_t)\mid R\in \hrset}$.
Notice that equivalently, $\qset^1(S_t)$ contains all paths $Q\in \hqset$, such that $Q\cap Y_t\neq \emptyset$ and $Q\cap S_t=\emptyset$. Similarly, $\qset^2(S_t)$ contains all paths $Q\in \hqset$ with $Q\cap Z_t\neq \emptyset$ and $Q\cap S_t=\emptyset$. It is easy to verify that the paths in $\qset^1(S_t)$ are disjoint from $Z_t$, and in particular $\qset^1(S_t)\cap\qset^2(S_t)=\emptyset$: otherwise, there is some path $Q\in \hqset$, that contains a vertex of $Y_t$ and a vertex of $Z_t$, such that $Q\cap S_t=\emptyset$, contradicting the fact that $Y_t$ and $Z_t$ are separated in $\hat G\setminus S_t$.
Notice that, since every path of $\hqset$ intersects at least one path of $\hrset$, $(\qset^0(S_t),\qset^1(S_t),\qset^2(S_t))$  define a partition of $\hqset$.



The following observation will be useful in order to construct the $\hM$-slicing. 

\begin{observation}\label{obs: properties of the cuts}
The sets $\{\qset^1(S_t)\}_{t\ge 1}$ of paths satisfy the following properties:	
	
\begin{enumerate}	
\item $\qset^1(S_{1})=\emptyset$, and $\qset^1(S_{|V(\hat G)|})$ contains all but at most $|\hrset|$ paths of $\hqset$ -- the paths that intersect the vertices of $\hat B$; \label{first set is empty}

\item For all $1\leq t<t'\leq |V(\hat G)|$, $\qset^1(S_t)\subseteq \qset^1(S_{t'})$; and \label{containment of consecutive sets}

\item For all $1\leq t< |V(\hat G)|$, $|\qset^1(S_{t+1})\setminus \qset^1(S_t)|\le 1$. \label{small diffs}
\end{enumerate}
\end{observation}
\begin{proof}
From the definition of $S_1$, it  contains the first vertex from each path of $\hrset$, so $\qset^1(S_{1})=\emptyset$. 
On the other hand, $S_{|V(\hat G)|}$ contains the last vertex from each path of $\hrset$, and, since every path of $\hqset$ intersects some path of $\hrset$, $\qset^1(S_{|V(\hat G)|})$ contains all paths of $\hqset$ except those intersecting the vertices of $\hat B$. This proves the first assertion. For the second assertion, recall that set $\qset^1(S_t)$ contains all paths $Q\in \hqset$ that intersect $Y_t$ but are disjoint from $S_t$, and a path of $\qset^1(S_t)$ may not contain a vertex of $Z_t$. Each such path must intersect $Y_{t'}$, since $Y_t\subseteq Y_{t'}$, and it must be disjoint from $S_{t'}$, as $S_{t'}\subseteq S_t\cup Z_t$. Therefore, $\qset^1(S_t)\subseteq \qset^1(S_{t'})$ must hold.



We now turn to prove the third assertion. Let $Q\in \qset^1(S_{t+1})\setminus\qset^1(S_t)$.
Since $Q\not\in \qset^1(S_t)$, either (i) $Q\cap S_t\neq \emptyset$, or (ii) $Q\cap S_t=\emptyset$ and $Q\cap Y_t=\emptyset$. If the former is true, then we say that $Q$ is a type-1 path; otherwise we say that $Q$ is a type-2 path. Note that $|S_t\setminus S_{t+1}|\leq 1$, from the definition of the sets $S_{t''}$, and since $\mu$ is a bijection. Since the paths in $\hqset$ are node-disjoint, there is at most one path of type 1 that belongs to $\qset^1(S_{t+1})$. We claim that no path of type 2 may belong to $\qset^1(S_{t+1})$. 

Indeed, if $Q\in \qset^1(S_{t+1})$ is a type-2 path, then it must contain some vertex $v\in Y_{t+1}$. Assume that $v$ lies on some path $R\in \rset$. Then $\mu(v)<t+1$. If $\mu(v)<t$, then $v\in Y_t$, contradicting the fact that $Q$ is of type $2$. Therefore, $\mu(v)=t$. However, since $Q$ is of type $2$, $v\not\in S_t$, which can only happen if some vertex $v'$ that lies on the path $R$ strictly before $v$ has $\mu(v')\geq t$, and in particular $\mu(v')\geq \mu(v)$. But that is impossible from the properties of $\mu$.
%
%
%
\end{proof}

We now provide an algorithm to compute the $\hM$-slicing.
The algorithm performs $\hM-1$ iterations, where at the end of iteration $i$ we produce an integer $1\leq t_i\leq |V(\hat G)|$, and an $(i+1)$-slicing $\set{\Lambda(R)}_{R\in \hrset}$ of $\hrset$, such that the width of the slicing with respect to $\hqset$ is at least $\hw$, and the following additional properties hold:

\begin{itemize}
	\item  $|\hqset_{i+1}|\geq |\hqset|-(i+2)\hN-i\hw$; and
	\item for each path $R\in \rset$, the vertex $v_i(R)\in \Lambda(R)$ is the unique vertex of $S_{t_i}\cap R$.
\end{itemize}

Notice that the above properties ensure that $\hqset_{i+1}$ contains all paths of $\qset^2(S_{t_i})$, except for at most $\hat N$ paths, that contain the last endpoints of the paths in $\hrset$.

Since we assumed that $|\hqset|\geq \hM\hw+(\hM+1)\hN$, after $(\hM-1)$ iterations we obtain a valid $\hM$-slicing of $\hrset$ of width at least $\hw$.
 
In order to execute the first iteration, we 
let $t_1>0$ be an integer, for which $|\qset^1(S_{t_1})|= \hw+\hN$. Such an integer must exist from Observation~\ref*{obs: properties of the cuts}.  For all $R\in \hrset$, we let $v_1(R)$ be the unique vertex of $R$ lying in $S(t_1)$, and $v_0(R), v_2(R)$ the endpoints of $R$ lying in $\hat A$ and $\hat B$, respectively. This immediately defines a $2$-slicing of the paths in $\hrset$. Recall that for each path $R\in \hrset$, we obtain two segments: $\sigma_1(R)$, that is obtained from $R_1(S_{t_1})$ by removing its two endpoints, and $\sigma_2(R)$, obtained similarly from $R_2(S_{t_1})$. It is immediate to verify that set $\hqset_1$ of paths associated with this slicing contains every path $Q\in\qset^1(S_{t_1})$, except for those paths that contain the first endpoints of the paths in $\hrset$. Therefore, $|\hqset_1|\geq |\qset^1(S_{t_1})|-\hN\geq \hw$. Similarly, set $\hqset_2$ of paths associated with this slicing contains every path $Q\in \qset^2(S_{t_1})$, except for those paths that contain the last endpoints of the paths in $\hrset$. Therefore, $|\hqset_2|\geq |\hqset|-|\qset^0(S_{t_1})|-|\qset^1(S_{t_1})|-|\hrset|\geq |\hqset|-3\hN-\hw$, as required.

We now fix some $1\le i<\hM-1$, and describe the $(i+1)$th iteration. We assume that we are given an $(i+1)$-slicing  of $\hrset$, with $\set{v_{i}(R)\mid R\in \hrset}=S_{t_i}$, and $|\hqset_{i+1}|\geq |\hqset|-(i+2)\hN-i\hw\geq \hN+2\hw$.
 
Let $t_{i+1}$ be the integer, for which $ |\qset^1(S_{t_{i+1}})\cap \hqset_{i+1}|=\hw$.
Such an integer must exist from Observation~\ref{obs: properties of the cuts}, since $|\hqset_{i+1}|\geq 2\hw+\hN$. Moreover, we are guaranteed that $t_{i+1}>t_i$, since $\hqset_{i+1}\subseteq \qset^2(S_{t_i})$.

For convenience, we denote $\qset^0(S_{t_i}),\qset^1(S_{t_i})$ and $\qset^2(S_{t_i})$ by $\qset^0,\qset^1$ and $\qset^2$, respectively.

For every path $R\in \hrset$, vertices $v_0(R),\ldots,v_i(R)$ remain the same as before. We let $v_{i+1}(R)$ be the unique vertex of $R\cap S_{t+1}$, and we let $v_{i+2}(R)$ be the endpoint of $R$ lying in $\hat B$. We now obtain a new $(i+2)$-slicing $\Lambda'=\set{\Lambda'(R)}_{R\in \hrset}$, 
where for each path $R\in \hrset$, $\Lambda'(R)=(v_0(R),\ldots,v_{i+1}(R))$. For convenience, let $\hqset'_{i+1}$ denote the original set $\hqset_{i+1}$, and let $\hqset_{i+1}$ denote the new set, defined with respect to the new slicing. Then $\hqset_{i+1}$ contains all paths of $\qset^1\cap \hqset'_{i+1}$, and so, from the definition of $t_{i+1}$, $|\hqset_{i+1}|= \hw$. Set $\hqset_{i+2}$ contains all paths of $\hqset'_{i+1}\setminus\hqset_{i+1}$, except for the paths containing the vertices of $\qset^0$ -- there are at most $\hN$ such paths. Therefore, $|\hqset_{i+2}|\geq |\hqset'_{i+1}|-|\hqset_{i+1}|-\hN\geq |\hqset|-(i+2)\hN-i\hw-\hw-\hN\geq |\hqset|-(i+3)\hN-(i+1)\hw$, as required.
 \end{proof}

 Let $M=8g^4\log g$. From Theorem~\ref{thm: slicing}, we can obtain an $M$-slicing $\Lambda=\set{\Lambda(R)}_{R\in \rset}$ of $\rset$, of width $w=2^{11}g^6$ with respect to $\qset''$, since $N\leq g^2 D=64g^6$, and so $|\qset''|\geq \kappa/8\geq 2^{19}g^{10}\log g\geq M(2N+w)$. For every $1\leq i\leq M$, we denote by $\Sigma_i=\set{\sigma_i(R)\mid R\in \rset}$.
 We denote the subset $\hqset_i\subseteq \qset''$ of paths corresponding to $\Sigma_i$ by $\qset_i$. We call $(\Sigma_i,\qset_i)$ the \emph{$i$th slice of $\Lambda$}.

 
 

\subsection{Step 3: Intersecting Path Sets}
We start by defining $(\hw,\hD)$-intersecting pairs of path sets.

\begin{definition}
	Let $\hrset$, $\hqset$ be two sets of node-disjoint paths in a graph $\hat G$. Given integers $\hw,\hD>0$, we say that $(\hrset,\hqset)$ is a $(\hw,\hD)$-intersecting pair of path sets iff each path $R\in \hrset$ intersects at least $\hw$ distinct paths of $\hqset$, and each path $Q\in \hqset$ intersects at least $\hD$ distinct paths of $\hrset$.
\end{definition}

\begin{lemma}\label{lemma: intersecting sets}
	Let $\hrset$, $\hqset$ be two sets of node-disjoint paths in a graph $\hat G$, and let $\hw,\hD>0$ be integers. Assume that each path $Q\in \hqset$ intersects at least $2\hD$ distinct paths of $\hrset$, and that $|\hqset|\geq 2|\hrset|\hw/\hD$. Then there is a partition $(\hrset',\hrset'')$ of  $\hrset$, and a subset $\hqset'\subseteq \hqset$ of paths, such that $(\hrset',\hqset')$ is a $(\hw,\hD)$-intersecting pair of path sets; $|\hqset'|\geq |\hqset|/2$; and every path in $\hrset''$ intersects at most $\hw$ paths of $\hqset'$.
\end{lemma}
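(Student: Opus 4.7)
My plan is an iterative peeling algorithm. I would initialize $\hrset' = \hrset$, $\hrset'' = \emptyset$, and $\hqset' = \hqset$, and then repeatedly apply the following two rules, prioritizing the first: (i) if some path $R \in \hrset'$ intersects fewer than $\hw$ paths of the current $\hqset'$, move $R$ from $\hrset'$ to $\hrset''$; (ii) otherwise, if some path $Q \in \hqset'$ intersects fewer than $\hD$ paths of the current $\hrset'$, delete $Q$ from $\hqset'$. When neither rule applies, $(\hrset', \hqset')$ is a $(\hw, \hD)$-intersecting pair by construction. Since $\hqset'$ only shrinks over time, every $R \in \hrset''$ intersects the final $\hqset'$ in at most as many paths as it did at the moment of its removal---fewer than $\hw$---establishing the third required property.

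The substantive content is showing $|\hqset'| \geq |\hqset|/2$. Working in the bipartite intersection graph on $\hrset \cup \hqset$, let $|E|$ be the number of intersecting pairs $(R,Q)$, so $|E| \geq 2\hD|\hqset|$ by hypothesis. Let $r'' = |\hrset''|$ and $q' = |\hqset'|$ at termination, and let $I_{\mathrm{final}}$ be the number of surviving intersecting pairs with $R \in \hrset'$ and $Q \in \hqset'$. A rule-(i) step destroys at most $\hw - 1$ currently-active pairs and a rule-(ii) step destroys at most $\hD - 1$, and each destroyed pair is destroyed exactly once, giving
$$|E| - I_{\mathrm{final}} \;\leq\; (\hw - 1)\, r'' + (\hD - 1)(|\hqset| - q').$$

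This incidence inequality on its own is not enough, and the main obstacle is producing a matching upper bound on $I_{\mathrm{final}}$ that absorbs the $(\hD - 1)(|\hqset|-q')$ term on the right. The key observation is that the hypothesis actually gives more than the $\hD$ needed to bound removed pairs: each removed $Q$ contributes at least $2\hD$ incidences to $|E|$ but none to $I_{\mathrm{final}}$, so $I_{\mathrm{final}} \leq |E| - 2\hD(|\hqset| - q')$. Plugging this into the previous inequality and cancelling $|E|$ collapses everything to $(\hD + 1)(|\hqset| - q') \leq \hw r$. Combined with the hypothesis $|\hqset| \geq 2r\hw/\hD$, this yields $|\hqset| - q' \leq |\hqset|/2$, completing the argument.
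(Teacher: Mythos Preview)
Your proof is correct and follows essentially the same approach as the paper: the identical greedy peeling procedure, followed by an incidence-counting argument to bound $|\hqset \setminus \hqset'|$. The paper's accounting is slightly more direct---it simply observes that each removed $Q$ had at least $\hD$ of its original $\geq 2\hD$ incidences absorbed by prior $R$-removals, while each $R$-removal absorbs fewer than $\hw$ such incidences, giving $\hD\,|\hqset\setminus\hqset'| \le \hw\,|\hrset|$ immediately---whereas you route the same charge through $|E|$ and $I_{\mathrm{final}}$; but the substance is identical.
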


\begin{proof}
	We start with $\hrset'=\hrset$ and $\hqset'=\hqset$, and then iterate, by performing one of the following two operations as long as possible:
	
	\begin{itemize}
		\item If there is a path $R\in \hrset'$ intersecting fewer than $\hw$ distinct paths of $\hqset'$, delete $R$ from $\hrset'$.
		
		\item If there is a path $Q\in \hqset'$ intersecting fewer than $\hD$ distinct paths of $\hrset'$, delete $Q$ from $\hqset'$.
		 
	\end{itemize}

Clearly, when the algorithm terminates, $(\hrset',\hqset')$ are a $(\hw,\hD)$-intersecting pair of path sets, and each path in $\hrset''=\hrset\setminus\hrset'$ intersects at most $\hw$ paths of $\hqset'$. It now remains to prove that $|\hqset'|\geq |\hqset|/2$.

Let $\Pi\subseteq \hrset\times\hqset$ be the set of all pairs $(R,Q)$ of paths, such that $R\cap Q\neq\emptyset$. We call each pair $(R,Q)\in \Pi$ an \emph{intersection}. When a path $R$ is deleted from $\hrset'$, it participates in at most $\hw$ intersections. We say that $R$ is \emph{responsible} for these intersections, and that these intersections are deleted due to $R$. Overall, all paths of $\hrset\setminus\hrset'$ may be responsible for at most $|\hrset|\hw$ intersections.

Consider now some path $Q\in \hqset\setminus\hqset'$. Originally, $Q$ intersected at least $2\hD$ paths of $\hrset$, but at the time it was removed from $\hqset'$ it intersected at most $\hD$ such paths. Therefore, at least $\hD$ of its intersections were removed, and these intersections must have been removed due to paths in $\hrset\setminus\hrset'$. Therefore, $|\hqset\setminus\hqset'|\leq |\hrset|\hw/\hD\leq \hqset/2$.
\end{proof}

For each $1\leq i\leq M$, We apply Lemma~\ref{lemma: intersecting sets} to sets $(\Sigma_i,\qset_i)$ of paths, with parameters $\hw=4g^2$ and $\hD=D/2$. Notice that $|\Sigma_i|\leq |\rset|= N\leq Dg^2$, while $|\qset_i|\geq 2^{11}g^6$. It is then easy to verify that $|\qset_i|\geq 2|\Sigma_i|\hat w /\hat D=16|\Sigma_i|g^2/D$. Recall that each path $Q\in\qset_i$ intersects at least $D$ paths of $\Sigma_i$. Therefore, we obtain a partition $(\Sigma'_i,\Sigma''_i)$ of $\Sigma_i$, and a subset, $\qset'_i\subseteq\qset_i$ of paths, such that $(\Sigma'_i,\qset'_i)$ is a $(4g^2,D/2)$-intersecting pair of path sets, $|\qset'_i|\geq 2^{10}g^6$, and each path of $\Sigma_i''$ intersects at most $4g^2$ paths of $\qset_i'$. For convenience, we provide the list of the main parameters of the current section in Subsection~\ref{subsec: params weaker}.

From now on, the remainder of the proof will only use the following facts. We are given a set $\rset$ of $N$ disjoint paths, where $D\leq N\leq Dg^2$ and $D=64g^4$. We are also given a slicing $\Lambda$ of $\rset$ into $M=8g^4\log g$ slices. For each $1\leq i\leq M$, set $\Sigma_i$ contains the $i$th segment $\sigma_i(R)$ of every path $R\in \rset$. We are also given a subset $\Sigma'_i\subseteq \Sigma_i$, and another set $\qset_i'$ of node-disjoint paths, such that $(\Sigma'_i,\qset_i')$ are $(4g^2,D/2)$-interesecting. The paths in $\qset_i'$ are disjoint from all paths in $\bigcup_{j\neq i}(\qset_j'\cup \Sigma_j)$. The remainder of the proof only uses these facts, and in particular it does not depend on the initial value of the parameter $\kappa$ or on the cardinalities of sets $\qset'_i$ of paths. We will use this fact in Section~\ref{sec: stronger structural theorem} when we prove the stronger version of Theorem~\ref{thm: main for building crossbar}.

 \subsection{Step 4: Well-Linked Decomposition}\label{subsec: WLD}
 In this step, we need to use a slightly weakened definition of edge-well-linkedness, that was also used in previous work.
 
 \begin{definition}
 	Let $\hG$ be a graph, $T$ a subset of its vertices, and $\hw>0$ an integer. We say that $T$ is $\hw$-weakly well-linked in  $\hG$, iff for any two disjoint subsets $T',T''$ of $T$, there is a set of $\min\set{|T'|,|T''|,\hat w}$ edge-disjoint paths in $\hG$, connecting vertices of $T'$ to vertices of $T''$.
 \end{definition}
 
 The following observation is immediate.
 
 \begin{observation}\label{obs: weak to edge wl}
 	Let $\hG$ be a graph, $T$ a subset of its vertices, and $\hw>0$ an integer, such that $|T|\leq 2\hw$. Assume further that $T$ is $\hw$-weakly-well-linked in $\hG$. Then $T$ is edge-well-linked in $\hG$.
 \end{observation}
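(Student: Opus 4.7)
The plan is to show that the cardinality hypothesis $|T|\le 2\hw$ is exactly what is needed to make the cap $\hw$ in the definition of $\hw$-weak-well-linkedness vacuous, so that the two notions coincide in this regime.

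First, I would fix an arbitrary pair $T',T''$ of disjoint subsets of $T$ and compare the two quantities $\min\{|T'|,|T''|\}$ and $\min\{|T'|,|T''|,\hw\}$. Since $T'$ and $T''$ are disjoint subsets of $T$, we have $|T'|+|T''|\le |T|\le 2\hw$. Assuming without loss of generality that $|T'|\le |T''|$, this gives $2|T'|\le |T'|+|T''|\le 2\hw$, hence $|T'|\le \hw$. Therefore $\min\{|T'|,|T''|\}=|T'|\le \hw$, which forces $\min\{|T'|,|T''|,\hw\}=\min\{|T'|,|T''|\}$.

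Next, I would apply the hypothesis that $T$ is $\hw$-weakly-well-linked in $\hG$ to the pair $(T',T'')$: it produces a collection of $\min\{|T'|,|T''|,\hw\}$ edge-disjoint paths in $\hG$ connecting $T'$ to $T''$. By the equality just established, this collection has size exactly $\min\{|T'|,|T''|\}$, which is precisely what the definition of edge-well-linkedness demands. Since the pair $(T',T'')$ was arbitrary, this shows that $T$ is edge-well-linked in $\hG$.

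There is no real obstacle here: the statement is essentially a tautology once the definitions are unpacked, and the only observation needed is the elementary pigeonhole-style inequality $\min\{|T'|,|T''|\}\le |T|/2\le \hw$. The proof is short enough that it requires no additional machinery from earlier in the paper.
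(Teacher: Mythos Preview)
Your proof is correct and is exactly the intended argument; the paper itself simply labels the observation as ``immediate'' without spelling out any details, and the pigeonhole inequality $\min\{|T'|,|T''|\}\le |T|/2\le \hw$ that you give is precisely what makes it so.
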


 We will repeatedly use the following simple observation.
 \begin{observation}\label{obs: cut from wl}
 	Let $\hG$ be a graph,  $T$ a subset of its vertices, and $\hw>0$ an integer. Assume that $T$ is {\bf not} $\hat w$-weakly well-linked in $\hG$. Then there is a partition $(X,Y)$ of $V(\hG)$, such that $|E(X,Y)|< \min\set{\hw, |T\cap X|,|T\cap Y|}$.
 \end{observation}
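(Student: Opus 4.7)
The plan is to derive the cut directly from the edge version of Menger's theorem. Since $T$ is not $\hat w$-weakly well-linked in $\hat G$, by definition there exist disjoint subsets $T', T'' \subseteq T$ such that the maximum number of edge-disjoint paths in $\hat G$ connecting $T'$ to $T''$ is strictly less than $\min\{|T'|, |T''|, \hat w\}$. Call this quantity $\mu$. The goal is to convert this bound on edge-disjoint paths into a small edge cut separating $T'$ from $T''$, and then interpret that cut as the desired partition $(X, Y)$.

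First I would invoke Menger's theorem (in its edge form, or equivalently max-flow min-cut applied to the standard construction where one adds a super-source joined to $T'$ and a super-sink joined to $T''$ by edges of infinite capacity, while the edges of $\hat G$ have unit capacity). This gives an edge set $F \subseteq E(\hat G)$ with $|F| = \mu < \min\{|T'|, |T''|, \hat w\}$ whose removal disconnects $T'$ from $T''$ in $\hat G$. Let $X$ be the union of the connected components of $\hat G \setminus F$ that contain at least one vertex of $T'$, and let $Y = V(\hat G) \setminus X$. Then $T' \subseteq X$, $T'' \subseteq Y$ (since no component of $\hat G \setminus F$ can meet both $T'$ and $T''$), and every edge of $F$ crosses the cut, so $E(X, Y) \subseteq F$ and hence $|E(X, Y)| \leq |F| < \min\{|T'|, |T''|, \hat w\}$.

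Finally I would observe that $|T \cap X| \geq |T'|$ and $|T \cap Y| \geq |T''|$, since $T' \subseteq T \cap X$ and $T'' \subseteq T \cap Y$. Combining this with the previous inequality yields
\[
|E(X, Y)| < \min\{|T'|, |T''|, \hat w\} \leq \min\{|T \cap X|, |T \cap Y|, \hat w\},
\]
which is exactly the desired bound. There is no real obstacle here; the only minor care needed is to make sure the partition $(X, Y)$ covers \emph{all} vertices of $\hat G$ (rather than just those reachable from $T'$), which is handled by defining $Y$ as the complement of $X$ and noting that this only removes edges from the cut, never adds them.
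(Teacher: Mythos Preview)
Your proof is correct and follows essentially the same route as the paper: negate the definition to obtain witnessing sets $T',T''$, apply the edge form of Menger's theorem to obtain a small edge cut $F$ separating them, take $X$ to be the $T'$-side and $Y$ its complement, and then use $T'\subseteq T\cap X$, $T''\subseteq T\cap Y$ to pass from $\min\{|T'|,|T''|,\hw\}$ to $\min\{|T\cap X|,|T\cap Y|,\hw\}$. One tiny wording slip: ``every edge of $F$ crosses the cut, so $E(X,Y)\subseteq F$'' has the implication backwards---what you actually need (and what holds by your construction of $X$ as a union of components of $\hG\setminus F$) is that every edge crossing the cut lies in $F$; this gives $E(X,Y)\subseteq F$ directly.
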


\begin{proof}
	Since $T$ is not $\hw$-weakly well-linked in $\hG$, there are two disjoint subsets $T',T''$ of $T$, such that the largest number of edge-disjoint paths connecting vertices of $T'$ to vertices of $T''$ contains fewer than $b=\min\set{\hw, |T'|,|T''|}$ paths. From Menger's Theorem, there is a set $E'$ of at most $b-1$ edges, such that $\hG\setminus E'$ contains no path connecting a vertex of $T'$ to a vertex of $T''$. Therefore, there is a partition $(X,Y)$ of $\hG$, with $E(X,Y)\subseteq E'$, $T'\subseteq X$, and $T''\subseteq Y$. It is easy to verify that this partition has the required properties.
\end{proof}

 Let $\hat G$ be a graph, and let $\hat \Sigma$ be a set of node-disjoint paths in $\hat G$. 
 Given a sub-graph $C\subseteq \hat G$, we denote by $\hat\Sigma(C)$ the set of all paths $\sigma\in \hat \Sigma$, such that $\sigma\subseteq C$, and we denote by $\Gamma(C)$ the set of endpoints of all paths in $\hat \Sigma(C)$. We sometimes refer to sub-graphs $C\subseteq \hat G$ as \emph{clusters}.
 
 Given two parameters, $\hat w$ and $\hD$, we say that cluster $C$ is \emph{good} iff $\Gamma(C)$ is $\hat w$-weakly well-linked in $C$. We say that it is \emph{happy}, if it is good, and additionally, $|\hat \Sigma(C)|\geq \hat D$.
 Following is the main theorem of this step.
 
 \begin{theorem}\label{thm: wld}
 Let $\hat G$ be a graph, and let $\hat \Sigma$ and $\hat \qset$ be two sets of node-disjoint paths in $\hat G$ (but a path in $\hat \Sigma$ and a path in $\hat \qset$ may intersect). Let $\hat w,\hat D>0$ be integers, such that $(\hat \Sigma,\hat \qset)$ are $(4\hat w,2\hat D)$-intersecting, and $\hD\geq 8\hw$. Then there is a collection $\cset$ of disjoint sub-graphs of $\hat G$, and a subset $\hat\Sigma'\subseteq \hat \Sigma$, such that:
 
 \begin{itemize}
 \item Each cluster $C\in \cset$ is happy (that is, $|\hat\Sigma(C)|\geq \hat D$ and $\Gamma(C)$ is $\hw$-weakly well-linked in $C$);
 \item $|\hat \Sigma'|\geq |\hat \Sigma|/4$; and
 \item every path $\sigma\in \hat \Sigma'$ belongs to a set $\hat \Sigma(C)$ for some $C\in \cset$.
 \end{itemize}
 \end{theorem}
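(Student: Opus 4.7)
The plan is to prove Theorem~\ref{thm: wld} by a standard iterative well-linked decomposition, paired with a two-pronged accounting that uses both $\hat D \ge 8\hat w$ and the $(4\hat w, 2\hat D)$-intersecting hypothesis.

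First, the algorithm. Initialize $\cset=\{\hat G\}$ and repeatedly perform the following splitting step: while there exists $C\in\cset$ with $|\hat\Sigma(C)|\geq \hat D$ and with $\Gamma(C)$ not $\hat w$-weakly well-linked in $C$, apply Observation~\ref{obs: cut from wl} to $C$ with terminal set $\Gamma(C)$ and threshold $\hat w$ to obtain a partition $(X,Y)$ of $V(C)$ with $|E_C(X,Y)|<\min\{\hat w,|\Gamma(C)\cap X|,|\Gamma(C)\cap Y|\}$, and replace $C$ in $\cset$ by the induced sub-clusters $C[X]$ and $C[Y]$. Since every step strictly decreases $\sum_{C\in\cset}|\hat\Sigma(C)|^2$ (both new sides contain at least one $\Gamma$-terminal, so the quadratic potential drops), the process terminates. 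At termination every $C\in\cset$ is either \emph{happy} (well-linked and $|\hat\Sigma(C)|\geq \hat D$) or \emph{small} ($|\hat\Sigma(C)|<\hat D$). We then set $\cset$ to be the happy clusters and $\hat\Sigma'=\bigcup_{C\in\cset}\hat\Sigma(C)$; this immediately yields the first and third bullets of the theorem, and the bulk of the work is proving $|\hat\Sigma'|\geq |\hat\Sigma|/4$.

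For the cardinality estimate, write $L$ for the number of paths of $\hat\Sigma$ that are not contained in any final cluster (\emph{lost} during some split), and $S$ for the total number of paths that land in small clusters at termination, so that $|\hat\Sigma'|=|\hat\Sigma|-L-S$; the goal becomes $L+S\leq 3|\hat\Sigma|/4$. For $L$, each split destroys at most $|E_C(X,Y)|<\hat w$ paths of $\hat\Sigma(C)$, because the paths of $\hat\Sigma$ are node-disjoint, hence edge-disjoint, and a path is lost only if it uses some cut edge. To bound the number of splits, I would use the following amortized argument: since we only split clusters of size at least $\hat D$, each split can be charged to the $\geq \hat D-(\hat w-1)\geq 7\hat w$ paths that survive it, and each surviving path can be charged at most $O(\log(|\hat\Sigma|/\hat D))$ times before landing in a happy or small leaf, yielding a bound of the form $\#\text{splits}\leq O(|\hat\Sigma|/\hat D)$ and therefore $L\leq O(\hat w|\hat\Sigma|/\hat D)\leq |\hat\Sigma|/8$ from $\hat D\geq 8\hat w$.

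For $S$, I would invoke the $(4\hat w,2\hat D)$-intersecting hypothesis as follows. A small cluster $C$ holds at most $\hat D-1$ paths of $\hat\Sigma$, so any path $Q\in\hat\qset$ meeting some $\sigma\in\hat\Sigma(C)$ must also meet at least $2\hat D-(\hat D-1)\geq \hat D+1$ paths of $\hat\Sigma$ outside $C$; hence $Q$ is forced to exit $C$, and every such exit occurs along an edge of one of the cuts $E_{C'}(X,Y)$ produced along the root-to-leaf path in the split tree leading to $C$. Counting incidences between $\hat\qset$ and the paths in small clusters, and charging each such incidence to a cut edge, gives $4\hat w\cdot S\leq \hat w\cdot\#\text{splits}\cdot (\text{intersections per cut edge})$, which combined with the bound on $\#\text{splits}$ above yields $S\leq |\hat\Sigma|/2$. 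Putting the two pieces together gives $L+S\leq 3|\hat\Sigma|/4$ as required.

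The main obstacle is the charging argument controlling $\#\text{splits}$ and $S$: the cuts returned by Observation~\ref{obs: cut from wl} can be arbitrarily unbalanced (one side may contain only $|E(X,Y)|+1$ terminals), so the naive sparsity bound alone does not bound the depth or the breadth of the split tree. Circumventing this requires exploiting both the slack $\hat D\geq 8\hat w$ (so that a big cluster remains ``fat'' under a single split) and the intersecting hypothesis (so that small clusters cannot proliferate without consuming cut capacity); getting the constants to line up so that $L+S\leq 3|\hat\Sigma|/4$ is where the careful bookkeeping lives.
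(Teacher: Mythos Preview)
Your decomposition is the right shape, but the two quantitative bounds you need do not follow from the arguments you sketch.

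For the bound on $\#\text{splits}$: you yourself note that the cuts from Observation~\ref{obs: cut from wl} can be arbitrarily unbalanced, yet your charging claims each surviving path is charged only $O(\log(|\hat\Sigma|/\hat D))$ times along its root-to-leaf path. That logarithmic depth bound is precisely what fails under unbalanced splits; a path can survive an unbounded number of splits in which the other side peels off a single $\sigma$. Even granting the log factor, the arithmetic you wrote gives $\#\text{splits}=O(|\hat\Sigma|\log(\cdot)/\hat w)$, not $O(|\hat\Sigma|/\hat D)$, so $L\le |\hat\Sigma|/8$ does not follow. For the bound on $S$, the inequality ``$4\hat w\cdot S\le\ldots$'' overcounts: different paths $\sigma$ in the same small cluster may be hit by the same $Q\in\hat\qset$, so the $4\hat w$ witnesses per $\sigma$ are not distinct across $\sigma$'s, and you cannot charge $4\hat w\cdot S$ incidences to cut edges.

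The paper sidesteps both issues with one idea. It splits on well-linkedness alone (no size threshold), so at termination every cluster is well-linked. Letting $r$ be the final number of clusters, the total cut is $|E'|\le (r-1)\hat w$, so at least $r/2$ clusters $C$ have $|\operatorname{out}(C)|<4\hat w$. Now the $(4\hat w,2\hat D)$-intersecting hypothesis is used in the \emph{opposite direction} from yours: any such $C$ contains some $\sigma$, hence some $Q\in\hat\qset$ hitting $\sigma$ is trapped in $C$ (its $\ge 4\hat w$ witnesses cannot all escape through $<4\hat w$ boundary edges), and then the $\ge 2\hat D$ paths of $\hat\Sigma$ hit by that $Q$ are, up to $<4\hat w$ of them, also trapped in $C$. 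Thus every small-boundary cluster is automatically happy. This gives $r\le 2|\hat\Sigma|/\hat D$ a posteriori (since $\ge r/2$ clusters each hold $\ge\hat D$ paths), hence $L\le r\hat w\le |\hat\Sigma|/2$, and since happy clusters outnumber unhappy ones and each holds at least as many paths, $|\hat\Sigma'|\ge |\hat\Sigma|/4$. The key move you are missing is to use the intersecting property to show small boundary $\Rightarrow$ many paths, rather than trying to bound small clusters directly.
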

 
 \begin{proof}
Throughout the algorithm, we maintain a set $\cset$ of disjoint clusters of $\hat G$. Recall that $\hat \Sigma(C)$ contains all paths $\sigma\in \hat \Sigma$, such that $\sigma\subseteq C$. At the beginning, $\cset$ contains a single cluster $\hG$, and $\hat \Sigma(\hG)=\hat \Sigma$. We also maintain a set $E'$ of edges that we have deleted, that is initialized to $\emptyset$. The algorithm is executed as long as there is some cluster $C\in \cset$, such that $\Gamma(C)$ is not $\hw$-weakly well-linked in $C$.

Let $C\in \cset$ be any such cluster. For convenience, denote $T=\Gamma(C)$. From Observation~\ref{obs: cut from wl}, there is a partition $(X,Y)$ of $V(C)$, such that $|E(X,Y)|< \min\set{\hw,|T\cap X|,|T\cap Y|}$. Notice that in this case, each of $\hG[X]$ and $\hG[Y]$ must contain at least one path of $\hat \Sigma(C)$. Indeed, assume for contradiction that no path of $\hat \Sigma(C)$ is contained in $\hG[X]$. Then for every vertex $v\in T\cap X$, path $\sigma\in \hat \Sigma(C)$ that contains $v$ as an endpoint must contribute at least one edge to $E'$. Moreover, if both endpoints of $\sigma$ belong to $X$, then at least two edges of $\sigma$ lie in $E'$. Therefore, $|E'|\geq |T\cap X|$, a contradiction.

 We add the edges of $E(X,Y)$ to $E'$. Let $J\subseteq \hat \Sigma(C)$ be the set of all paths that contain edges of $E(X,Y)$. Each path of $\hat\Sigma(C)\setminus J$ is now either contained in $\hG[X]$ or in $\hG[Y]$. We remove $C$ from $\cset$ and replace it with $\hG[X]$ and $\hG[Y]$. This finishes the description of an iteration.
Let $\cset$ be the final set of clusters at the end of the algorithm, and let $|\cset|=r$. Then our algorithm has executed $r-1$ iterations. Observe that in each iteration at most $\hw$ edges are added to $E'$, so at the end of the algorithm, $|E'|\leq (r-1)\hw$. For every cluster $C\in \cset$, let $\out(C)$ be the set of all edges of $E'$ that are incident to $C$.

We partition all clusters in $\cset$ into two subsets: $\cset_1\subseteq\cset$ contains all clusters with $|\out(C)|< 4\hw$, and $\cset_2$ contains all remaining clusters of $\cset$.

\begin{observation}\label{obs: number of happy clusters}
$|\cset_1|\geq r/2$.
\end{observation}

\begin{proof}
Assume otherwise. Then $|\cset_2|>r/2$, and so $|E'|=\sum_{C\in \cset}|\out(C)|/2>(r/2)(4\hw)/2=r\hw$. However, as observed above, $|E'|\leq (r-1)\hw$, a contradiction.
\end{proof}

\begin{observation}\label{obs: happy clusters}
Every cluster $C\in \cset_1$ is happy.
\end{observation}
\begin{proof}
Consider some cluster $C\in \cset_1$. Since the algorithm has terminated, $\Gamma(C)$ is $\hw$-weakly-well-linked in $C$. It is now enough to show that $|\hat\Sigma(C)|\geq \hD$. Recall that our procedure guarantees that $\hat\Sigma(C)\neq \emptyset$. Let $\sigma\in \hat\Sigma(C)$ be any path. Since $(\hat \Sigma,\hqset)$ are $(4\hw,2\hD)$-intersecting, there are at least $4\hw$ paths in $\hqset$ that intersect $\sigma$. Since $|\out(C)|<4\hw$, and the paths in $\hqset$ are disjoint, at least one such path $Q$ is contained in $C$. Path $Q$, in turn, intersects at least $2\hD$ paths of $\hat\Sigma$. Since $|\out(C)|<4\hw$, all but at most $4\hw$ these paths are contained in $C$. Therefore, $|\hat\Sigma(C)|\geq 2\hD-4\hw\geq \hD$, since we have assumed that $\hD\geq 8\hw$.
\end{proof}

\begin{corollary}
$r\leq 2|\hat\Sigma|/\hD$.
\end{corollary}

\begin{proof}
Note that from Observations~\ref{obs: number of happy clusters} and~\ref{obs: happy clusters}, $\sum_{C\in \cset_1}|\hat\Sigma(C)|\geq r\hD/2$. On the other hand, $\sum_{C\in \cset_1}|\hat\Sigma(C)|\leq |\hat \Sigma|$. The corollary now follows.
\end{proof}

We say that a path $\sigma\in \hat \Sigma$ is destroyed if at least one of its edges belongs to $E'$; otherwise, we say that it survives. From the above corollary, the number of paths that are destroyed is bounded by $r\hw\leq 2\hw|\hat \Sigma|/\hD\leq |\hat\Sigma|/2$, since we have assumed that $\hD\geq 8\hw$. Each of the surviving paths belongs to some set $\hat\Sigma(C)$ for $C\in \cset$. At least half the clusters in $\cset$ are happy. For a happy cluster $C$, $|\hat\Sigma(C)|\geq \hD$, and for an unhappy cluster $C$, $|\hat\Sigma(C)|<\hD$. We denote by $\hat\Sigma'\subseteq \hat \Sigma$ the set of all paths $\sigma$, such that $\sigma\in \hat\Sigma(C)$ for a happy cluster $C$. Then $\hat\Sigma'$ contains at least half of the surviving paths, and altogether, $|\hat \Sigma'|\geq |\hat \Sigma|/4$.
\end{proof}

Recall that we have $M=8g^4\log g$ slices $\{(\Sigma_i,\qset_i)\}_{1\le i\le M}$. Recall that for each $1\leq i\leq M$, we have computed subsets $\Sigma'_i\subseteq\Sigma_i$ and $\qset'_i\subseteq \qset_i$, such that $(\Sigma'_i,\qset'_i)$ are $(4g^2,D/2)$-intersecting. Let $H'_i$ be a graph obtained from the union of the paths in $\Sigma_i$ and $\qset_i$. 
Denote $\hat D=D/4$ and $\hat w=g^2$, so that $(\Sigma'_i,\qset'_i)$ are $(4\hw,2\hD)$-intersecting. Note that $\hat D\geq 8\hat w$, since $D=64g^4$.
We apply Theorem~\ref{thm: wld} to graph $H'_i$, with $\hat \Sigma=\Sigma'_i$, $\hat \qset=\qset'_i$, and parameters $\hD$, $\hw$. Let $\cset_i$ be the resulting collection of happy clusters, and let $C_i\in \cset_i$ be any such cluster. We denote by $\tilde \Sigma_i=\hat\Sigma(C_i)$. Recall that $|\tilde \Sigma_i|\geq D/4$, and the endpoints of the paths in $\tilde \Sigma_i$ are $g^2$-weakly well-linked in $C_i$. Finally, we let $\tilde \cset=\set{C_1,\ldots,C_{M}}$. To summarize, we have obtained a collection of $M$ clusters, one cluster per slice. Each cluster $C_i$ contains a set $\tilde \Sigma_i$ of at least $D/4$ segments, whose endpoints are $g^2$-weakly well-linked in $C_i$.

\subsection{Step 5: Constructing a Weak Path-of-Sets System}
\label{subsec:weakPoS}

In this section we construct a weak \PoS of width $g^2$ and length $g^2$ in graph $H''$. Recall that $H''$ is a minor of the graph $H$, and it is the union of the paths in $\rset$ and $\qset''$. In particular, the maximum vertex degree in $H''$ is at most $4$. We will use this fact in the last step, in order to turn the \PoS into a strong one. For an integer $\hN>0$, we denote $[\hN]=\set{1,\ldots,\hN}$. Let $\pi: \rset\rightarrow [N]$ be an arbitrary bijection, mapping every path in $\rset$ to a distinct integer of $[N]$ (recall that $|\rset|=N$). 
Following is the main theorem for this step.

\begin{theorem} \label{thm: long intersecting chain}
Let $\hD,\hN,\hM,\hw$ be non-negative integers, such that (i) $\hN\geq 3\hw$; (ii) $\hD^2\ge 4\hat{N} \hw$; and (iii) $\hM\hD\ge 2\hat{N}\hw$ hold, 
and let $S_1,\ldots,S_{\hM}$ be subsets of $[\hN]$, where for each $1\leq i\leq \hM$, $|S_i|\geq \hD$. Then there are $\hw$ indices $1\leq i_1< i_2<\cdots<i_{\hw}\leq \hM$, such that for all $1\leq j<{\hw}$, $|S_{i_j}\cap S_{i_{j+1}}|\ge \hw$.
\end{theorem}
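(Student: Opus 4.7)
The plan is to prove this by contradiction using a Mirsky-type decomposition. Assume no chain of length $\hw$ exists. For each $i \in \{1,\ldots,\hM\}$, define $\ell_i$ to be the length of the longest chain $j_1 < \cdots < j_k = i$ with $|S_{j_t} \cap S_{j_{t+1}}| \geq \hw$ for all $t$. By assumption, $\ell_i \in \{1,\ldots,\hw - 1\}$, so the classes $I_\ell := \{i : \ell_i = \ell\}$ partition $\{1,\ldots,\hM\}$ into at most $\hw - 1$ parts. Within any single class $I_\ell$, every pair of indices $i < j$ must satisfy $|S_i \cap S_j| < \hw$; otherwise, appending $j$ to the chain of length $\ell$ ending at $i$ would force $\ell_j \geq \ell + 1$, contradicting $j \in I_\ell$.

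The key step is to bound $|I_\ell|$ from above. Fix $\ell$, let $m = |I_\ell|$, and for each $x \in \{1,\ldots,\hN\}$ let $d_x$ be the number of sets $S_i$ with $i \in I_\ell$ that contain $x$. Write $T = \sum_x d_x = \sum_{i \in I_\ell} |S_i|$, so $T \geq m \hD$. Since pairwise intersections within $I_\ell$ are less than $\hw$, we also have $\sum_x d_x(d_x - 1) \leq m(m-1)(\hw - 1) < m^2 \hw$. Applying the Cauchy-Schwarz inequality in the form $T^2 \leq \hN \sum_x d_x^2$ yields a quadratic inequality $T^2 < \hN m^2 \hw + \hN T$; combining with $T \geq m \hD$ and invoking hypothesis (ii) $\hD^2 \geq 4 \hN \hw$ (so that $\hD^2 - \hN \hw \geq \tfrac{3}{4} \hD^2$) simplifies this to $m \leq 4\hN/(3\hD)$.

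Summing over the at most $\hw - 1$ classes then gives $\hM \leq (\hw - 1) \cdot 4\hN/(3\hD) < 4\hN\hw/(3\hD)$, which directly contradicts hypothesis (iii) $\hM \hD \geq 2 \hN \hw$, since $2 > 4/3$.

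The main obstacle is extracting a sufficiently tight bound on $|I_\ell|$: a naive Cauchy-Schwarz calculation that upper-bounds $T$ by the trivial estimate $T \leq m \hN$ produces only $m \leq \hN/(3\hw)$, which summed over $\hw - 1$ classes gives $\hM < \hN/3$ --- not strong enough to contradict (iii) unless $\hD$ is comparable to $\hN$. One must instead treat the quadratic inequality in $T$ carefully and substitute $T \geq m \hD$ before taking a square root, which is what allows the constant to improve from a wasteful bound to the sharper $4/3$ that beats the constant $2$ on the hypothesis side. Hypothesis (i) $\hN \geq 3 \hw$ does not seem to enter the core argument and is presumably imposed for consistency with the downstream application that will feed this theorem.
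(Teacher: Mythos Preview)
Your proof is correct and follows essentially the same strategy as the paper: both arguments combine (a) a Mirsky/longest-path layering of the index set with (b) a Cauchy--Schwarz counting bound showing that any ``antichain'' (a set of indices whose pairwise intersections are all below $\hw$) has size $O(\hN/\hD)$. The paper phrases (a) via a DAG and the inequality $\ell(G)\geq |V|/\alpha(G)$, while you phrase it directly through the level sets $I_\ell$; these are the same idea. Your counting for (b) is slightly sharper than the paper's: you obtain $|I_\ell|<4\hN/(3\hD)$ without a case split on whether $\hD\geq 2\hN/3$, whereas the paper proves the weaker bound $<2\hN/\hD$ and invokes hypothesis~(i), $\hN\geq 3\hw$, precisely to handle that large-$\hD$ case. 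Your observation that (i) is unnecessary for your argument is therefore accurate.
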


We prove this theorem below, after we show how to construct a weak \PoS of length and width $g^2$ in $H''$ using it.
Denote $\hat M=M=8g^4\log g$, $\hat D=D/4$, $\hat N=N$, and $\hw=g^2$. Recall that we are given a set $\cset=\set{C_1,C_2,\ldots,C_{\hM}}$ of clusters, where cluster $C_i$ corresponds to slice $i$. Recall also that for each $i$, $|\tilde\Sigma(C_i)|\geq D/4$. We now build the subsets $S_1,\ldots,S_{\hM}$ of $[\hat N]$ as follows. Fix some $1\leq i\leq \hM$. For every path $R\in \rset$, such that $\sigma_i(R)\in\tilde{\Sigma}(C_i)$, we add $\pi(R)$ to $S_i$. Notice that for all $i$, $|S_i|\geq D/4=\hD$. 
We now verify that the conditions of Theorem~\ref{thm: long intersecting chain} hold for the chosen parameters. The first condition, $\hN\geq 3\hw$, is immediate from the fact that $N\geq D=64g^4$. The second condition,
 $\hD^2\geq 4\hat{N}\hw$, is equivalent to: $D^2/16\geq 4Ng^2$. Since $N\leq Dg^2$, it is enough to show that $D\geq 64g^4$, which holds from the definition of $D$. The third condition, $\hD\hM\geq 2\hat{N}\hw$ is equivalent to: $DM\geq 8Ng^2$. Using the fact that $N\leq Dg^2$, and that $M\geq 8g^4$, the inequality clearly holds.

Therefore, we can now apply Theorem~\ref{thm: long intersecting chain} to conclude that there are indices $1\leq i_1< i_2<\cdots<i_{g^2}\le \hM$, such that for all $1\leq j<g^2$, $|S_{i_j}\cap S_{i_{j+1}}|\ge g^2$.

Next, we define, for all $1\leq j\leq g^2$, subsets $T_j\subseteq S_{i_j}$ of $g^2$ indices, as follows. Set $T_1$ is an arbitrary subset of $g^2$ indices of $S_{i_1}$. For each $1<j\leq g^2$, set $T_j$ is an arbitrary subset of $g^2$ indices in $S_{i_{j-1}}\cap S_{i_{j}}$.  For convenience, we also define a set $T_{g^2+1}=T_{g^2}$.
 
 The clusters $C'_1,\ldots,C'_{g^2}$ of the Path-of-Sets system are defined as follows. For $1\leq j\leq g^2$, cluster $C'_j=C_{i_j}$. 
Observe that for all $1\leq j\leq g^2+1$, set $T_j$ of indices defines a subset $\trset^j\subseteq \rset$ of paths: these are all paths $R\in \rset$ with $\pi(R)\in T_j$. Therefore, $|\trset^j|=g^2$, and for each path $R\in \trset^j$, its segment $\sigma_{i_j}(R)\in\tilde \Sigma(C_{i_j})$. Moreover, if $j>1$, then for every path $R\in \trset^j$,  $\sigma_{i_{j-1}}(R)\in \tilde\Sigma(C_{i_{j-1}})$.

 Consider now some index $1\leq j\leq g^2$. Let $\Sigma^A_{i_j}\subseteq \tilde \Sigma_{i_j}$ denote all segments $\sigma_{i_j}(R)$ of paths $R\in \trset^j$, and let $A_j$ be the set of vertices containing the first endpoint of each such segment. Let $\Sigma^B_{i_j}\subseteq\tilde \Sigma_{i_j}$ denote all segments $\sigma_{i_j}(R)$ of paths $R\in \trset^{j+1}$, and let $B_j$ be the set of vertices containing the last endpoint of each such segment. Then from Theorem~\ref{thm: wld}, the endpoints of the paths in $\Sigma^A_{i_j}\cup \Sigma^B_{i_j}$ are $g^2$-weakly well-linked in $C'_j=C_{i_j}$. Therefore,  $A_j\cup B_j$ is $g^2$-weakly well-linked in $C'_j$. Moreover, since $|A_j|=|B_j|=g^2$, from Observation~\ref{obs: weak to edge wl}, $A_j\cup B_j$ is edge-well-linked in $C'_j$.

It now remains to construct, for each $1\leq j<g^2$, the set $\pset_j$ of disjoint paths, connecting every vertex of $B_j$ to a distinct vertex of $A_{j+1}$. Recall that $|B_j|=|A_{j+1}|=|\trset^{j+1}|$, and every path $R\in \trset^{j+1}$ contains a single vertex $b_R$ of $B_j$ and a single vertex $a_R$ of $A_{j+1}$. For each path $R\in \trset^{j+1}$, we add the sub-paths of $R$ between $b_R$ and $a_R$ to $\pset_j$. It is easy to verify that all paths in set $\bigcup_j\pset_j$ are disjoint from each other and are internally disjoint from the clusters $C'_j$. This is because for each $1\leq j< g^2$, for each path $P\in \pset_j$, $P$ is a sub-path of some path $R\in \trset^{j+1}$ spanning its segments $\sigma_{i_{j}+1}(R),\ldots,\sigma_{i_{j+1}-1}(R)$, and two additional edges, one immediately preceding $\sigma_{i_{j}+1}(R)$, and one immediately following $\sigma_{i_{j+1}-1}(R)$. Since $i_1<i_2<\cdots<i_{g^2}$, the paths in $\bigcup_j\pset_j$ are disjoint from each other, and are internally disjoint from $\bigcup_{j'}C'_{j'}$.

In order to complete the proof of Theorem~\ref{thm: main for building crossbar}, it is now enough to prove Theorem~\ref{thm: long intersecting chain}.



\begin{proofof}{Theorem~\ref{thm: long intersecting chain}}	
The proof consists of three steps. First, we use the sets $S_1,\ldots,S_{\hM}$ of indices to define a directed acyclic graph. Then we show that the size of the maximum independent set in this graph is small. We use this fact to conclude that the graph must contain a long directed path, which is then used to construct the desires sequence $i_1,\ldots,i_w$ of indices.

We start by defining a directed graph $\hat{G}=(\hat{V}, \hat{E})$, where $\hat{V}=\{1,2,\cdots, \hM\}$, and for every pair $1\leq i<j\leq \hM$ of its vertices, we add a directed edge $(i,j)$ to $\hat E$ if and only if $|S_i\cap S_j|\ge \hw$. It is easy to verify that $\hat{G}$ is indeed a directed acyclic graph. We  will use the following claim to bound the size of the maximum independent set in $\hat G$.

\begin{claim}\label{claim: independent set}
Let $\sset\subseteq \set{S_1,\ldots,S_{\hM}}$ be any collection of $r=\ceil{2\hN/\hD}$ sets. Then there are two distinct sets $S_i,S_j\in \sset$ with $|S_i\cap S_j|\geq \hw$.
\end{claim}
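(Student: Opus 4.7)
The plan is to prove this by contradiction via standard double-counting of pairs of sets sharing a common element.

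Suppose for contradiction that every pair of distinct sets in $\sset$ has intersection of size at most $\hat w - 1$. For each element $x \in [\hat N]$, let $d_x$ denote the number of sets in $\sset$ that contain $x$. The key identity I would use is
\[
\sum_{\{S,S'\}\subseteq \sset,\, S\neq S'}|S\cap S'| \;=\; \sum_{x\in[\hat N]}\binom{d_x}{2},
\]
obtained by counting pairs (common element, pair of sets containing it) in two ways. Under the contradiction hypothesis, the left-hand side is at most $\binom{r}{2}(\hat w -1) < r^2 \hat w / 2$.

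The main step is a convexity lower bound on the right-hand side. Since every $S \in \sset$ has $|S| \geq \hat D$, we have $\sum_x d_x \geq r\hat D$, so the average value $\bar d = \sum_x d_x / \hat N$ satisfies $\bar d \geq r\hat D/\hat N \geq 2$, where the last inequality uses $r \geq 2\hat N/\hat D$. Applying Jensen's inequality to the convex function $t \mapsto t(t-1)/2$, I would then estimate
\[
\sum_x\binom{d_x}{2} \;\geq\; \hat N\cdot \frac{\bar d(\bar d - 1)}{2} \;\geq\; \frac{\hat N\bar d^2}{4} \;\geq\; \frac{r^2\hat D^2}{4\hat N},
\]
where the middle step uses $\bar d - 1 \geq \bar d/2$, which is valid because $\bar d \geq 2$.

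Combining the two bounds yields $r^2\hat D^2/(4\hat N) < r^2\hat w/2$, i.e., $\hat D^2 < 2\hat N\hat w$, which directly contradicts the hypothesis $\hat D^2 \geq 4\hat N\hat w$. I do not expect any genuine obstacle: both the step $\binom{r}{2}<r^2/2$ and the step $\bar d - 1\geq \bar d/2$ are lossy, but the factor-of-two slack between the two sides of $\hat D^2\geq 4\hat N\hat w$ absorbs all of it, so the inequalities close comfortably. The only thing to be mildly careful about is verifying $\bar d \geq 2$ before invoking the $\bar d - 1 \geq \bar d/2$ reduction, which is precisely what the choice $r = \lceil 2\hat N/\hat D\rceil$ is designed to guarantee.
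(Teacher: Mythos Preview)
Your proof is correct and follows essentially the same double-counting/convexity approach as the paper: both count incidences between elements and sets, apply convexity to lower-bound the pairwise-intersection count, and compare against the trivial upper bound from the contradiction hypothesis. Your version is in fact slightly cleaner---by counting unordered pairs via $\sum_x\binom{d_x}{2}$ and using $\bar d\ge 2$ directly, you avoid the paper's separate handling of the case $\hD\ge 2\hN/3$ and its estimate $\lceil 2\hN/\hD\rceil^2\le 8\hN^2/\hD^2$.
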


The claim immediately implies the following corollary.

\begin{corollary}\label{cor: independent set}
Let $V'\subseteq \hat V$ be any subset of vertices of $\hat G$, such that no two vertices in $V'$ are connected by an edge. Then $|V'|< 2\hN/\hD$.
\end{corollary}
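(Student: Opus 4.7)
The plan is to prove Claim~\ref{claim: independent set} by contradiction: suppose $\sset$ consists of $r=\lceil 2\hat N/\hat D\rceil$ sets with all pairwise intersections of size strictly less than $\hat w$. The main tool will be the second Bonferroni inequality applied to the union of sets in $\sset$, which yields
\[
\hat N \;\geq\; \Bigl|\bigcup_{S\in\sset}S\Bigr| \;\geq\; \sum_{S\in\sset}|S| \;-\; \sum_{\{S,S'\}\subseteq\sset}|S\cap S'| \;\geq\; r\hat D - \binom{r}{2}\hat w,
\]
using $|S|\geq \hat D$ and $|S\cap S'|\leq \hat w-1<\hat w$. Combining this with $r\hat D\geq 2\hat N$ (which is immediate from $r\geq 2\hat N/\hat D$) gives $\binom{r}{2}\hat w\geq \hat N$, i.e.
\[
r(r-1)\;\geq\;\frac{2\hat N}{\hat w}. \qquad(*)
\]

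Next I will show that the parameter hypotheses (i) and (ii) force $r(r-1)$ to be strictly smaller than $2\hat N/\hat w$, contradicting $(*)$. Since $r=\lceil 2\hat N/\hat D\rceil$, we have $r-1<2\hat N/\hat D$, and therefore
\[
r(r-1)\;<\;r\cdot\frac{2\hat N}{\hat D}\;\leq\;\Bigl(1+\tfrac{2\hat N}{\hat D}\Bigr)\cdot\tfrac{2\hat N}{\hat D} \;=\;\tfrac{2\hat N}{\hat D}+\tfrac{4\hat N^2}{\hat D^2}.
\]
Condition (ii), $\hat D^2\geq 4\hat N\hat w$, bounds the second summand by $\hat N/\hat w$. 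For the first summand, combining (i) and (ii) gives $\hat D\geq 2\sqrt{\hat N\hat w}\geq 2\sqrt{3}\,\hat w>2\hat w$, so $2\hat N/\hat D<\hat N/\hat w$. Adding these two bounds yields $r(r-1)<2\hat N/\hat w$, contradicting $(*)$ and completing the proof.

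The only mildly subtle point is the Bonferroni step, which is standard but worth a one-line justification: for each element $x$ of the union with multiplicity $d(x)=|\{S\in\sset:x\in S\}|\geq 1$, its contribution to $\sum_S|S|-\sum_{\{S,S'\}}|S\cap S'|$ equals $d(x)-\binom{d(x)}{2}=d(x)(3-d(x))/2$, which is at most $1$ and hence at most its contribution to $|\bigcup_{S\in\sset}S|$. Note that hypothesis (iii) plays no role here; it is only invoked later (via Corollary~\ref{cor: independent set}) to guarantee that the longest directed path in $\hat G$ has length at least $\hat w$, since the DAG $\hat G$ on $\hat M$ vertices has independence number less than $2\hat N/\hat D$, and so contains a chain of length at least $\hat M\hat D/(2\hat N)\geq \hat w$ by (iii).
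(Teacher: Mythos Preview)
Your proof is correct and takes a genuinely different route from the paper's. The paper proves Claim~\ref{claim: independent set} by a second-moment/convexity argument: it counts triples $(i,i',j)$ with $j\in U_i\cap U_{i'}$, lower-bounds this count by $\sum_j n_j^2\geq s^2/\hN$ via convexity (with $s=r\hD$), upper-bounds it by $s+r^2\hw$ from the diagonal and off-diagonal contributions, and then derives a contradiction from conditions (i) and (ii) after a separate case split for $\hD\geq 2\hN/3$. Your argument instead invokes the second Bonferroni inequality to get $\hN\geq r\hD-\binom{r}{2}\hw$ directly, and then a short chain of inequalities using (i) and (ii) forces $r(r-1)<2\hN/\hw$, contradicting $(*)$. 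Your approach is shorter and more elementary: it avoids the case split and the triple-counting, and the final arithmetic is cleaner. The paper's approach, on the other hand, is a standard ``second-moment'' template that generalizes more naturally if one wanted quantitative refinements (e.g., bounding the number of large pairwise intersections rather than asserting one exists). Your remark that hypothesis (iii) is not used here, and only enters in the subsequent chain-length bound, is also correct and matches the paper's structure.
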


We now turn to prove Claim~\ref{claim: independent set}.

\begin{proofof}{Claim~\ref{claim: independent set}}
Assume for contradiction that the claim is false. Then there must exist $r$ subsets $U_1,\ldots,U_r$ of $[\hN]$, each of which has cardinality at least $\hD$, such that for every pair $U_i,U_j$ of the subsets, with $i\neq j$, $|U_i\cap U_j|<\hw$. We can assume without loss of generality that for each $1\leq i\leq r$, $|U_i|=\hD$. Notice that, if $\hD\geq 2\hN/3$, then every pair $U_i,U_j$ of sets must share at least $\hN/3\geq \hw$ elements, leading to a contradiction. Therefore, we assume from now on that $\hD<2\hN/3$.
Let $s=\sum_{i=1}^r|U_i|=r\hD$. 

For every element $j\in [\hat{N}]$, let $n_j$ denote the total number of sets $U_i$ with $j\in U_i$, so $\sum_{j\in [\hat{N}]}n_j=\sum_{i=1}^r|U_i|=s$. 
Let $\Pi$ be the set of all triples $(i,i',j)$, with $1\leq i, i'\leq r$ and $j\in [\hat{N}]$, such that element $j$ belongs to both $U_i$ and $U_{i'}$. Notice that each element $j$ contributes exactly $n_j^2$ triples of the form $(i,i',j)$ to $\Pi$, and so $|\Pi|=\sum_{j=1}^{\hN}n_j^2$. Since the function $f(x)=x^2$ is convex, $\frac {1}{\hat{N}}\sum_{j=1}^{\hN}n_j^2\geq \left(\sum_{j=1}^{\hN}\frac{n_j}{\hat{N}}\right )^2=\left (\frac{s}{\hat{N}}\right )^2$. Therefore:

\begin{equation}
|\Pi|\geq \frac{s^2}{\hat{N}}. \label{eq: upper bound on Pi}
\end{equation}

On the other hand, consider any pair $1\leq i,i'\leq r$ of indices. If $i=i'$, then $U_i$ contributes $|U_i|=\hD$ triples of the form $(i,i,j)$ to $\Pi$, and the total number of triples of the form $(i,i,j)$ for all $1\leq i\leq r$ and $j\in [\hat{N}]$ is $r\hD=s$. Otherwise,  $i\neq i'$, and $U_i$ and $U_{i'}$ contribute fewer than $2\hw$ triples of the form $(i,i',j)$ or $(i',i,j)$ to $\Pi$. Therefore, 

\begin{equation}
\begin{split}
|\Pi|&<  s+{r\choose 2}\cdot 2\hw \\& \leq s+r^2\hw\\
&= s +\hw\cdot\left(\ceil{\frac{2\hN}{\hD}}\right) ^2\\
&\leq s+\hw\cdot \frac{8\hN^2}{\hD^2}.
\end{split} \end{equation}

(we have used the fact that $\hD<2\hN/3$ for the last inequality).
Altogether, we get that:

\begin{equation} \frac{s^2}{\hat{N}} \leq |\Pi|< s+\frac{8\hw\hN^2}{\hD^2}.\label{eq: full}\end{equation}

We claim that:

\begin{equation}
\frac{s^2}{\hat{N}}\geq 2s\label{eq: bound 1}
\end{equation}

and that:

\begin{equation}
\frac{s^2}{\hat{N}}\geq \frac{16\hw\hN^2}{\hD^2}\label{eq: bound 2}
\end{equation}

Notice that, if both inequalities are true, then their average contradicts Inequality~(\ref{eq: full}). 
We now show that both inequalities are indeed true. The first inequality is equivalent to $s\geq 2N$. Since $s=r\hD=\ceil{\frac{2\hN}{\hD}}\cdot\hD\geq 2\hN$, this inequality is clearly true.

For the second inequality, since $s\geq 2\hN$, we get that $\frac{s^2}{\hN}\geq 4\hN$. Since we have assumed that $\hD^2\geq 4\hN\hw$, we get that $\frac{s^2}{\hN}\geq 4\hN\geq \frac{16\hw\hN^2}{\hD^2}$.

\end{proofof}

Next, we show that graph $\hat G$ contains a long directed path. We say that a subset $V'$ of vertices of a directed graph is an \emph{independent set} iff no pair of vertices in $V'$ is connected by an edge.
	
\begin{claim}
\label{claim:ind-len}
Let  $G=(V,E)$ be any directed acyclic graph on $\hM$ vertices. Let $\ell(G)$ be the length of the longest directed path in $G$, and let $\alpha(G)$ be the cardinality of the largest independent set in $G$. Then $\ell(G)\geq \hM/\alpha(G)$.
\end{claim}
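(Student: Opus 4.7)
The plan is to prove the claim by partitioning $V$ into independent sets using a level function determined by longest incoming paths, and then applying a simple counting argument. Concretely, for each vertex $v\in V$, I would define $f(v)$ to be the maximum number of vertices on any directed path in $G$ that ends at $v$ (with $f(v)=1$ if $v$ has no in-neighbors). Since $G$ is a DAG, this function is well-defined, and by construction its range is contained in $\{1,2,\ldots,\ell(G)\}$.

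The key observation is that if $(u,v)\in E$, then any longest directed path ending at $u$ can be extended by the edge $(u,v)$ to a directed path ending at $v$ of strictly greater length, so $f(v)\geq f(u)+1$. In particular $f(u)\neq f(v)$ whenever $(u,v)\in E$. Consequently, for every integer $i$, the level set $L_i=\{v\in V:f(v)=i\}$ contains no pair of vertices joined by an edge, so $L_i$ is an independent set in $G$, and therefore $|L_i|\leq \alpha(G)$.

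Summing over all at most $\ell(G)$ nonempty levels, I obtain
\[
\hM=|V|=\sum_{i=1}^{\ell(G)}|L_i|\leq \ell(G)\cdot \alpha(G),
\]
which rearranges to $\ell(G)\geq \hM/\alpha(G)$, as claimed. There is no real obstacle here: the only thing to be careful about is the convention for ``length'' of a path (counting vertices rather than edges), which is exactly what is needed to make the bound tight, and the fact that $f$ is well-defined requires only acyclicity of $G$. This argument is essentially Mirsky's theorem specialized to DAGs viewed as posets.
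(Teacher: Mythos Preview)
Your proof is correct and takes essentially the same approach as the paper: both partition $V$ into level sets that are independent, then count. In fact your level sets $L_i=\{v:f(v)=i\}$ coincide exactly with the paper's sets (the paper defines $L_i$ iteratively as the sources of $G\setminus(L_1\cup\cdots\cup L_{i-1})$, which is the same thing), and where the paper explicitly traces a path backward through the levels to exhibit $\ell(G)\ge h$, you get this for free since the maximum value of $f$ is $\ell(G)$ by definition.
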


\begin{proof}
We construct a partition of the vertices of $G$ into subsets $L_1,L_2,\ldots,L_h$ for some integer $h$, such that:

\begin{itemize}
\item For all $1\leq i\leq h$, set $L_i$ is non-empty and it is an independent set; and

\item For all $i>1$, for every vertex $v\in L_i$, there is some vertex $v'\in L_{i-1}$, such that the edge $(v',v)$ belongs to $G$.
\end{itemize}

Notice that for all $i$, $|L_i|\leq \alpha(G)$, and, since every vertex of $G$ belongs to some subset $L_i$, we get that $h\geq \hM/\alpha(G)$. It is now immediate to construct a directed path in $G$ of length $h$, by starting at any vertex $v\in L_h$, and then iteratively moving to a vertex in the previous subset that connects to the current vertex with an edge.

We now describe the construction of the sets $L_i$. Set $L_1$ contains all vertices of $G$ with no incoming edges. It is easy to verify that it is a non-empty independent set. Assume now that we have defined the sets $L_1,\ldots,L_{i-1}$. If every vertex of $G$ belongs to one of these sets, then we terminate the algorithm. Otherwise, consider the graph $G_i=G\setminus(L_1\cup\cdots\cup L_{i-1})$. We then let $L_i$ contain all vertices of $G_i$ that have no incoming edges in $G_i$. Notice that $L_i$ must be an independent set in the original graph $G$. Moreover, for every vertex $v\in L_i$, there must be some vertex $v'\in L_{i-1}$ with $(v',v)\in E(G)$, as otherwise $v$ should have been added to $L_{i-1}$.
\end{proof}

We conclude that graph $\hat G$ has a directed path of length at least $\hM\hD/2\hN\geq \hw$ (we have used the assumption that $\hM\hD\geq 2\hN\hw$ from the statement of Theorem~\ref{thm: long intersecting chain}). This directed path immediately defines the desired sequence $i_1,\ldots,i_{\hat w}$ of indices, completing the proof of Theorem~\ref{thm: long intersecting chain}.
\end{proofof}

\subsection{Step 6: a Strong Path-of-Sets System}
\label{subsec: strongPoS}

Recall that in Step 5, we have constructed a weak \PoS of length $g^2$ and width $g^2$, that we denote by $\pos=(\cset,\set{\pset_i}_{i=1}^{g^2-1},A_1,B_{g^2})$, in a minor $H''$ of $H$, whose maximum vertex degree is bounded by $4$. 
	Let $\ell=w=g^2$.
	Abusing the notation, we denote $\cset=(C_1,\ldots,C_{\ell})$.
	In this step we complete the proof of (the weaker version of) Theorem~\ref{thm: main for building crossbar}, by converting $\pos$ into a strong \PoS. The length of the new \PoS will remain $\ell$, and the set $\cset$ of clusters will remain the same. The width will decrease by a constant factor.
	
	This step uses standard techniques, and is mostly identical to similar steps in previous proofs of the Excluded Grid Theorem of~\cite{CC14,GMT-STOC,chuzhoy2016improved}. In particular, we will use the Boosting Theorems of~\cite{CC14} in order to select large subsets $\pset'_i\subseteq \pset_i$ of paths, such that their endpoints are sufficiently well-linked in their corresponding clusters.
	Chekuri and Chuzhoy~\cite{CC14} employed the following definition of well-linkedness:

	\begin{definition}
		We say that a set $T$ of vertices is
		$\alpha$-well-linked in a graph $G$, if for every
		partition $(A,B)$ of the vertices of $G$ into two subsets,
		$|E(A,B)|\geq \alpha\cdot \min\set{|A\cap T|,|B\cap T|}$.
	\end{definition}

The next observation follows immediately from Menger's theorem.

\begin{observation}
	If a set $T$ of vertices is edge-well-linked in a graph $G$, then $T$ is $1$-well-linked in $G$.
\end{observation}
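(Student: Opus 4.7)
The plan is to derive this directly from the defining property of edge-well-linkedness, which is essentially the edge-version of Menger's theorem in disguise. The strategy is to take any partition $(A,B)$ of $V(G)$, use the intersections with $T$ as witnesses, and count crossings.

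First, I would fix an arbitrary partition $(A,B)$ of $V(G)$, and define $T' = T \cap A$ and $T'' = T \cap B$. Since $A \cap B = \emptyset$, the sets $T'$ and $T''$ are disjoint subsets of $T$, so the hypothesis that $T$ is edge-well-linked in $G$ applies and produces a collection $\pset$ of $\min\{|T'|,|T''|\}$ edge-disjoint paths in $G$, each connecting a vertex of $T'$ to a vertex of $T''$.

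Next, I would observe that every path $P \in \pset$ must traverse at least one edge of the cut $E(A,B)$. Indeed, $P$ starts at a vertex of $T' \subseteq A$ and ends at a vertex of $T'' \subseteq B$, so somewhere along $P$ there is an edge with one endpoint in $A$ and the other in $B$. Since the paths in $\pset$ are pairwise edge-disjoint, they contribute distinct edges of $E(A,B)$, and therefore
$$|E(A,B)| \;\geq\; |\pset| \;=\; \min\{|T'|,|T''|\} \;=\; \min\{|T \cap A|,\,|T \cap B|\},$$
which is precisely the $1$-well-linkedness condition applied to the partition $(A,B)$. Since the partition was arbitrary, this completes the argument.

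There is really no technical obstacle here: the statement is essentially an unpacking of Menger's theorem combined with the observation that any cut in $G$ yields a natural pair of disjoint subsets of $T$ to which edge-well-linkedness applies. The only thing to be careful about is to make sure one uses the \emph{edge-disjoint} path collection (rather than node-disjoint) so that the counting of edges across $(A,B)$ is valid; this is immediate from the definition quoted in the excerpt.
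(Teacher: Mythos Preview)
Your proof is correct and matches the paper's approach: the paper simply states that the observation ``follows immediately from Menger's theorem,'' and your argument is exactly the direct unpacking of that remark. In fact your write-up is slightly more self-contained, since you use the edge-disjoint paths guaranteed by the definition of edge-well-linkedness directly rather than re-invoking Menger, but the content is identical.
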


Next, we state the Boosting Theorems of~\cite{CC14}. 

\begin{theorem}[Theorem 2.14 in~\cite{CC14}]\label{thm: grouping}
	Suppose we are given a connected graph $G=(V,E)$ with maximum vertex degree at most $\Delta\geq 3$ and a set $T\subseteq V$ of $\hat\kappa$ vertices, such that $T$ is $\alpha$-well-linked in $G$, for some $0<\alpha\leq 1$. Then there is a subset $T'\subseteq T$ of $\ceil{\frac{3\alpha\hat\kappa }{10\Delta}}$ vertices, such that $T'$ is node-well-linked in $G$.
\end{theorem}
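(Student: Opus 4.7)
The plan is to prove this ``boosting'' theorem in two stages, using the bounded-degree assumption to bridge from edge-well-linkedness to node-well-linkedness. In Stage 1, I would first show that $T$ is $(\alpha/\Delta)$-node-well-linked in $G$ in the fractional sense, that is, for any disjoint $T_1,T_2 \subseteq T$ and any node separator $Z\subseteq V\setminus(T_1\cup T_2)$ between them, $|Z|\geq (\alpha/\Delta)\min(|T_1|,|T_2|)$ holds, and hence by Menger's theorem there are at least $\lceil (\alpha/\Delta)\min(|T_1|,|T_2|)\rceil$ node-disjoint $T_1$--$T_2$ paths in $G$. This is immediate: let $V_A$ be the $T_1$-side of $V\setminus Z$; every edge leaving $V_A$ must enter $Z$, so $|E(V_A, V\setminus V_A)|\leq \Delta\,|Z|$, while the $\alpha$-edge-well-linkedness of $T$ forces $|E(V_A, V\setminus V_A)|\geq \alpha\min(|T\cap V_A|,|T\cap(V\setminus V_A)|)\geq \alpha\min(|T_1|,|T_2|)$. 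Combining the two inequalities yields the claim.

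In Stage 2, I would invoke a standard grouping/well-linked-decomposition argument in the style of Chekuri--Khanna--Shepherd to extract from $T$ a subset $T'$ of size $q=\lceil 3\alpha\hat\kappa/(10\Delta)\rceil$ that is (fully) node-well-linked in $G$. The idea is to partition $T$ into $q$ groups $T_1,\ldots,T_q$, each of size $s\approx 10\Delta/(3\alpha)$, together with pairwise vertex-disjoint connected subgraphs $C_1,\ldots,C_q\subseteq G$ with $T_i\subseteq V(C_i)$ for every $i$. Such a partition exists by an iterative peeling procedure: while more than $s$ terminals remain, the $\alpha$-well-linkedness guarantees that one can carve off a connected region containing the next $s$ terminals with only $O(\alpha s)$ boundary edges, and one can continue until all of $T$ is used up. Letting $t_i\in T_i$ be a representative and $T'=\{t_1,\ldots,t_q\}$, to prove $T'$ is node-well-linked, given disjoint $A',B'\subseteq T'$ I would use the fractional node-well-linkedness from Stage 1 applied to the expansions $\tilde A'=\bigcup_{t_i\in A'}T_i$ and $\tilde B'=\bigcup_{t_j\in B'}T_j$, producing roughly $(\alpha/\Delta)\cdot|A'|\cdot s=(10/3)|A'|$ node-disjoint paths between $\tilde A'$ and $\tilde B'$; and then reroute a subset of these paths through the connected clusters $C_i$ and $C_j$ so that their endpoints become the representatives $t_i$ and $t_j$. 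The group size $s\approx 10\Delta/(3\alpha)$ is precisely tuned so that the fractional factor $\alpha/\Delta$, when multiplied by $s$, covers both the representative and any collision of rerouted paths at cluster boundaries, yielding exactly $\min(|A'|,|B'|)$ fully node-disjoint paths.

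The main obstacle will be the rerouting step in Stage 2: one must coordinate the fractional node-disjoint flow produced by Stage 1 with the within-cluster routings so that, after pulling the endpoints back to the representatives $t_i$, all paths remain globally node-disjoint. This requires a Hall/SDR-type selection ensuring that within each involved cluster, the representative can indeed reach the endpoint of some selected path without conflicting with the reroutings inside other clusters; the bounded-degree hypothesis is essential here to bound the combined within-cluster usage. The precise constant $3/10$ emerges from matching the fractional rate $\alpha/\Delta$ from Stage 1 against the group size $s$, and any looser bookkeeping would only yield the weaker conclusion $|T'|=\Omega(\alpha\hat\kappa/\Delta)$ with an unoptimized constant.
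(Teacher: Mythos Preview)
The paper does not prove this theorem at all: it is stated as a black-box citation of Theorem~2.14 in~\cite{CC14}, with no proof or sketch given here. So there is no ``paper's own proof'' to compare your proposal against.

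That said, your two-stage outline is the standard route and is essentially how the result is obtained in~\cite{CC14}: first pass from $\alpha$-edge-well-linkedness to $(\alpha/\Delta)$-fractional node-well-linkedness via the trivial edge/vertex-cut inequality in bounded-degree graphs, then use a grouping argument (partitioning $T$ into $\Theta(\Delta/\alpha)$-sized groups, each sitting inside its own connected cluster) and pick one representative per group. The one place where your sketch is loose is the rerouting step you flag yourself: to make the selected paths globally node-disjoint after pulling endpoints to representatives, the actual argument in~\cite{CC14} does not try to reroute an arbitrary fractional flow through the clusters; rather, it contracts each cluster $C_i$ to a single super-vertex, observes that the resulting contracted graph is $1$-node-well-linked on the super-vertices (because any vertex cut there lifts to an edge cut in $G$ of size at most $\Delta$ times the cut, which the $\alpha$-well-linkedness and the group size bound), and then uncontracts, using connectivity of each $C_i$ to reach the representative. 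This sidesteps the Hall/SDR coordination problem you anticipate. Your plan would work but the bookkeeping is cleaner the other way; either way the constant $3/10$ is just what falls out of the specific accounting in~\cite{CC14} and is not essential.
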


\begin{theorem}[Theorem 2.9 in~\cite{CC14}]\label{thm: linkedness from node-well-linkedness}
	Suppose we are given a graph $G$ with maximum vertex degree at most $\Delta$, and two disjoint subsets $T_1,T_2$ of vertices of $G$, with $|T_1|,|T_2|\geq \hat\kappa$, such that $T_1\cup T_2$ is $\alpha$-well-linked in $G$, for some $0<\alpha\le 1$, and each one of the sets $T_1,T_2$ is node-well-linked in $G$. Let $T'_1\subseteq T_1$, $T_2'\subseteq T_2$, be any pair of subsets with $|T_1'|=|T_2'|\leq \frac{\alpha\hat\kappa}{2\Delta}$. Then $(T'_1,T_2')$ are linked in $G$.
\end{theorem}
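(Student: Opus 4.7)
My plan is to prove the theorem by a three-stage routing argument. Let $k := \alpha\hat\kappa/(2\Delta)$ and $T := T_1 \cup T_2$. In the first stage, I would use $\alpha$-well-linkedness of $T$ together with the bounded-degree hypothesis to construct a large family of node-disjoint paths linking $T_1$ to $T_2$ in $G$. In the second stage, I would exploit node-well-linkedness of $T_1$ and $T_2$ separately to route $T_1'$ and $T_2'$ onto the endpoints of these paths. Finally, I would concatenate the three path families to obtain $|T_1'|$ node-disjoint paths between $T_1'$ and $T_2'$.

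For stage one, any partition $(A,B)$ of $V(G)$ with $T_1\subseteq A$ and $T_2\subseteq B$ satisfies $|E(A,B)|\geq \alpha\cdot\min\{|T\cap A|, |T\cap B|\}\geq \alpha\hat\kappa$, so the edge version of Menger's theorem yields $\alpha\hat\kappa$ edge-disjoint paths from $T_1$ to $T_2$. Since the maximum degree in $G$ is $\Delta$, each internal vertex carries at most $\lfloor\Delta/2\rfloor$ of these paths, and a standard greedy extraction produces at least $\alpha\hat\kappa/\Delta = 2k$ pairwise node-disjoint paths, say $\pset_0$, whose endpoint sets $A_1\subseteq T_1$ and $A_2\subseteq T_2$ each have size at least $2k$.

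For stage two, I would choose $B_1\subseteq A_1\setminus T_1'$ of size $k$, using the facts that $|A_1|\geq 2k$ and $|T_1'\cap A_1|\leq k$ to guarantee this is possible (and treating any vertices of $T_1'\cap A_1$ separately as trivial one-vertex paths). Node-well-linkedness of $T_1$ then delivers $k$ node-disjoint paths $\pset_1$ in $G$ connecting $T_1'$ to $B_1$; a symmetric construction produces $\pset_2$ connecting $T_2'$ to some $B_2\subseteq A_2$. For stage three, for each $v\in T_1'$ I would trace its $\pset_1$-path to some $b_1(v)\in B_1$, then the unique $\pset_0$-path from $b_1(v)$ to the corresponding $b_2(v)\in B_2$, then the $\pset_2$-path back to $T_2'$. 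After shortcutting repeated vertices, this yields $k$ paths linking $T_1'$ to $T_2'$.

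The main obstacle, which I expect will occupy the bulk of the proof, is ensuring that the three families $\pset_0, \pset_1, \pset_2$ are jointly node-disjoint, so the concatenation produces genuine paths rather than congested walks. The routings $\pset_1$ and $\pset_2$ inhabit all of $G$ and will generically share internal vertices with paths of $\pset_0$; a naive concatenation creates congestion. The fix is to absorb such collisions into the constant slack: whenever $\pset_1$ enters an internal vertex of some $P\in \pset_0$, shortcut $P$ through that vertex; whenever a single vertex is struck by multiple paths, discard the conflicting members of $\pset_0$ and re-select the targets in $A_1$. Because $\pset_0$ carries $2k$ paths rather than only $k$, and because node-well-linkedness of each $T_i$ lets us re-choose which $k$ targets in $A_i$ to route to, there is just enough budget to survive this surgery. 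Carefully calibrating the two losses---the $\Delta$-factor from converting edge-disjoint to node-disjoint, and the $\alpha$-factor from well-linkedness---is precisely what produces the factor $\alpha/(2\Delta)$ in the definition of $k$.
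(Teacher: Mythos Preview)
The paper does not prove this theorem at all: it is quoted verbatim as Theorem~2.9 of~\cite{CC14} and used as a black box in Section~\ref{subsec: strongPoS}. So there is no ``paper's own proof'' to compare your attempt against.

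On the merits of your sketch: Stages~1 and~2 are sound. The passage from $\alpha\hat\kappa$ edge-disjoint $T_1$--$T_2$ paths to $\alpha\hat\kappa/\Delta$ node-disjoint ones is correct (any vertex lies on at most $\Delta$ edge-disjoint paths, so a putative small vertex separator cannot meet them all), and node-well-linkedness of each $T_i$ gives the two auxiliary families $\pset_1,\pset_2$. The real content is entirely in Stage~3, as you acknowledge, and here your proposal has a genuine gap. ``Shortcut $P$ through that vertex'' is not a well-defined operation on a family of node-disjoint paths, and the re-selection strategy (``discard conflicting members of $\pset_0$ and re-choose targets'') is circular: re-choosing $B_1\subseteq A_1$ forces you to re-route $\pset_1$, which may introduce fresh collisions with $\pset_0$, and there is no termination argument. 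The factor-$2$ slack in $|\pset_0|$ does not obviously survive this cascade.

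A cleaner route, and likely closer to what \cite{CC14} actually does, is to avoid explicit concatenation altogether and argue by Menger: assume a $T_1'$--$T_2'$ separator $S$ with $|S|<|T_1'|$, use the $2k$ node-disjoint $T_1$--$T_2$ paths to locate many vertices of $T_1$ (or $T_2$) on the ``wrong'' side of $S$, and then invoke node-well-linkedness of $T_1$ (or $T_2$) to produce more node-disjoint paths across $S$ than $|S|$ permits. Even this argument requires care with the bookkeeping to hit the exact constant $\alpha\hat\kappa/(2\Delta)$ rather than losing an extra factor of~$2$; your sketch does not supply that bookkeeping either.
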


Recall that we are given a weak \PoS $\pos=(\cset,\set{\pset_i}_{i=1}^{\ell-1},A_1,B_{\ell})$ of length $\ell=g^2$ and width $w=g^2$, with $\cset=(C_1,\ldots,C_{\ell})$, in a minor $H''$ of $H$, whose maximum vertex degree is bounded by $4$.
Consider some index $1\leq i<\ell$, and recall that $B_i\subseteq C_i$ and $A_{i+1}\subseteq C_{i+1}$ are the sets of endpoints of the paths of $\pset_i$ lying in $C_i$ and $C_{i+1}$, respectively. Applying Theorem~\ref{thm: grouping} to graph $C_i$ with  $T=B_i$, we obtain a subset $\tilde B_i\subseteq B_i$ of at least $w'=3w/40$ vertices, such that $\tilde B_i$ is node-well-linked in 
$C_i$. Let $\tpset_i\subseteq \pset_i$ be the set of paths originating at the vertices of $\tilde B_i$, and let $\tilde A_{i+1}\subseteq A_{i+1}$ be the set of their endpoints, lying in $C_{i+1}$. We then apply Theorem~\ref{thm: grouping} to graph $C_{i+1}$, with the set $T=\tilde A_{i+1}$ of vertices, to obtain a collection $\tilde A_{i+1}'\subseteq \tilde A_{i+1}$ of at least $w''=3w'/40=\Omega(w)$ vertices, such that $\tilde A_{i+1}'$ is node-well-linked in $C_{i+1}$. Let $\tpset'_i\subseteq \tpset_i$ be the set of paths terminating at the vertices of $\tilde A_{i+1}'$. Finally, we select an arbitrary subset $\pset'_i\subseteq \tpset'_i$ of $\tilde w=\floor{w''/8}=\Omega(w)$ paths, and we let $B'_i\subseteq \tilde{B}_i$ and $A'_{i+1}\subseteq \tilde A'_{i+1}$ be the sets of vertices where the paths of $\pset'_i$ originate and terminate, respectively. 

As our last step, we apply Theorem~\ref{thm: grouping} to graph $C_1$ and a set $T=A_1$ of vertices, to obtain a subset $\tilde A_1\subseteq A_1$ of $w'$ vertices that are node-well-linked in $C_1$, and we select an arbitrary subset $A'_1\subseteq \tilde A_1$ of $\tilde w$ vertices of $A_1$. We select a subset $B'_{\ell}\subseteq B_{\ell}$ of $\tilde w$ vertices similarly.

For each $1\leq i\leq \ell$, we are now guaranteed that each of the sets $A_i',B_i'$ is node-well-linked in $C_i$, and, from Theorem~\ref{thm: linkedness from node-well-linkedness}, $(A_i',B_i')$ are linked in $C_i$. The final strong \PoS is: $\pos'=(\cset,\set{\pset'_i}_{i=1}^{\ell-1},A'_1,B'_{\ell})$; its length is $\ell=g^2$, and its width is $\tilde w=\Omega(g^2)$.

\subsection{Parameters}\label{subsec: params weaker}
For convenience, this subsection summarizes the main parameters used in Section~\ref{sec: Building the Crossbar}.

\begin{itemize}
	\item The cardinalities of the original sets $A,B$ and $X$ of vertices are $\kappa\geq 2^{22}g^{10}\log g$.
	
	\item The depth of the pseud-grid is $D=64g^4$. We obtain $D$ sets $\rset_1,\ldots,\rset_D$ of paths, of cardinality at most $g^2$ each.
	
	\item The total number of paths in the set $\rset$ is $N$, where $D\leq N\leq g^2 D$, so $64g^4\leq N\leq 64g^6$.

	\item The number of slices is $M=8g^4\log g$.
	
	\item For each slice $i$, we define subsets $\Sigma'_i\subseteq \Sigma_i$ and $\qset'_i\subseteq\qset_i$ of paths, such that $(\Sigma'_i,\qset'_i)$ are $(4g^2,D/2)$-intersecting.
	
	\item After the well-linked decomposition, we obtain $M=8g^4\log g$ clusters - one cluster per slice. For each resulting cluster $C_i$, $|\tilde \Sigma_i|\geq D/4$, and the endpoints of the paths in $\tilde\Sigma_i$ are $g^2$-weakly-well-linked in $C_i$.
\end{itemize}

\section{Proof of Theorem~\ref{thm: main for building crossbar}}
\label{sec: stronger structural theorem}

In this section we complete the proof of Theorem~\ref{thm: main for building crossbar}.
Recall that we are given a graph $H$, and three disjoint subsets $A,B,X$ of its vertices, each of cardinality $\kappa\geq 2^{22}g^{9}\log g$, such that every vertex in $X$ has degree $1$.
We are also given a set $\tpset$ of $\kappa$ node-disjoint paths connecting vertices of $A$ to vertices of $B$, and a set $\tqset$ of $\kappa$ node-disjoint paths connecting vertices of $A$ to vertices of $X$. Our goal is to prove that either $H$ contains an $(A,B,X)$-crossbar of width $g^2$, or  that its minor contains a strong Path-of-Sets System whose length and width are both at least $\Omega(g^2)$.
We follow the framework of the proof in Section~\ref{sec: Building the Crossbar}, but we introduce some changes that will allow us to save a factor of $g$ on the parameter $\kappa$. The first three steps of the proof are identical to those in Section~\ref{sec: Building the Crossbar}, except that we use slightly weaker parameters for the slicing.

\subsection*{Step 1: Pseudo-Grid}
We start by defining the sets $\pset,\qset$ of $\kappa$ node-disjoint paths exactly like in Section~\ref{sec: Building the Crossbar}. We set the parameter $D=64g^4$ like in Section~\ref{sec: Building the Crossbar}. Note that the inequality $D\leq \kappa/(2g^2)$ still holds for the new value of $\kappa$, and that Theorem~\ref{thm: pseudo-grid} holds regardless of the specific value of $\kappa$. We apply Theorem~\ref{thm: pseudo-grid} to the graph $H$.
 If the outcome is an $(A,B,X)$-crossbar of width $g^2$, then we return this crossbar and terminate the algorithm. Therefore, we assume from now on that the outcome of the theorem is a pseudo-grid of depth $D$. As before, we denote $N=|\rset|$, so $D\leq N\leq g^2D$ holds.
 
 \subsection*{Step 2: Slicing}
 We define the subset $\qset''\subseteq \qset'$ of paths exactly as before:
 Recall that for each $1\leq i\leq D$, there are at most $2g^2$ paths $Q\in\qset'$, such that $Q$ does not intersect any path of $\rset_i$. We discard all such paths from $\qset'$, obtaining a set $\qset''\subseteq\qset'$ of paths. Therefore, we discard at most $2g^2 D=128g^6<\kappa/8$ paths, and, since $\qset'=\ceil{\kappa/4}$, we get that $|\qset''|\geq \kappa/8$. We are now guaranteed that Property~\ref{prop: intersection} holds (that is, for each $1\leq i\leq D$, each path $Q\in \qset''$ intersects some path of $\rset_i$). 
 As before, we denote by $A'\subseteq A$ and $B'\subseteq B$ the sets of endpoints of the paths of $\rset$ lying in $A$ and $B$, respectively, and we let $H'$  be the sub-graph of $H$, obtained by taking the union of all paths in $\rset$ and all paths in $\qset''$.

We follow the algorithm from Section~\ref{subsec:slicing} to modify the graph $H'$ and the sets $\rset,\qset''$ of paths, such that the resulting graph $H''$ is a minor of $H$; property~\ref{prop: intersection} continues to hold; and $H''$ has the perfect unique linkage property with respect to $A'$ and $B'$, with the unique linkage being $\rset$.

Lastly, we perform a slicing of the paths in $\rset$, by applying Theorem~\ref{thm: slicing} to graph $H''$, but the number of slices that we obtain is somewhat smaller. Specifically, we set  $\hM=128g^3\log g$ and $\hat w=2^{11}g^6$.
Notice that, since $N\leq g^2 D=64g^6$ and $|\qset''|\geq \kappa/8\geq 2^{19}g^9\log g$, $|\qset''| \geq \hat M\hat w+(\hat M+1)\hat N$ holds.
Therefore, we obtain an $M_1$-slicing $\Lambda=\set{\Lambda(R)}_{R\in \rset}$ of $\rset$, of width $w$ with respect to $\qset''$, where $M_1=128g^3\log g$, and $w=2^{11}g^6$. 
As before, for every $1\leq i\leq M_1$, we denote by $\Sigma_i=\set{\sigma_i(R)\mid R\in \rset}$.
We denote the subset $\hqset_i\subseteq \qset''$ of paths corresponding to $\Sigma_i$ by $\qset_i$. We call $(\Sigma_i,\qset_i)$ the \emph{$i$th slice of $\Lambda$}.
Notice that so far we have followed the proof from Section~\ref{sec: Building the Crossbar} exactly, except that the number of slices that we obtain is smaller by factor $g$.

\subsection*{Step 3: Intersecting Path Sets}
This step is also virtually the same as in Section~\ref{sec: Building the Crossbar}.
Let $\hw=4g^2$ and $\hD=D/2$. Consider some $1\leq i\leq M_1$, and the corresponding sets $\Sigma_i,\qset_i$ of paths. Denote $\hrset=\Sigma_i$ and $\hqset=\qset_i$. Recall that each path $Q\in\qset_i$ intersects at least $D$ paths of $\Sigma_i$. It is also easy to verify that $|\hqset|\geq 2|\hrset|\hw\hD$. 
Indeed, $|\hqset|\geq 2^{11}g^6$, while $2|\hrset|\hw /\hD=16N\cdot g^2/D\leq 16g^4$. We can now apply  Lemma~\ref{lemma: intersecting sets} to sets $(\Sigma_i,\qset_i)$ of paths, with parameters  $\hD,\hw$, to obtain a partition $(\Sigma'_i,\Sigma''_i)$ of $\Sigma_i$, and a subset $\qset'_i\subseteq\qset_i$ of paths, such that $(\Sigma'_i,\qset'_i)$ is a $(4g^2,D/2)$-intersecting pair of path sets, $|\qset'_i|\geq 2^{10}g^6$, and each path of $\Sigma_i''$ intersects at most $4g^2$ paths of $\qset_i'$.

So far our proof followed exactly the proof from Section~\ref{sec: Building the Crossbar}. Unfortunately, the number of slices $M_1$ that we obtain is smaller than the one in Section~\ref{sec: Building the Crossbar} by a factor of $g$. In order to get around this problem, we distinguish between two cases. Intuitively, in one of the cases (Case 2) we will be able to increase the number of slices so that we can employ the rest of the proof from Section~\ref{sec: Building the Crossbar}. In the other case (Case 1) the number of slices will stay low, but we will compensate this by ensuring that the cardinalities of the sets $\Sigma_i'$ are large in many of the slices.


For $1\leq i\leq  M_1$, we say that slice $(\Sigma_i,\qset_i)$ is of type $1$ iff $|\Sigma'_i|\geq N/g$, and we say that it is of type $2$ otherwise.
We say that Case $1$ happens if at least half the slices are of type $1$; otherwise, we say that Case $2$ happens.

\subsection{Case 1:  Many Type-$1$ Slices}

If Case $1$ happens, at least half the slices are of type 1. We will ignore all type-2 slices, so we can assume that in Case $1$ the number of slices is $64g^3\log g$, that we denote, abusing the notation, by $M_1$, and that in every slice $(\Sigma_i,\qset_i)$, $|\Sigma'_i|\geq N/g$. Recall that for each $i$, $(\Sigma'_i,\qset'_i)$ are $(4g^2,D/2)$-intersecting, and $|\qset_i'|\geq 2^{10}g^6$.

\subsection*{Step 4: Well-Linked Decomposition}
Fix some index $i$, such that $(\Sigma_i,\qset_i)$ is a type-1 slice. 
As before, we  let $H'_i$ be a graph obtained from the union of the paths in $\Sigma_i$ and $\qset_i$. And we denote $\hat D=D/4$ and $\hat w=g^2$, so that $(\Sigma'_i,\qset'_i)$ are $(4\hw,2\hD)$-intersecting, and $\hat D\geq 8\hat w$, since $D=64g^4$. As in Section~\ref{subsec: WLD}, we apply Theorem~\ref{thm: wld} to graph $H'_i$, with $\hat \Sigma=\Sigma'_i$, $\hat \qset=\qset'_i$, and parameters $\hD$, $\hw$. Let $\cset_i$ be the resulting collection of happy clusters. For each such cluster $C\in \cset_i$, we denote $\tilde\Sigma(C)=\hat\Sigma(C)$. As before, the endpoints of the paths of $\tilde\Sigma(C)$ are $g^2$-weakly well-linked in $C$, and $|\tilde\Sigma(C)|\geq D/4$. We denote $\tilde\Sigma_i=\bigcup_{C\in \cset_i}\tilde\Sigma(C_i)$. Notice that Theorem~\ref{thm: wld} guarantees that $|\tilde \Sigma_i|\geq |\Sigma'_i|/4\geq N/(4g)$.
Let $\cset$ be the union of all sets $\cset_i$, where $i$ is a type-$1$ slice, and let $\tilde\Sigma$ be the union of all corresponding sets $\tilde\Sigma_i$, so $|\tilde \Sigma|\geq M_1 N/4g$.

Intuitively, for some of the clusters $C\in \cset$, $|\tilde\Sigma(C)|$ may be very large, and then it is sufficient for us to have a small number of such clusters. But it is possible that for most clusters in $C$, $|\tilde\Sigma(C)|$ is relatively small -- possibly as small as $D/4$. In this case, it would be helpful for us to argue that the number of such clusters is large. In other words, we would like to obtain a tradeoff between the number of clusters $C$ and the cardinality of their corresponding path sets $|\tilde\Sigma(C)|$. 

In order to do so, we group the clusters geometrically. Recall that for each $C\in \cset$, $D/4\leq |\tilde\Sigma(C)|\leq N\leq g^2 D$. For $0\leq j< 2\log g+2$, we say that cluster $C$ belongs to class $\sset_j$ iff $D\cdot 2^j/4\leq |\tilde\Sigma(C)|<D\cdot 2^{j+1}/4$. If $C\in \sset_j$, then we  say that all paths in $\tilde\Sigma(C)$ belong to class $j$. Then there must be an index $j$, such that the number of paths in $\tilde{\Sigma}$ that belong to class $j$ is at least $\frac{|\tilde\Sigma|}{2\log g+2}\geq \frac{M_1 N}{16g\log g}$. We let $\tilde{\cset}'=\sset_j$, and we let $\tilde \Sigma'\subseteq\tilde \Sigma$ be the set of all paths that belong to class $j$. Note that $|\tilde{\cset}'|\geq \frac{|\tilde\Sigma'|}{D\cdot 2^{j+1}/4}\geq \frac{M_1N}{4D\cdot 2^{j+1}g\log g}=\frac{64g^3N\log g}{8D\cdot 2^{j}g\log g}=\frac{8Ng^2}{D\cdot 2^j}$, while for each $C\in \cset$, $|\tilde\Sigma(C)|\geq D\cdot 2^j/4$.

To summarize the steps in Case $1$ until now, we have obtained a collection of at least $\frac{8Ng^2}{D\cdot 2^j}$ clusters, that we denote, abusing the notation, by $\tilde\cset$. For each slice $i$, we may have a number of clusters of $\tilde \cset$ that belong to that slice. We denote the set of all such clusters by $\tilde\cset_i$. Notice however that, if $C,C'\in \tilde \cset_i$ and $C\neq C'$, then $\tilde\Sigma(C)\cap \tilde\Sigma(C')=\emptyset$. Moreover, if $\rset'\subseteq \rset$ is the set of all paths containing the segments of $\tilde\Sigma(C)$, and $\rset''$ is defined similarly for $C'$, then $\rset'\cap \rset''=\emptyset$. We are guaranteed that for each cluster $C\in \tilde\cset$, $|\tilde\Sigma(C)|\geq D\cdot 2^j/4$, and the endpoints of the paths of $\tilde\Sigma(C)$ are $g^2$-weakly well-linked in $C$.

\subsection*{Step 5: a Path-of-Sets System}

In this step, we employ Theorem~\ref{thm: long intersecting chain} to construct a weak \PoS of length and width $g^2$ in $H''$. We use the parameters $\hat M=|\tilde \cset|\geq\frac{8Ng^2}{D\cdot 2^j}$, $\hw=g^2$, $\hN=N$, and $\hD=\frac{D\cdot 2^j}{4}$. 
Recall that set $\cset$ may contain clusters from the same slice. We order the clusters in $\cset$ as follows: first, we order the clusters in the increasing order of their slices; the clusters inside the same slice are ordered arbitrarily. Let $C_1,\ldots,C_{\hM}$ be the resulting ordering of the clusters. We define the sets $S_1,\ldots,S_{\hM}$ exactly as before. An important observation is that, if $C_i,C_j$ belong to the same slice, then $S_i\cap S_j=\emptyset$. 

We need to verify that the conditions of Theorem~\ref{thm: long intersecting chain} hold for our choice of parameters. The first condition is that $\hN\geq 3\hw$. Since we use the same parameters $\hN$ and $\hw$ as in Section~\ref{subsec:weakPoS}, this inequality continues to hold. The second condition is $\hD^2\geq 4\hat{N}\hw$. Recall that this condition held for the parameters in Section~\ref{sec: Building the Crossbar}, where the values of $\hat{N}$ and $\hw$ were the same, and $\hD=D/4$ was smaller than the current value of $\hD$, so this condition continues to hold. The third condition is that $\hD\hM\geq 2\hat{N}\hw$. This condition is easy to verify by substituting the values of the relevant parameters.

We now use Theorem~\ref{thm: long intersecting chain} to obtain a sequence $1\leq i_1< i_2<\cdots<i_{g^2}\leq \hM$ of indices, such that for all $1\leq j<g^2$, $|S_{i_j}\cap S_{i_{j+1}}|\ge g^2$. Notice that each resulting cluster $C_{i_1},C_{i_2},\ldots,C_{i_{g^2}}$ must belong to a different slice. The remainder of the construction of the weak Path-of-Sets system is done exactly as in subsection~\ref{subsec:weakPoS}. Therefore, we obtain a weak Path-of-Sets system of width $g^2$ and length $g^2$. Our last step is to convert it into a strong Path-of-Sets system using the same procedure as in Section~\ref{subsec: strongPoS}.

\subsection{Case 2:  Many Type-$2$ Slices}

Assume now that Case $2$ happens, so we have at least $\frac{1}{2}M_1=64g^3\log g$ type-2 slices $\{(\Sigma_i,\qset_i)\}_{i}$. Recall that for each $i$, we are given a partition $(\Sigma'_i,\Sigma''_i)$ of $\Sigma_i$, and a subset $\qset'_i\subseteq \qset_i$ of at least $2^{10}g^6$ paths, such that $(\Sigma'_i,\qset'_i)$ are $(4g^2,D/2)$-intersecting, and every path in $\Sigma''_i$ interescts at most $4g^2$ paths of $\qset'_i$. Recall that in each type-$2$ slice $(\Sigma_i,\qset_i)$, $|\Sigma'_i|<N/g$.

 Intuitively, we are interested in either obtaining a large number of slices, or in obtaining a large number of paths in the sets $\Sigma_i'$ of each such slice. Type-$1$ slices achieve the latter. But in type-$2$ slices, the cardinalities of sets $\Sigma_i'$ are too small for us. Fortunately, we can exploit this fact in order to increase the number of slices.

\begin{theorem}\label{thm: slicing for case 2}
	Assume that Case $2$ happens, and that $H$ does not contain an $(A,B,X)$-crossbar of width $g^2$. Then there is an $M_2$-slicing of $\rset$, of width at least $N/g$ with respect to $\qset''$, where $M_2=8g^4\log g$.
\end{theorem}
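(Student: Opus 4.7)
The plan is to further subdivide each of the at least $M_1/2 = 64g^3\log g$ type-$2$ slices into $g$ pieces of width $\ge N/g$, while leaving each type-$1$ slice as is. A type-$1$ slice inherits width $\ge w = 2^{11}g^6 \ge N/g$ from the original slicing (using $N \le 64g^6$), so the combined slicing will have at least $(M_1/2) \cdot g = 32g^4 \log g \ge M_2$ pieces, each of width at least $N/g$ with respect to $\qset''$, as required.

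Fix a type-$2$ slice $(\Sigma_i,\qset_i)$, so $|\Sigma'_i| < N/g$ while $|\qset'_i| \ge 2^{10}g^6$. Let $\bset \subseteq \qset'_i$ be the paths intersecting $\Sigma''_i$ and set $\tqset_i := \qset'_i \setminus \bset$. The main technical step, and the principal obstacle, is to show that either $H$ contains an $(A,B,X)$-crossbar of width $g^2$, or $|\tqset_i| \ge 2^9 g^6$. Partition $\bset = \bset_1 \cup \bset_2$, where $\bset_1$ consists of paths meeting at least $8g^2$ segments of $\Sigma''_i$. Double-counting intersection pairs $(\sigma, Q) \in \Sigma''_i \times \bset$, using $|\Sigma''_i| \le 64g^6$ and the fact (from Lemma~\ref{lemma: intersecting sets}) that each $\sigma \in \Sigma''_i$ meets at most $4g^2$ paths of $\qset'_i$, gives $|\bset_1| \le 256g^8/(8g^2) = 32g^6$. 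If $|\tqset_i| < 2^9 g^6$ then $|\bset_2| > 2^9 g^6 - 32g^6 \ge 2^8 g^6$, and I would greedily extract an $(A,B,X)$-crossbar over $g^2$ iterations: pick a path $Q \in \bset_2$ and some $\sigma$ in the current $\Sigma''_i$ that $Q$ intersects, add the unique path $R \in \rset$ containing $\sigma$ to $\pset^*$, and add to $\qset^*$ the sub-path of $Q$ from its last intersection with $R$ to its endpoint in $X$. Then delete from $\Sigma''_i$ the (at most $8g^2$) segments hit by $Q$, together with the (at most $8g^2 \cdot 4g^2 = 32g^4$) paths of $\bset_2$ touching any of them. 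Since $g^2 \cdot 32g^4 = 32g^6 < 2^8 g^6$, the procedure runs for the full $g^2$ iterations. Disjointness of $\pset^*$ follows because its members come from the node-disjoint family $\rset$; disjointness of $\qset^*$ follows because its members are sub-paths of distinct members of the node-disjoint family $\qset'_i$; and the crossbar condition $Q_P \cap P' = \emptyset$ for $P' \ne P$ follows from the invariant, maintained by the clean-up in each iteration, that every path still in $\bset_2$ meets $\rset$ only inside segments still present in $\Sigma''_i$.

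Given $|\tqset_i| \ge 2^9 g^6$, I would next argue that the graph $H'_i$ formed by $\Sigma'_i \cup \tqset_i$ has the perfect unique linkage property with respect to the endpoint sets $A'_i, B'_i$ of $\Sigma'_i$. Because $\tqset_i$ is disjoint from $\Sigma''_i$, the $\Sigma'_i$-components of $\Sigma_i \cup \tqset_i$ are separated from $\Sigma''_i$, and combined with the uniqueness of the $(A', B')$-linkage $\rset$ in $H''$ (Observation~\ref{obs: properties of H''}) this yields the claim. Theorem~\ref{thm: slicing} then applies with $\hN = |\Sigma'_i| \le N/g$, $\hM = g$, and $\hw = N/g$, since the hypothesis $|\tqset_i| \ge \hM\hw + (\hM+1)\hN$ reduces to $2N + N/g \le 3N \le 192g^6$, which is dominated by $|\tqset_i| \ge 2^9 g^6$. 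This produces a $g$-slicing of $\Sigma'_i$ of width $N/g$ with respect to $\tqset_i$, which I would extend trivially to a $g$-slicing of $\Sigma_i$ by setting $v_1(\sigma) = \cdots = v_g(\sigma)$ to the last endpoint of every $\sigma \in \Sigma''_i$; disjointness of $\tqset_i$ from $\Sigma''_i$ preserves the width bound (and $\tqset_i \subseteq \qset''$ passes it on to $\qset''$). Splicing these per-slice $g$-slicings into the original $M_1$-slicing by inserting the $g-1$ new markers between $v_{i-1}(R)$ and $v_i(R)$ for each type-$2$ index $i$ yields the claimed $M_2$-slicing of $\rset$ of width $\ge N/g$ with respect to $\qset''$.
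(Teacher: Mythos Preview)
Your proposal is correct and follows essentially the same argument as the paper: isolate each type-$2$ slice, show via double-counting and a greedy crossbar extraction that at least $2^9g^6$ paths of $\qset'_i$ avoid $\Sigma''_i$, then apply Theorem~\ref{thm: slicing} to $\Sigma'_i$ (using the perfect unique linkage inherited from $H''$) and extend trivially over $\Sigma''_i$. Two cosmetic slips worth fixing: $(M_1/2)\cdot g = 64g^4\log g$, not $32g^4\log g$; and your stated invariant ``every path still in $\bset_2$ meets $\rset$ only inside segments still present in $\Sigma''_i$'' is literally false (paths of $\bset_2\subseteq\qset'_i$ also meet $\Sigma'_i$), though the intended and sufficient invariant---that remaining $\bset_2$-paths meet $\Sigma''_i$ only in still-present segments---is exactly what your clean-up maintains, and it suffices because every $P'\in\pset^*$ has its $i$th segment in $\Sigma''_i$.
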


\begin{proof}
	The proof directly follows from the following lemma.

	\begin{lemma}\label{lem: case 2 slicing of single slice}
		Assume that $H$ does not contain an $(A,B,X)$-crossbar of width $g^2$. Let $1\leq i\leq  M_1$ be an index, such that $(\Sigma_i,\qset_i)$ is a type-$2$ slice of the original slicing $\Lambda$. Then there is an $\hM$-slicing $\Lambda_i$ of $\Sigma_i$ of width at least $N/g$ with respect to $\qset_i$, where $\hM=g$.
	\end{lemma}

Before we provide the proof of Lemma~\ref{lem: case 2 slicing of single slice}, we complete the proof of Theorem~\ref{thm: slicing for case 2} using it.	
Recall that in Case 2, there are at least $64g^3\log g$ slices of $\Lambda$ of type-$2$ slices. Each such slice $(\Sigma_i,\qset_i)$ is then in turn sliced into $g$ slices using Lemma~\ref{lem: case 2 slicing of single slice}. Therefore, by combining the slicing $\Lambda$ together with the individual slicings $\Lambda_i$ for all type-$2$ slices $(\Sigma_i,\qset_i)$, we obtain a slicing of $\rset$, where the number of slices is at least $(64g^3\log g)\cdot g>8g^4\log g=M_2$. The width of the new slicing with respect to $\qset''$ is at least $N/g$.

 From now on we focus on the proof of Lemma~\ref{lem: case 2 slicing of single slice}. Let $1\leq i\leq M_1$ be an index, such that $(\Sigma_i,\qset_i)$ is a type-$2$ slice of the original slicing $\Lambda$. Our goal is to produce an $\hM$-slicing $\Lambda_i$ of $\Sigma_i$ of width at least $N/g$ with respect to $\qset_i$, where $\hM=g$. The idea is that we will discard all paths in $\qset_i\setminus\qset_i'$; ignore the paths in $\Sigma_i''$ (for each such path $\sigma\in \Sigma_i''$ we will eventually produce a trivial slicing, where $v_0(\sigma)$ is the first endpoint of $\sigma$, and $v_1(\sigma)=v_2(\sigma)=\cdots=v_{\hM}(\sigma)$ is its last endpoint), and will focus on slicing the paths of $\Sigma_i'$, trying to achieve a slicing whose width is at least $N/g$ with respect to $\qset_i'$. Unfortunately, some of the paths in $\qset_i'$ may intersect the paths of $\Sigma_i''$, so our first step is to get rid of all such intersections. We do so using the following claim. Recall that $|\qset'_i|\geq 2^{10}g^6$.
	
\begin{claim}\label{claim: type-2 slice, small intersection}
		Assume that $H$ does not contain an $(A,B,X)$-crossbar of width $g^2$, and let $(\Sigma_i,\qset_i)$ be a slice of type 2. Then at least $2^9g^6$ paths in the set $\qset'_i$ are disjoint from the paths in $\Sigma''_i$.
\end{claim}
	
	\begin{proof}
		Assume otherwise. Let $\bset\subseteq \qset'_i$ be the set of all paths that have non-empty intersection with paths in $\Sigma''_i$, so $|\bset|\geq 2^9g^6$. We further partition the set $\bset$ into two subsets: set $\bset_1$ contains all paths $Q\in \bset$ that intersect at least $8g^2$ paths of $\Sigma''_i$, and $\bset_2=\bset\setminus\bset_1$.
		
		We first show that $|\bset_1|\leq 32g^6$. Recall that $|\Sigma''_i|\leq N\leq 64g^6$, and each path $\sigma\in \Sigma''_i$ intersects at most $4g^2$ paths of $\qset'_i$. Therefore, there are at most $256g^8$ pairs $(\sigma,Q)$ of paths, with $\sigma\in \Sigma''_i$ and $Q\in \bset$, such that $\sigma\cap Q\neq \emptyset$. As each path of $\bset_1$ intersects at least $8g^2$ paths of $\Sigma''_i$, we get that $|\bset_1|\leq 256g^8/8g^2\leq 32g^6$.
		
		We conclude that $|\bset_2|\geq 2^8g^6$. We exploit this fact to construct a $(A,B,X)$-crossbar of width $g^2$ in $H$. In order to do so, we perform $g^2$ iterations, where in each iteration we add some path $P$ connecting a vertex of $A$ to a vertex of $B$ to the crossbar, and its corresponding path $Q_P$, thus iteratively constructing the crossbar $(\pset^*,\qset^*)$. In every iteration, we will delete some paths from $\Sigma_i''$ and from $\bset_2$. The path $P$ that we add to $\pset^*$ is a path from $\rset$, that contains some segment of $\Sigma_i''$, and its corresponding path $Q_P$ is a sub-path of a path in $\bset_2$. We start with $\pset^*,\qset^*=\emptyset$, and we maintain the following invariants:
		
		\begin{itemize}
			\item All paths in the current sets $\pset^*,\qset^*$ are disjoint from all paths in $\Sigma''_i,\bset_2$; moreover, each path $R\in \rset$ that contains a segment $\sigma\in \Sigma''_i$ is disjoint from all paths in $\pset^*\cup \qset^*$; and
			\item Each remaining path in $\bset_2$ intersects some path in the remaining set $\Sigma''_i$.
		\end{itemize}
		
		At the beginning, $\pset^*,\qset^*=\emptyset$, and the invariants hold. 
		Assume now that the invariants hold at the beginning of iteration $j$. The $j$th iteration is executed as follows. We let $Q\in \bset_2$ be any path, and we let $\sigma\in \Sigma_i''$ be any path intersecting $Q$. We add the unique path $P\in \rset$ that contains $\sigma$ to $\pset^*$, and we add a sub-path of $Q$, connecting a vertex of $P$ to a vertex of $X$ to $\qset^*$, as $Q_P$ (recall that by the definition of a pseudo-grid, every path in $\qset'$ has an endpoint that lies in $X$)\footnote{In fact we have contracted edges on the paths in $\qset'$ when constructing the graph $H''$ in Step 2, in order to ensure the perfect unique linkage property; formally, in order to obtain the path $Q_P$, we need to un-contract the path $Q$ and then take a segment of the resulting path connecting a vertex of $P$ to a vertex of $X$.}. Let $\sset_j$ be the collection of all paths of the current set $\Sigma_i''$ that intersect $Q$, so $|\sset_j|<8g^2$. Let $\yset_j\subseteq \bset_2$ be the set of all paths that intersect the paths of $\sset_j$, so $|\yset_j|\leq |\sset_j|\cdot 4g^2\leq  32g^4$. We delete the paths of $\sset_j$ from $\Sigma''_i$, and we delete from $\bset_2$ all paths of $\yset_j$. It is easy to verify that the invariants continue to hold after this iteration (for the second invariant, recall that each path of the original set $\bset_2$ intersected some path of $\Sigma''_i$; whenever a path $\sigma$ is deleted from $\Sigma''_i$, we delete all paths that intersect it from $\bset_2$. Therefore, each path that remains in $\bset_2$ must intersect some path that remains in $\Sigma''_i$). In every iteration, at most $32g^4$ paths are deleted from $\bset_2$, while at the beginning $|\bset_2|\geq 2^8g^6$. Therefore, we can carry this process for $g^2$ iterations, after which we obtain an $(A,B,X)$-crossbar of width $g^2$.
	\end{proof}
	
	We let $\tqset_i\subseteq \qset'_i$ be the set of at least $512g^6$ paths of $\qset'_i$ that are disjoint from the paths of $\Sigma''_i$. 
	
	Let $A_i$ be the set of vertices that serve as the first endpoint of the paths in $\Sigma_i$, and let $A'_i\subseteq A_i$ be defined similarly for $\Sigma'_i$. Similarly, let  $B_i$ be the set of vertices that serve as the last endpoint of the paths in $\Sigma_i$, and let $B'_i\subseteq B_i$ be defined similarly for $\Sigma'_i$. We let $H_i$ be the graph obtained from the union of the paths in $\Sigma_i$ and $\qset_i$, and we let $H'_i$ be the  graph obtained from the union of the paths in $\Sigma'_i$ and $\tqset_i$.
	
	\begin{observation}\label{obs: perfect unique linkage}
		Graph $H'_i$ has the perfect unique linkage property with respect to $(A'_i,B'_i)$, with the unique linkage being $\Sigma'_i$.
	\end{observation}
	
	Assume first that the observation is true. Recall that $|\Sigma'_i|\leq N/g$, and $|\tqset_i|\geq 512g^6$. We now invoke Theorem~\ref{thm: slicing} with $\hrset=\Sigma'_i$, $\hqset=\tilde{\qset}_i$, $\hN=|\Sigma'_i|\leq N/g$; $\hM=g$ and $\hw=N/g$, to obtain an $\hM$-slicing of $\Sigma'_i$ of width $N/g$ with respect to $\tqset_i$. In order to do so, we need to verify that $|\tqset_i |\geq \hM \hw+(\hM+1)|\Sigma'_i|$. But $\hM\hw+(\hM+1)|\Sigma'_i|\leq (2\hM+1)N/g\leq 6N$, while $|\tqset_i|\geq 512g^6\geq 6N$, as $N\leq Dg^2=64g^6$. We conclude that there exists an $\hM$ slicing $\Lambda_i$ of $\Sigma'_i$, whose width with respect to $\tqset_i$ is $N/g$. We extend this slicing to an $\hM$-slicing of $\Sigma_i$ in a trivial way: for every path $\sigma\in \Sigma''_i$, we let $v_0(\sigma)$ be its first endpoint, and we let $v_1(\sigma)=v_2(\sigma)=\cdots=v_{\hM}(\sigma)$ be its last endpoint. Since the paths of $\tqset_i$ are disjoint from the paths of $\Sigma''_i$, this defines an $\hM$-slicing of $\Sigma_i$ of width $N/g$ with respect to $\tqset_i$, and hence with respect to $\qset_i$ as well.
	It now remains to prove Observation~\ref{obs: perfect unique linkage}.
	
	\begin{proofof}{Observation~\ref{obs: perfect unique linkage}}
		We first claim that $H_i$ has the unique perfect linkage property with respect to $(A_i,B_i)$, with the unique linkage being $\Sigma_i$. Indeed, for each path $R\in \rset$, no vertex of $R\setminus \sigma_i(R)$ belongs to $H_i$. If there is a different $(A_i,B_i)$ linkage $\tilde\Sigma_i\neq \Sigma_i$, then we could replace $\Sigma_i$ with $\tilde \Sigma_i$ in $\rset$, obtaining a different $(A',B')$-linkage in $H''$, violating Observation~\ref{obs: properties of H''}.
		It is also immediate to verify that every vertex of $H_i$ should lie on some path in $\Sigma_i$, as $\rset$ has the perfect unique linkage property in $H''$.
		
		Consider now the graph $H_i$, after we delete all paths of $\qset_i\setminus \tqset_i$ from it. In this new graph, no path of $\Sigma'_i$ may lie in the same connected component as a path of $\Sigma''_i$, as we have deleted all paths that may intersect the paths of $\Sigma''_i$. Clearly, this new graph still has the perfect unique linkage property with respect to $(A_i,B_i)$, with the unique linkage being $\Sigma_i$. Graph $H'_i$ is obtained from $H_i$ by deleting all paths of $\Sigma''_i$ from it. Since each of these paths lies in a distinct connected component, it is easy to verify that $H'_i$ must have the perfect unique linkage property with respect to $(A'_i,B'_i)$, with the unique linkage being $\Sigma'_i$.
\end{proofof}\end{proof}

If Theorem~\ref{thm: slicing for case 2} returns an $(A,B,X)$-crossbar of width $g^2$, then we output this crossbar and terminate the algorithm. Therefore, we assume from now on that the theorem returns an $M_2$-slicing of $\rset$, whose width with respect to $\qset''$ is at least $N/g$. We will ignore the original slicing for Case 2, and will denote this new slicing by $\Lambda$. Abusing the notation, we denote, for each $1\leq i\leq M_2$, the $i$th slice of this new slicing by $(\Sigma_i,\qset_i)$, where $\qset_i\subseteq \qset''$ and $\Sigma_i$ contains the $i$th segment of each path $R\in \rset$.

As before, for each $1\leq i\leq M_2$, we employ Lemma~\ref{lemma: intersecting sets} in order to partition the set $\Sigma_i$ into two subsets, $\Sigma'_i,\Sigma''_i$, and compute a subset $\qset_i'\subseteq \qset_i$ of paths, such that $(\Sigma'_i,\qset'_i)$ are $(4g^2,D/2)$-intersecting path sets and $|\qset'_i|\geq N/(2g)$. In order to do so, we set $\hat w=4g^2$ and $\hat D=D/2$, and we denote $\hrset=\Sigma_i$ and $\hqset=\qset_i$. 
As before, every path of $\qset_i$ intersects at least $D=2\hat D$ paths of $\Sigma_i$.

We now to verify that $|\hqset|\geq \frac{2|\hrset|\hw}{\hD}$. Indeed, $|\hqset|=|\qset_i|\geq 2|\Sigma_i|\cdot 4g^2/(D/2)=16Ng^2/D$, as $D=64g^4$. We can now use Lemma~\ref{lemma: intersecting sets} to obtain a partition $(\Sigma'_i,\Sigma''_i)$ of $\Sigma$ into two subsets, and a subset $\qset'_i\subseteq\qset_i$ of paths, such that $(\Sigma'_i,\qset'_i)$ are $(4g^2,D/2)$-intersecting path sets and $|\qset'_i|\geq N/(2g)$.

The remainder of the proof repeats the Steps 4--6 from the proof in Section~\ref{sec: Building the Grid}. Observe that our starting point is now identical to the starting point of Step 4 in Section~\ref{sec: Building the Crossbar}.
The values of parameters $D$ and $N$ remained unchanged. We have computed a slicing $\Lambda$ of $\rset$ into $M_2=8g^4\log g$ slices - which is equal to the number of slices used in Section ~\ref{sec: Building the Crossbar}. We have also computed, for each $1\leq i\leq M_2$, subsets $\Sigma'_i\subseteq \Sigma_i$, and $\qset'_i\subseteq \qset_i$ of node-disjoint paths, such that $(\Sigma'_i,\qset'_i)$ are $(4g^2,D/2)$-interesecting. Therefore, we can now repeat Steps 4--6 from the proof in Section~\ref{sec: Building the Grid} to obtain a strong Path-of-Sets system of length $\Omega(g^2)$ and width $\Omega(g^2)$ in $H''$.

\subsection{Parameters} \label{subsec: parameters}

For convenience, this subsection summarizes the main parameters used in Section~\ref{sec: stronger structural theorem}.

\begin{itemize}
\item The cardinalities of the original sets $A,B$ and $X$ of vertices from the statement of Theorem~\ref{thm: main for building crossbar} are $\kappa\geq 2^{22}g^9\log g$.
\item The depth of the pseudo-grid is $D=64g^4$. We obtain $D$ sets $\rset_1,\ldots,\rset_D$ of paths, of cardinality at most $g^2$ each.

\item The total number of paths in the set $\rset$ is $N$, where $D\leq N\leq g^2 D$, so $64g^4\leq N\leq 64g^6$.

\item The number of slices in the first slicing is $M_1=128g^3\log g$. The width of the slicing is $w=2^{11}g^6$. For each type-1 slice $i$, $|\Sigma'_i|\geq N/g$.

\item The number of slices in the second slicing is $M_2=8g^4\log g$, and the width is $w_2=N/g$. 

\item In both Case 1 and Case 2, for each slice $i$, we define subsets $\Sigma'_i\subseteq \Sigma_i$ and $\qset'_i\subseteq\qset_i$ of paths, such that $(\Sigma'_i,\qset'_i)$ are $(4g^2,D/2)$-intersecting.

\end{itemize}

\bibliographystyle{alpha}
\bibliography{REF}

\appendix

\section{Proofs Omitted from Section
	 \ref{sec:prelim}}



\subsection{Proof of Claim~\ref{claim: stitching}}

In order to prove the claim, we only need to show how to construct, for each $1\leq i<\ceil{\ell/2}$, a set $\hat \pset_i$ of $w'$ node-disjoint paths, connecting  the vertices of $B'_{2i-1}$ to the vertices of $A'_{2i+1}$, so that the paths of $\bigcup_{i=1}^{\ceil{\ell/2}}\hat \pset_i$ are disjoint from each other, and they are internally disjoint from $\bigcup_{j=1}^{\ceil{\ell/2}}V(C'_j)=\bigcup_{j=1}^{\ceil{\ell/2}}V(C_{2j-1})$. In order to do so, it is enough to show that for each  $1\leq i<\ceil{\ell/2}$, there is a set $\hat \pset_i$ of $w'$ node-disjoint paths, connecting the vertices of $B'_{2i-1}$ to the vertices of $A'_{2i+1}$, such that the paths of $\hat \pset_i$ are contained in the graph $\pset_{2i-1}\cup C_{2i}\cup \pset_{2i}$.

Fix some $1\leq i<\ceil{\ell/2}$. Let $\hat B=B'_{2i-1}$, and let $\hat A=A'_{2i+1}$. Let $\tilde \pset_1\subseteq \pset_{2i-1}$ be the set of paths originating from the vertices of $\hat B$, and let $\hat X$ be the set of endpoints of the paths of $\tilde \pset_1$ that lie in $C_{2i}$ (see Figure~\ref{fig: stitching}). Similarly, let $\tilde \pset_2\subseteq \pset_{2i}$ be the set of paths terminating at the vertices of $\hat A$, and let $\hat Y$ be the set of endpoints of these paths that lie in $C_{2i}$. Clearly, $\hat X\subseteq A_{2i}$ and $\hat Y\subseteq B_{2i}$. Since $(A_{2i},B_{2i})$ are linked in $C_{2i}$, there is a set $\qset$ of $w'$ node-disjoint paths connecting $\hat X$ to $\hat Y$ in $C_{2i}$. By combining $\tilde \pset_1,\qset,\tilde \pset_2$, we obtain the desired set $\hat \pset_i$ of $w'$ node-disjoint paths connecting $\hat B$ to $\hat A$.

\begin{figure}[h]
	\label{fig: stitching}
	\centering
	\scalebox{0.4}{\includegraphics{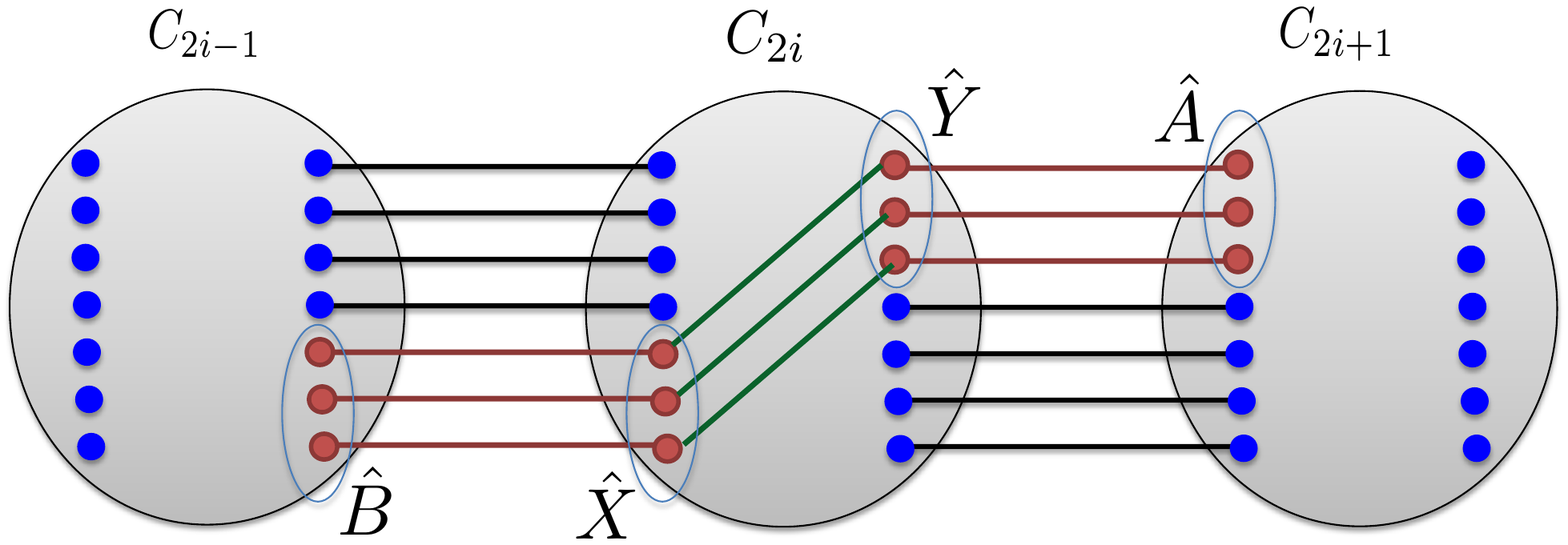}}
	\caption{Stitching in a \PoS. The paths of $\tilde \pset_1$ and $\tilde \pset_2$ are shown in red, the paths of $\qset$ in green. }
\end{figure}

\subsection{Proof of Theorem~\ref{thm: building hairy PoS}}
Our starting point is the following two theorems, that were proved in~\cite{tw-sparsifiers} and~\cite{CC14}, respectively.

\begin{theorem}[Theorem 1.1 in \cite{tw-sparsifiers}]
	\label{thm: degree reduction}
	Let $G$ be a graph of treewidth $k$. Then there is a subgraph $G'$ of $G$, whose maximum vertex degree is $3$, and $\tw(G')=\Omega(k/\poly\log k)$.
\end{theorem}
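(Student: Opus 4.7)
The plan is to extract the desired low-degree sub-graph from a \PoS that is already guaranteed to sit inside $G$. First, I would invoke the result of Chekuri--Chuzhoy (used repeatedly in this paper) to obtain a strong \PoS $\pos=(\cset,\{\pset_i\},A_1,B_\ell)$ in $G$ of length $\ell=\Theta(\log k)$ and width $w=\Omega(k/\polylog k)$. The connector paths $\pset_i$ already have max degree $2$, so the real work is to sparsify each cluster $C_i$ while retaining enough boundary well-linkedness to certify large treewidth at the end.

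Second, the heart of the argument is a local degree-reduction lemma: given a cluster $C$ together with a well-linked vertex set $T\subseteq V(C)$ of size $w$, produce a sub-graph $C^\ast\subseteq C$ of maximum degree $3$ that contains some $T^\ast\subseteq T$ with $|T^\ast|=\Omega(w/\polylog w)$ and $T^\ast$ well-linked in $C^\ast$. A natural route is to first replace $C$ by a simultaneous edge-disjoint realization of all the routing demands implied by $T$-well-linkedness (obtained from a fractional multicommodity flow that is then rounded to integral edge-disjoint paths losing only a $\polylog$ factor), then prune the edges outside the support of these routings. After pruning, the graph only contains edges that actually ``carry flow'', and a union of integral edge-disjoint path systems naturally has bounded average degree; a second pass turns this into a bounded-degree subgraph by repeatedly splitting off high-degree vertices through local rerouting, at the cost of further shrinking $T$ by a $\polylog$ factor. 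Finally, standard grouping/boosting (using the boosting theorems of~\cite{CC14} stated later in the paper) converts this sparsified cluster into one with node-well-linked boundary and max degree $3$.

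Third, I would stitch the sparsified clusters together using the connector paths $\pset_i$, which are themselves degree-$2$. Because each pruned cluster already has max degree $3$, and each vertex of the residual $T^\ast_i$ is hit by at most one connector path endpoint on each side, an easy local check (possibly subdividing one edge per attachment point) keeps the whole subgraph of max degree $3$. The final subgraph $G'$ inherits a well-linked set of size $\Omega(w/\polylog w)=\Omega(k/\polylog k)$, which certifies $\tw(G')=\Omega(k/\polylog k)$ via the standard equivalence between well-linked set size and treewidth.

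The main obstacle is the degree-reduction inside a single cluster: naive edge deletion to enforce max degree $3$ typically destroys the connectivity that makes $T$ well-linked, so one must argue that the support of a carefully chosen multicommodity routing can be made simultaneously sparse (few edges per vertex on average) and structurally rich (still well-linked on a large subset of $T$). This is where the $\polylog$ loss genuinely enters; squeezing past it, or avoiding the use of multicommodity flow machinery altogether, seems to require the cut-matching-game and splitting-off tools of~\cite{CC14,edge-connectivity}, and getting the accounting tight is the technically delicate step of the whole argument.
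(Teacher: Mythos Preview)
The paper does not prove this statement; it is quoted verbatim as Theorem~1.1 of~\cite{tw-sparsifiers} and invoked as a black box in the proof of Theorem~\ref{thm: building hairy PoS}. There is therefore no proof in the present paper to compare your proposal against.

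Evaluating your proposal on its own merits: the outer architecture---invoke~\cite{CC14} to get a \PoS of length $\Theta(\log k)$ and width $\Omega(k/\poly\log k)$, then sparsify cluster by cluster---is indeed the shape of the argument in~\cite{tw-sparsifiers}, as the introduction of the present paper already indicates. The gap is in your Step~2. You propose to enforce degree~$3$ by ``splitting off high-degree vertices through local rerouting''; but splitting off at a vertex $v$ replaces two incident edges $(u,v),(v,w)$ by a new edge $(u,w)$, and the resulting graph is \emph{not} a subgraph of $G$ unless $(u,w)$ was already there. The same objection applies to your Step~3, where you suggest ``subdividing one edge per attachment point'': subdivision introduces a new vertex and again leaves the class of subgraphs. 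Since the theorem demands $G'\subseteq G$, neither operation is admissible, and both are load-bearing in your outline.

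A second, softer issue: your ``simultaneous edge-disjoint realization of all the routing demands implied by $T$-well-linkedness'' is not well-posed---there are exponentially many demand pairs---and even for a single demand, an edge-disjoint routing can pile unbounded degree onto a single vertex, so ``bounded average degree'' does not follow from edge-disjointness alone. The actual argument in~\cite{tw-sparsifiers} must control \emph{node} congestion throughout and works purely by deletion; getting that to go through without the splitting-off crutch is precisely the delicate step you flag in your last paragraph, and your sketch does not yet supply it.
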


\begin{theorem}[Theorem 3.4 in \cite{CC14}]
	\label{thm: building PoS}
	There are constants $\hat c,\hat c'>0$, such that for all integers $\ell,w,k>1$ with $k/\log^{\hat c'}k>\hat cw\ell^{48}$, every graph $G$ of treewidth at least $k$ contains a strong \PoS of length $\ell$ and width $w$.
\end{theorem}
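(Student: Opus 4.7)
The plan is to invoke the framework of Chekuri and Chuzhoy, which established exactly this statement.  The overall strategy splits into two big phases: (i) extract from $G$ a large collection of disjoint ``good routers'' that each look internally like a well-linked piece that is tied to a common global terminal set, and (ii) stitch a suitable subset of these routers together into a strong \PoS.  Throughout, the heavy numerical loss in the hypothesis ($\ell^{48}$ and the extra $\log^{\hat c'}k$ factor) is absorbed across these two phases.

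First I would set up the terminals.  Since $\tw(G)\ge k$, a standard argument (via the duality between treewidth and the largest well-linked set) yields a vertex set $T\subseteq V(G)$ of size $\Theta(k)$ that is node-well-linked in $G$.  A \emph{good router} is a connected subgraph $R\subseteq G$ whose boundary $\Gamma(R)$ (the set of vertices of $R$ with a neighbour outside $R$) satisfies (a) $\Gamma(R)$ is well-linked inside $R$, and (b) there is a large set of node-disjoint paths from $\Gamma(R)$ to $T$ in $G$.  The key router-extraction lemma asserts that, provided $k/\log^{\hat c'}k\gg w\ell^{48}$, one can find $\ell$ vertex-disjoint good routers $R_1,\ldots,R_\ell$ in $G$, each of ``capacity'' $\Theta(w\cdot\poly(\ell))$.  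This step uses a sparse-cut-driven well-linked decomposition of $G$ around $T$, the cut-matching game of~\cite{KRV} to certify the required expansion, element-connectivity reductions to translate between edge-style and node-style cuts while preserving routability to $T$, edge-splitting to further reduce to clean subinstances, and the LP-based bounded-degree spanning tree rounding of~\cite{Singh-Lau} to control degrees during the reductions.  The factor $\ell^{48}$ is the price one pays for iterating these reductions enough times to produce $\ell$ disjoint routers with only a constant-factor degradation in boundary size.

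In the second phase I would linearly order the routers $R_1,\ldots,R_\ell$ and build the linking paths $\pset_i$.  Since each $\Gamma(R_i)$ routes to $T$ via many node-disjoint paths in $G\setminus\bigcup_{j\neq i}R_j$, and since $T$ itself is well-linked in $G$, a careful greedy/inductive selection produces, for each $1\le i<\ell$, a collection of $w$ node-disjoint paths from $\Gamma(R_i)$ to $\Gamma(R_{i+1})$ whose interiors avoid all routers; these paths, restricted to the portion outside the routers, are the set $\pset_i$.  Finally, inside each $R_i$ I would apply the boosting Theorems~\ref{thm: grouping} and~\ref{thm: linkedness from node-well-linkedness} to the endpoints of $\pset_{i-1}$ and $\pset_i$ on $\Gamma(R_i)$, converting the weak (edge-) well-linkedness of the boundary into the strong requirements of a strong \PoS, namely node-well-linkedness of each $A_i$ and $B_i$ and linkedness of $(A_i,B_i)$ in $R_i=C_i$.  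The width after this step is still $\Omega(w)$ after rescaling, as the boosting loses only constant factors when applied to bounded-degree pieces (which is why one first invokes Theorem~\ref{thm: degree reduction} to pass to a bounded-degree subgraph).

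The hard part is the router-extraction lemma of the first phase.  One needs routers that are simultaneously (a) vertex-disjoint, (b) internally well-linked on their boundaries, and (c) well-connected to the \emph{shared} terminal set $T$ with enough node-disjoint paths to the outside to later support the linking phase; naive iterative approaches rapidly lose either well-linkedness or routability, and it is precisely the combination of element-connectivity preservation with Singh--Lau rounding that lets the loss per router remain polynomial rather than exponential in $\ell$.  Everything else (the terminal extraction, the linking phase, and the boosting) is essentially routine given this extraction lemma.
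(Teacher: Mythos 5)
This statement is Theorem~3.4 of Chekuri--Chuzhoy, which the paper imports as a black box and does not re-prove; there is no internal proof to compare your sketch against. What you have written correctly mirrors the high-level strategy that the paper itself summarizes in its introduction (well-linked terminal set of size $\Theta(k)$, extraction of many disjoint good routers whose boundaries are well-linked and route to $T$, linking and then boosting to a strong \PoS, with the losses absorbed into the $\ell^{48}$ and $\log^{\hat c'}k$ factors). So as a \emph{description} of the CC14 argument, you are on target and your identification of the moving parts (cut--matching game, element-connectivity preservation, edge-splitting, Singh--Lau rounding) is accurate.

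However, as a \emph{proof} this does not discharge the statement. You explicitly flag the router-extraction lemma as ``the hard part'' and then leave it as a black box; but that lemma is essentially the entire content of the theorem, and nothing in this paper (or in your sketch) proves it. Two smaller inaccuracies are worth noting: the degree reduction via Theorem~\ref{thm: degree reduction} is not a prerequisite of the CC14 construction (in CC14 the \PoS{} is built in $G$ directly, with degree reduction being a separate corollary), and the routing in the linking phase is not from $\Gamma(R_i)$ to $T$ inside $G\setminus\bigcup_{j\neq i}R_j$ but rather a more delicate argument ensuring simultaneous routability to $T$ from all routers in the original graph, after which the linking paths are carved out to be internally disjoint from all $R_j$. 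Since the theorem is cited rather than proved in this paper, the honest reading is that your proposal is a correct roadmap of the cited proof, not an independent proof, and the key lemma remains undischarged.
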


Let $G$ be a graph of treewidth at least $k$. We use Theorem~\ref{thm: degree reduction} to obtain a subgraph $G'\subseteq G$ of treewidth  $k'=\Omega(k/\poly \log k)$ and maximum vertex degree $3$.
Let $\ell'=2\ell$ and let $w'=c^*\cdot w$, for a large enough constant $c^*$, that will be determined later. By appropriately setting the values of the constants $c$ and $c'$ in the statement of Theorem~\ref{thm: building hairy PoS}, we can ensure that $k'/\log^{\hat c'}k'>\hat cw'(\ell')^{48}$. From Theorem~\ref{thm: building PoS}, graph $G'$ contains a strong \PoS of length $\ell'$ and width $w'$. Our last step is to turn it into a hairy \PoS of length $\ell$ and width $w$, using the following theorem, that was proved in \cite{chuzhoy2016improved}.

\begin{theorem}[Theorem 6.3 in \cite{chuzhoy2016improved}]
	\label{thm: cluster splitting for hairy PoS}
	For every integer $\Delta>0$, there is an integer $c_{\Delta}>0$ depending only on $\Delta$, such that the following holds. Let $G$ be any graph of maximum vertex degree at most $\Delta$, and let $A,B$ be two disjoint subsets of vertices of $G$, with $|A|=|B|=\kappa$, such that $A$ and $B$ are each node-well-linked in $G$, and $(A,B)$ are node-linked in $G$. Then there are two disjoint clusters $C',S'\subseteq G$, a set $\qset$ of at least $\kappa/c_{\Delta}$ node-disjoint paths connecting vertices of $C'$ to vertices of $S'$, so that the paths of $\qset$ are internally disjoint from $C'\cup S'$, and two subsets $A'\subseteq A\cap C'$, $B'\subseteq B\cap C'$ of at least $\kappa/c_{\Delta}$ vertices each such that, if we denote by $X'$ and $Y'$ the endpoint of the paths of $\qset$ lying in $C'$ and $S'$ respectively, then:
	
	\begin{itemize}
		\item set $Y'$ is node-well-linked in $S'$;
		
		\item each of the three sets $A',B'$ and $X'$ is node-well-linked in $C'$; 
	%
		and
		
		\item every pair of sets in $\set{A',B',X'}$ is node-linked in $C'$.
	\end{itemize}
\end{theorem}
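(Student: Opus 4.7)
The plan is to prove Theorem~\ref{thm: cluster splitting for hairy PoS} by combining a node-well-linked decomposition argument with the boosting machinery provided by Theorems~\ref{thm: grouping} and~\ref{thm: linkedness from node-well-linkedness}. The hair cluster $S'$ will be carved out of $G$ by the decomposition, and the boosting theorems will be used on the residual cluster $C' = G \setminus V(S')$ to recover the required well-linkedness of $A', B', X'$ and their pairwise linkedness.

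First, I would pick a set $\pset$ of $\kappa$ node-disjoint paths from $A$ to $B$, guaranteed by the node-linkedness hypothesis, and perform a node-well-linked decomposition of $G$ with respect to the terminal set $A \cup B$, producing a family of clusters $\{C_1,\ldots,C_r\}$ such that the interface vertices of each $C_i$ (its vertices incident to edges leaving $C_i$) form a node-well-linked set inside $C_i$, and only $O(\kappa/\Delta)$ of the paths of $\pset$ are destroyed in total. A counting argument over edges of the paths in $\pset$ then guarantees the existence of a cluster $S'$ whose interface set $Y'$ inside $S'$ has cardinality $\Omega(\kappa/\Delta)$, yet such that $|S' \cap (A\cup B)| \le \kappa/4$ (so that most terminals survive in $C' = G \setminus V(S')$). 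Let $X' \subseteq V(C')$ denote the neighbors of $Y'$ along the interface edges; these edges themselves constitute the required path family $\qset$ of size $\Omega(\kappa/\Delta)$, and $Y'$ is node-well-linked in $S'$ by construction.

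Second, I would work inside $C'$. Since at least $3\kappa/4$ of the vertices of $A$ (respectively $B$) survive in $C'$, and because the original sets $A, B$ were node-well-linked in $G$ and $(A,B)$ were node-linked, a short edge-count combined with the standard ``surviving well-linkedness'' argument shows that $A \cap C'$, $B \cap C'$, and $X'$ are each $\Omega(1/\Delta)$-well-linked in $C'$. I would then apply Theorem~\ref{thm: grouping} three times inside $C'$, to these three sets, to extract subsets $A' \subseteq A \cap C'$, $B' \subseteq B \cap C'$, and (after possibly shrinking) $X' \subseteq X'$, each of size $\Omega(\kappa/\Delta)$ and each node-well-linked in $C'$. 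Pairwise linkedness then follows from Theorem~\ref{thm: linkedness from node-well-linkedness}, applied in $C'$ once to each unordered pair in $\{A', B', X'\}$, after shrinking all three sets uniformly by a further constant factor depending only on $\Delta$.

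The main obstacle is the second step: guaranteeing that $A \cap C'$, $B \cap C'$ and $X'$ retain sufficient well-linkedness in $C'$ after the decomposition has carved off $S'$. A raw well-linked decomposition only certifies well-linkedness of cluster interfaces within their own clusters, not of external terminal sets in the residual graph. To circumvent this I would perform the decomposition with respect to the enlarged terminal set $A \cup B \cup T^*$, where $T^*$ is a set of $\Theta(\kappa)$ ``anchor'' vertices placed on the middle of each path of $\pset$; this forces the decomposition to produce an $S'$ whose removal cannot disconnect a large subset of $A$ from a large subset of $B$ within $C'$, and hence preserves the residual well-linkedness needed for the boosting step. The constant $c_\Delta$ in the statement absorbs the constant-factor losses from the decomposition, from the two applications of boosting, and from the uniform shrinking required by Theorem~\ref{thm: linkedness from node-well-linkedness}.
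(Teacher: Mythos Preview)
The paper does not prove this theorem; it is quoted verbatim as Theorem~6.3 from~\cite{chuzhoy2016improved} and used as a black box in the proof of Theorem~\ref{thm: building hairy PoS}. So there is no ``paper's own proof'' to compare your attempt against.

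That said, your sketch has a genuine gap at the point you yourself flag. After carving out $S'$ via a well-linked decomposition, you assert that $A\cap C'$, $B\cap C'$ and $X'$ are $\Omega(1/\Delta)$-well-linked in $C'$ by a ``surviving well-linkedness'' argument. No such argument exists in general: node-well-linkedness of $A$ in $G$ gives you nothing about $A\cap C'$ in $C'$, because the paths certifying well-linkedness may all pass through $S'$. Your proposed fix---enlarging the terminal set with anchor vertices $T^*$ on the midpoints of the paths in $\pset$---does not address this either; a decomposition that keeps the interfaces of each piece well-linked in that piece still says nothing about the well-linkedness of \emph{external} terminal sets in the complement of a piece. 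Concretely, the decomposition could return a single cluster $S'=G$ minus a thin layer, or could place $S'$ so that its removal separates $A$ from $B$ except through a small interface, and your counting bound $|S'\cap(A\cup B)|\le\kappa/4$ does not prevent this.

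The actual proof in~\cite{chuzhoy2016improved} proceeds differently: it does not try to preserve well-linkedness of $A,B$ after removing $S'$, but instead builds $C'$ and $S'$ simultaneously through an iterative splitting procedure that maintains well-linkedness invariants on both sides throughout, and it crucially uses the linkedness of $(A,B)$ to route large path families that survive the split. If you want to reconstruct a proof, you should look there rather than try to patch the decomposition-then-boost outline.
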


Using the above theorem, we show that any \PoS can be transformed into a hairy \PoS of roughly the same length and width, in the following lemma.

\begin{lemma}
	\label{lem:PoS to hairy PoS}
	Let $G'$ be a graph of maximum vertex degree $3$, and assume that for some $\ell',w'>0$, $G$ contains a strong \PoS of length $\ell'$ and width $w'$. Then $G$ contains a hairy \PoS with length at least $\ell'/2$ and width at least $w'/(3c_{\Delta})$, where $c_{\Delta}$ is the constant from Theorem~\ref{thm: cluster splitting for hairy PoS}.
\end{lemma}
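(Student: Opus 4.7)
The plan is to combine Theorem~\ref{thm: cluster splitting for hairy PoS} (which lets us grow a single ``hair'' out of one well-linked cluster) with the stitching machinery of Claim~\ref{claim: stitching} (which lets us reassemble the odd-indexed clusters of a \PoS into a \PoS of half the length). The overall strategy is: (a)~split each odd-indexed cluster $C_{2i-1}$ of the given strong \PoS into a main piece $C'_{2i-1}$ and a hair $S_i$; (b)~use the even-indexed clusters as connectors to stitch the main pieces into a new strong \PoS of length $\lceil \ell'/2 \rceil$; (c)~attach the hairs.

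More concretely, let $\pos = (\cset, \{\pset_i\}_{i=1}^{\ell'-1}, A_1, B_{\ell'})$ be the given strong \PoS with $\cset = (C_1,\ldots,C_{\ell'})$. For each $1 \le i \le \lceil \ell'/2 \rceil$, the sets $A_{2i-1}$ and $B_{2i-1}$ are each node-well-linked in $C_{2i-1}$ and are node-linked to each other there, and $C_{2i-1}$ inherits maximum degree $3$ from $G'$. Applying Theorem~\ref{thm: cluster splitting for hairy PoS} with $\Delta = 3$ and $\kappa = w'$ inside $C_{2i-1}$ yields disjoint sub-clusters $C'_{2i-1},S_i \subseteq C_{2i-1}$, a set $\qset_i$ of at least $w'/c_\Delta$ node-disjoint paths between them (internally disjoint from both), subsets $A'_{2i-1}\subseteq A_{2i-1}\cap C'_{2i-1}$ and $B'_{2i-1}\subseteq B_{2i-1}\cap C'_{2i-1}$ of size $\ge w'/c_\Delta$ each, and endpoint sets $X_i \subseteq V(C'_{2i-1})$, $Y_i \subseteq V(S_i)$, such that $Y_i$ is node-well-linked in $S_i$, each of $A'_{2i-1},B'_{2i-1},X_i$ is node-well-linked in $C'_{2i-1}$, and every pair among them is node-linked in $C'_{2i-1}$. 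Since all of $C'_{2i-1}$, $S_i$, and $V(\qset_i)$ lie inside the original $C_{2i-1}$, they are automatically disjoint from the other clusters $C_j$ ($j \neq 2i-1$) and from every path in $\bigcup_j \pset_j$, which is the global disjointness required by the definition of a hairy \PoS.

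The definition of a hairy \PoS also requires $X_i \cap (A_i \cup B_i) = \emptyset$, which Theorem~\ref{thm: cluster splitting for hairy PoS} does not a priori guarantee. To arrange this, I would select, for each $i$, pairwise-disjoint subsets $\tilde A_{2i-1} \subseteq A'_{2i-1}$, $\tilde B_{2i-1} \subseteq B'_{2i-1}$, $\tilde X_i \subseteq X_i$, each of size at least $w'/(3c_\Delta)$. The sets $A'_{2i-1},B'_{2i-1}$ are already disjoint (as $A_{2i-1}\cap B_{2i-1}=\emptyset$), so only $\tilde X_i$ must avoid $\tilde A_{2i-1}\cup \tilde B_{2i-1}$. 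Here one uses the pairwise linkage inside $C'_{2i-1}$: the $w'/c_\Delta$ node-disjoint $(A'_{2i-1},X_i)$-paths and $(B'_{2i-1},X_i)$-paths let one extract matched pairs of distinct vertices, and a short averaging argument produces the three subsets of size at least $w'/(3c_\Delta)$ while keeping the node-well-linkedness and pairwise linkedness inside $C'_{2i-1}$ (taking equal-size subsets of a node-well-linked set preserves well-linkedness up to the surviving size threshold). Let $\tilde\qset_i \subseteq \qset_i$ be the paths originating at $\tilde X_i$, and $\tilde Y_i \subseteq Y_i$ their endpoints in $S_i$; then $\tilde Y_i$ inherits node-well-linkedness in $S_i$ (again by taking a subset).

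Finally, I would invoke Claim~\ref{claim: stitching} on the original strong \PoS $\pos$ with the subsets $\tilde A_{2i-1}\subseteq A_{2i-1}$ and $\tilde B_{2i-1}\subseteq B_{2i-1}$ of width $\tilde w = w'/(3c_\Delta)$. The claim, applied to a strong \PoS, yields a new strong \PoS $\hat\pos$ of length $\lceil \ell'/2\rceil \ge \ell'/2$ and width $\tilde w$, whose clusters are $C_1,C_3,\ldots$ and whose new linking path sets $\hat\pset_i$ travel through the even-indexed clusters $C_{2i}$ and the corresponding paths of $\pset_{2i-1}\cup\pset_{2i}$. I would then replace each odd cluster $C_{2i-1}$ of $\hat\cset$ by its sub-cluster $C'_{2i-1}$: because $\tilde A_{2i-1},\tilde B_{2i-1}\subseteq V(C'_{2i-1})$ and all required well-linkedness and linkedness of $\tilde A_{2i-1},\tilde B_{2i-1}$ already hold inside $C'_{2i-1}$, the resulting object remains a strong \PoS. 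Combining this strong \PoS with the clusters $\{S_i\}$, the boundary pairs $\{\tilde X_i,\tilde Y_i\}$, and the path sets $\{\tilde\qset_i\}$ produces the desired hairy \PoS of length at least $\ell'/2$ and width at least $w'/(3c_\Delta)$.

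The step I expect to be the main obstacle is the middle paragraph: simultaneously extracting the three subsets $\tilde A_{2i-1},\tilde B_{2i-1},\tilde X_i$ that are pairwise disjoint, large (the full constant $3$ in $3c_\Delta$ comes from here), and preserve the node-well-linkedness and pairwise linkedness inside $C'_{2i-1}$. A naive pigeonhole on the three size-$(w'/c_\Delta)$ sets is insufficient, because one set could be almost contained in another; the right tool is the pairwise linkage, which allows one to pick matched distinct-vertex representatives and salvage a constant fraction. All other steps are essentially bookkeeping on top of two previously established tools.
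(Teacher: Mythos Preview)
Your proposal is correct and follows essentially the same approach as the paper's proof: apply Theorem~\ref{thm: cluster splitting for hairy PoS} to each odd-indexed cluster, trim to pairwise disjoint subsets of size $\lceil w'/(3c_\Delta)\rceil$, stitch via Claim~\ref{claim: stitching}, and then replace each $C_{2i-1}$ by $C'_{2i-1}$ together with its hair. The step you flag as the main obstacle is in fact immediate: since $A'_{2i-1}\cap B'_{2i-1}=\emptyset$, first choose $\tilde A_{2i-1}\subseteq A'_{2i-1}$ and $\tilde B_{2i-1}\subseteq B'_{2i-1}$ of size $\lceil w'/(3c_\Delta)\rceil$ each, and then $|X_i\setminus(\tilde A_{2i-1}\cup\tilde B_{2i-1})|\ge w'/c_\Delta-2\lceil w'/(3c_\Delta)\rceil\ge \lceil w'/(3c_\Delta)\rceil$, with node-well-linkedness and pairwise linkedness automatically inherited by subsets.
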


Setting the constant $c^*$ from the definition of $w'$ to be $3c_{\Delta}$, from the above lemma, graph $G'$ contains a hairy \PoS of length at least $\ell=\ell'/2$ and width at least $w=w'/(3c_{\Delta})$, completing the proof of Theorem~\ref{thm: building hairy PoS}. It now remains to prove Lemma~\ref{lem:PoS to hairy PoS}.

\begin{proofof}{Lemma~\ref{lem:PoS to hairy PoS}}
	Let $\pos=(\sset,\set{\pset_i}_{i=1}^{\ell'-1},A_1,B_{\ell'})$ be the given strong \PoS in $G'$, of length $\ell'$ and width $w'$. Recall that the maximum vertex degree in $G'$ is $3$. Let $1\leq i\leq \ell'$ be an odd integer. We apply Theorem~\ref{thm: cluster splitting for hairy PoS} to graph $C_i$, with $A=A_i$ and $B=B_i$. We denote the resulting two clusters $C'$ and $S'$ by $C'_i$ and $S'_i$, respectively, and we denote the resulting subsets $A',B',X',Y'$ of vertices by $A''_i,B''_i,X''_i$, and $Y''_i$, respectively (recall that the cardinality of each such vertex set is at least $w'/c_{\Delta}$). We also denote the corresponding set $\qset'$ of paths by $\qset'_i$.
	One difficulty is that we are not guaranteed that $X''_i$ is disjoint from $A''_i\cup B''_i$. But it is easy to verify that we  can select subsetes $A'_i\subseteq A''_i,B'_i\subseteq B''_i,X'_i\subseteq X''_i$ of cardinalities $\ceil{w'/(3c_{\Delta})}$ each, such that all three sets $A'_i,B'_i,X'_i$ are disjoint. We let $\qset_i\subseteq\qset'_i$ be the set of paths originating at the vertices of $X'_i$, and we let $Y'_i$ be the set of their endpoints that belong to $S_i$.

	Notice that for each odd integer $1\leq i\leq \ell'$, we have now selected two subsets $A'_i\subseteq A_i$ and $B'_i\subseteq B_i$ of $\ceil{w'/(3c_{\Delta})}$ vertices. Using Claim~\ref{claim: stitching}, we can construct a new \PoS $\pos'= (\cset',\set{\pset'_i}_{i=1}^{\ceil{\ell'/2}},A'_1,B'_{\ceil{\ell'/2}})$ of length $\ceil{\ell'/2}$ and width $\ceil{w'/(3c_{\Delta})}$, such that $\cset'=(C_1,C_3,\ldots,C_{2\ceil{\ell'/2}-1})$, and for each $1\leq i<\ceil{\ell'/2}$, the paths in $\pset'_i$ connect the vertices of $B'_{2i-1}$ to the vertices of $A'_{2i+1}$. 
	
	For each $1\leq i\leq \ceil{\ell'/2}$, we replace cluster $C_{2i-1}$ with $C'_{2i-1},\qset_{2i-1}$, and $S_{2i-1}$, to obtain a hairy \PoS of length $\ell'/2$ and width $\ceil{w'/(3c_{\Delta})}$.
\end{proofof}

\section{Proof of Lemma~\ref{lemma: numbering}}
The proof presented here is almost identical to the proof of~\cite{robertson1983graph}. 
We construct a directed graph $G'$ over the set $V(\hat G)$ of vertices, and prove that this graph is a directed acyclic graph. We will then use the natural ordering of the vertices of $V(G')$ in order to define the bijection $\mu$.

We now define the graph $G'=(V',E')$. The set of vertices of $G'$ is $V'=V(\hat G)$. The set of edges consists of two subsets. First, for every path $R\in \hrset$, for every pair $(v,v')$ of distinct vertices of $R$, such that $v$ lies before $v'$ on $R$, we add a directed edge $(v,v')$ to $G'$. The resulting edges are called type-$1$ edges, and the set of these edges is denoted by $E_1$. We now define the second set $E_2$ of edges. Let $R,R'$ be two {\bf distinct} paths of $\hrset$, and let $v\in V(R)$, $v'\in V(R')$ be a pair of vertices. We add a directed edge $(v,v')$ to $E_2$ iff there is a vertex $u\in V(R)$, that lies strictly after $v$ on $R$, and there is an edge $(u,v')$ in $\hat G$. The resulting edge $(v,v')$ is called a type-2 edge, and we say that $u$ is the \emph{witness} for this edge. We then let $E'=E_1\cup E_2$, completing the definition of the graph $G'$.
The following observation is immediate from the definition of the edges of $G'$.

\begin{observation}\label{obs: transitivity of edges}
	Let $R\in \hrset$ be a path, and let $v,v'$ be two vertices of $R$, such that $v$ lies strictly before $v'$ on $R$. Let $u\in V'$ be any vertex, such that $(v',u)\in E'$. Then $(v,u)\in E'$.
\end{observation}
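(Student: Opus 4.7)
The plan is to prove the observation by case analysis on the type of the edge $(v',u)$ in $G'$, since $E' = E_1 \cup E_2$ and the two edge types have fundamentally different origins.

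First, I would handle the case where $(v',u)$ is a type-$1$ edge. By definition of type-$1$ edges, this means $u$ lies on the same path $R$ as $v'$, and $v'$ appears strictly before $u$ on $R$. Since we are given that $v$ lies strictly before $v'$ on $R$, by transitivity of the ``appears before'' order along a single path, $v$ lies strictly before $u$ on $R$. Therefore $(v,u) \in E_1 \subseteq E'$ by definition.

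Next, I would handle the case where $(v',u)$ is a type-$2$ edge. By definition, there exists some path $R' \in \hrset$ with $R' \neq R$, $u \in V(R')$, and a witness vertex $w \in V(R)$ such that $w$ lies strictly after $v'$ on $R$ and $(w,u) \in E(\hat G)$. Since $v$ lies strictly before $v'$ on $R$, and $w$ lies strictly after $v'$ on $R$, the witness $w$ also lies strictly after $v$ on $R$. Hence $w$ serves as a valid witness for a type-$2$ edge $(v,u)$, which shows $(v,u) \in E_2 \subseteq E'$.

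Since $E' = E_1 \cup E_2$ and both cases are covered, the observation follows. I do not anticipate any obstacle here: the statement is essentially a monotonicity/transitivity property that follows directly from unpacking the definitions of the two edge types and using that vertices appearing before each other along a single path form a linear order. The only mild subtlety to state carefully is that in the type-$2$ case the witness $w$ is the same witness for both $(v',u)$ and $(v,u)$, so the argument does not require constructing any new vertex or edge.
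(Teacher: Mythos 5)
Your proof is correct and follows essentially the same argument as the paper: the paper cases on whether $u$ lies on $R$ (equivalently, whether the edge is type-$1$ or type-$2$, since the paths of $\hrset$ are disjoint), and in the type-$2$ case reuses the same witness $w$, exactly as you do.
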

\begin{proof}
	If $u$ also lies on $R$, then $u$ appears after $v'$ on $R$, and hence it also appears after $v$ on $R$, so $(v,u)\in E_1$. Otherwise, if $w$ is the witness for the edge $(v',u)$, then $w$ appears after $v'$ on $R$, and hence $(v,u)\in E_2$, with the witness $w$.
\end{proof}

\begin{claim}\label{claim: DAG}
	Graph $G'$ is a directed acyclic graph.
\end{claim}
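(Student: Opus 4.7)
I will argue by contradiction: if $G'$ contains a directed cycle, then I can construct an $(\hat A,\hat B)$-linkage in $\hat G$ that differs from $\hrset$, contradicting the perfect unique linkage property.

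So suppose $G'$ contains a directed cycle, and pick a \emph{shortest} such cycle $C = v_1\to v_2\to\cdots\to v_k\to v_1$. For each index $i$ (mod $k$), let $R(v_i)\in\hrset$ be the path containing $v_i$. The first step is to show that no two consecutive cycle vertices lie on the same path. Indeed, if $R(v_i)=R(v_{i+1})$, then the edge $(v_i,v_{i+1})$ is of type $1$, so $v_i$ strictly precedes $v_{i+1}$ on this path; by Observation~\ref{obs: transitivity of edges} applied to $v_i,v_{i+1}$ and the successor $v_{i+2}$, the edge $(v_i,v_{i+2})$ already belongs to $E'$, which would give a strictly shorter cycle. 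Hence every edge on $C$ is of type $2$, and $R(v_i)\neq R(v_{i+1})$ for all $i$.

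Next, for each $i$, let $u_i$ denote a witness for the type-$2$ edge $(v_i,v_{i+1})$, so $u_i$ lies on $R(v_i)$ strictly after $v_i$, and $(u_i,v_{i+1})\in E(\hat G)$. For every $i$ (indices mod $k$), I define a new path $R_i'$ in $\hat G$ by concatenating:
the prefix of $R(v_i)$ from its $\hat A$-endpoint to $v_i$; the edge $(v_i,u_{i-1})\in E(\hat G)$ (which exists because $u_{i-1}$ is the witness for $(v_{i-1},v_i)$); and the suffix of $R(v_{i-1})$ from $u_{i-1}$ to its $\hat B$-endpoint. Leave all paths of $\hrset$ not appearing among $\{R(v_1),\ldots,R(v_k)\}$ unchanged. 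Let $\hrset'$ be the resulting collection.

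I then need to verify three things. First, each $R_i'$ is a simple path: its ``prefix'' part lives on $R(v_i)$ up to $v_i$, and its ``suffix'' part lives on $R(v_{i-1})$ starting strictly after $v_{i-1}$ (since $u_{i-1}$ strictly follows $v_{i-1}$ on $R(v_{i-1})$), and these two pieces lie on distinct paths of the original $\hrset$, hence are vertex-disjoint. Second, the new paths are pairwise disjoint from one another and from the unchanged paths of $\hrset$: on each original path $R(v_j)$, only $R_j'$ (using the prefix up to $v_j$) and $R_{j+1}'$ (using the suffix from $u_j$) touch $R(v_j)$, and these are disjoint because $v_j$ strictly precedes $u_j$; no other $R_i'$ meets $R(v_j)$, and paths not in the cycle are untouched. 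Third, $\hrset'$ is a genuine $(\hat A,\hat B)$-linkage of the correct cardinality $|\hrset|$, because each $R_i'$ connects the $\hat A$-endpoint of $R(v_i)$ to the $\hat B$-endpoint of $R(v_{i-1})$.

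Finally, I observe that $\hrset'\neq\hrset$: the path $R_1'$ has $\hat A$-endpoint equal to that of $R(v_1)$ but $\hat B$-endpoint equal to that of $R(v_0)=R(v_k)\neq R(v_1)$, so $R_1'$ is not among the paths of $\hrset$. This contradicts the perfect unique linkage property of $\hat G$ with respect to $(\hat A,\hat B)$, completing the proof.

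The main obstacle I anticipate is the bookkeeping for the disjointness and simplicity of the new paths $R_i'$; the key structural fact that makes everything work is the ordering condition ``$u_i$ lies strictly after $v_i$ on $R(v_i)$'' combined with the fact (forced by the minimality of the cycle) that consecutive cycle vertices lie on different paths of $\hrset$.
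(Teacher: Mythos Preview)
Your approach is the same as the paper's, but there is a genuine gap in the disjointness argument. You only prove that \emph{consecutive} cycle vertices lie on distinct paths of $\hrset$, whereas your rerouting needs \emph{all} cycle vertices $v_1,\ldots,v_k$ to lie on pairwise distinct paths. Concretely, suppose $k\ge 4$ and $R(v_1)=R(v_3)=R$ (with $R(v_2)\neq R$, so your consecutive-distinctness conclusion is satisfied). Then four of your new paths touch $R$: $R_1'$ uses the prefix of $R$ up to $v_1$, $R_2'$ uses the suffix of $R$ from $u_1$, $R_3'$ uses the prefix of $R$ up to $v_3$, and $R_4'$ uses the suffix of $R$ from $u_3$. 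If, say, $v_3$ lies after $u_1$ on $R$, then the suffix used by $R_2'$ and the prefix used by $R_3'$ overlap on the segment of $R$ between $u_1$ and $v_3$, so $\hrset'$ is not a set of node-disjoint paths. Your sentence ``no other $R_i'$ meets $R(v_j)$'' is exactly the unproved step.

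The fix is to strengthen your first step: minimality of $C$ in fact forces \emph{all} $v_i$ to lie on distinct paths. If $v_i,v_j$ lie on the same $R$ with $i<j$ and $v_i$ precedes $v_j$ on $R$, then by Observation~\ref{obs: transitivity of edges} the edge $(v_i,v_{j+1})\in E'$, and $v_1,\ldots,v_i,v_{j+1},\ldots,v_k$ is a shorter cycle; if instead $v_j$ precedes $v_i$ on $R$, then $(v_j,v_{i+1})\in E'$ and $v_{i+1},\ldots,v_j$ is a shorter cycle. Once all $R(v_i)$ are distinct, your rerouting and disjointness verification go through exactly as you wrote them, and this is precisely what the paper does.
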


\begin{proof}
	Assume for contradiction that $G'$ is not acyclic, and let $C=(v_1,v_2,\ldots,v_r)$ be a directed cycle in $G'$, containing fewest vertices  (we allow a cycle to consist of two vertices). 
	
	\begin{observation}
		The vertices $v_1,\ldots,v_r$ lie on distinct paths of $\hrset$.
	\end{observation}

\begin{proof}
	Assume otherwise, so there are indices $1\leq i<j\leq r$, such that $v_i$ and $v_j$ lie on the same path of $\hrset$. Denote this path by $R$.
	In this proof, the addition in the subscripts is modulo $r$, that is, for $z=r$, $v_{z+1}=v_1$, and for $z=1$, $v_{z-1}=v_r$. 
	
	 Assume first that $v_i$ appears before $v_j$ on the path $R$. Then, from Observation~\ref{obs: transitivity of edges}, edge $(v_i,v_{j+1})$ belongs to $E'$. Therefore, $v_1,\ldots,v_i,v_{j+1},\ldots,v_r$ is a cycle in $G'$, containing fewer vertices that $C$ (if $j=r$, then we use the cycle $(v_1,\ldots,v_i)$), contradicting the minimality of the cycle $C$.
	
	Notice that it is impossible that $v_i=v_j$: since graph $G'$ does not contain loops, if $v_i=v_j$, then $i$ and $j$ cannot be consecutive indices, and so $(v_1,\ldots,v_i,v_{j+1},\ldots,v_r)$ is a cycle in $G'$ containing fewer vertices than $C$, a contradiction. 
	
	Therefore, $v_i$ appears strictly after $v_j$ on $R$. In this case, $(v_i,v_j)$ cannot be an edge of $G'$, and so $j\neq i+1$.  But then, from  Observation~\ref{obs: transitivity of edges}, edge $(v_j,v_{i+1})$ belongs to $E'$, and so $(v_{i+1},\ldots,v_j)$ is a cycle in $G'$, containing fewer vertices than $C$, contradicting the minimality of $C$.
\end{proof}

We conclude that every vertex of $C$ appears on a distinct path of $\hrset$. For each $1\leq i\leq r$, let $R_i$ be the path of $\hrset$ containing the vertex $v_i$, and let $u_i$ be the witness for the edge $(v_i,v_{i+1})$. We let $\rset'=\set{R_1,\ldots,R_r}$ and $\rset''=\hrset\setminus \rset'$. For each $1\leq i\leq r$, let $a_i$ and $b_i$ be the first and the last endpoints of the path $R_i$, respectively, and let $A'=\set{a_1,\ldots,a_r}$, and $B'=\set{b_1,\ldots,b_r}$. We construct an $(A',B')$-linkage $\tilde \rset$ in $\hG$, that is different from $\rset'$, and disjoint from $\rset''$. This implies that $\tilde \rset\cup \rset''$ is an $(\hat A,\hat B)$-linkage in $\hG$, that is different from $\hrset$, contradicting the uniqueness of $\hrset$.

The linkage $\tilde \rset$ consists of $r$ paths $R'_1,\ldots,R'_r$, where for all $1\leq i\leq r$, path $R'_i$ starts at $a_i$ and terminates at $b_{i-1}$ (path $R'_1$ starts at $a_1$ and terminates at $b_r$).

For $1\leq i\leq r$, path $R'_i$ consists of: (i) the segment of $R_i$ from $a_i$ to $v_i$; (ii) the edge $(v_i,u_{i-1})$ of $\hG$ (for $i=1$, we use edge $(v_1,u_r)$); and (iii) the segment of $R_{i-1}$ from $u_{i-1}$ to $b_{i-1}$ (see Figure~\ref{fig: rerouting}).

\begin{figure}[h]
	\centering
	\scalebox{0.3}{\includegraphics{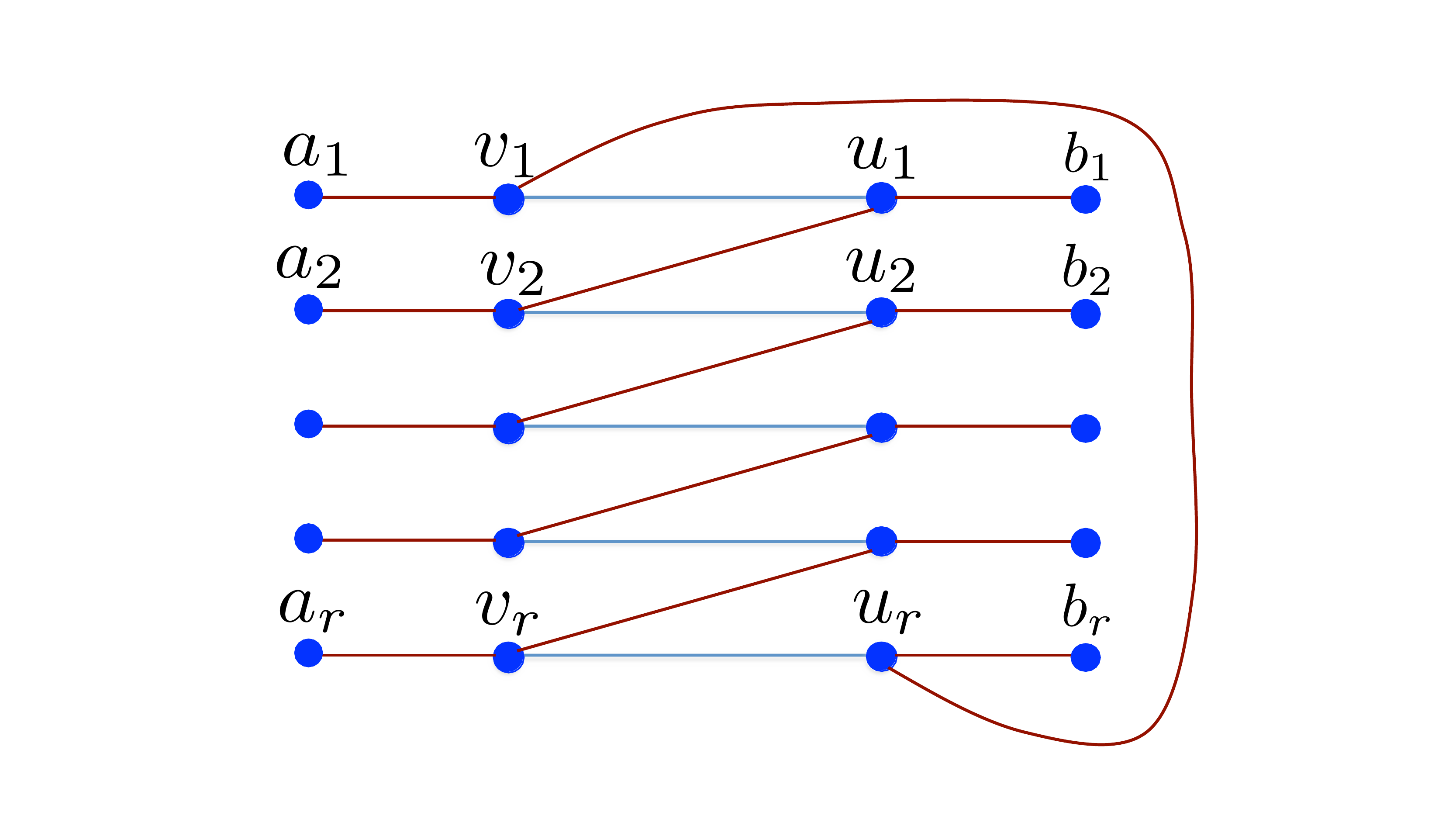}}
	\caption{Rerouting the paths in $\rset'$. The new set $\tilde \rset$ of paths is shown in red. \label{fig: rerouting}}
\end{figure}
\end{proof}

For convenience, let $\hat n=|V(\hat G)|$.
We now define the bijection $\mu$ of the vertices of $V(\hat G)$ to $\set{1,\ldots,\hat n}$, in a natural way, using $G'$. We perform $\hat n$ iterations, where in iteration $i$ we select a vertex $v_i\in V(G')$, setting $\mu(v_i)=i$. The first iteration is executed as follows. Since $G'$ is a directed acyclic graph, there is some vertex $v\in V'$ with no incoming edges. We denote $v_1=v$, and set $\mu(v)=1$. Consider now some iteration $i>1$, and assume that vertices $v_1,\ldots,v_{i-1}$ were already defined. Consider the graph $G'_i=G'\setminus\set{v_1,\ldots,v_{i-1}}$. This graph is again a directed acyclic graph, so it contains some vertex $v$ with no incoming edges. We let $v_i=v$, and we set $\mu(v_i)=i$. This finishes the definition of the mapping $\mu$. We now verify that it has the required properties.

For the first property, if $R\in \hrset$ is any path, and $v,v'$ are distinct vertices of $R$, with $v'$ appearing before $v$ on $R$, then the edge $(v',v)$ is present in $G'$, and so $\mu(v')<\mu(v)$.

Consider now some integer $1\leq t\leq \hat n$. Recall that set $S_t$ contains, for every path $R\in \hrset$, the first vertex $v$ on $R$ with $\mu(v)\geq t$; if no such vertex exists, then we add the last vertex of $R$ to $S_t$. Recall also that $Y_t=\set{v\in V(\hat G)\mid \mu(v)<t}$ and $Z_t=\set{v\in V(\hat G)\mid \mu(v)\geq t}$. We need to show that graph $\hat G\setminus S_t$ contains no path connecting a vertex of $Y_t$ to a vertex of $Z_t$. Assume otherwise. Then there must be some pair of vertices $v\in Y_t\setminus S_t$ and $v'\in Z_t\setminus S_t$, such that $(v,v')\in E(\hat G)$.
Let $R\in \hrset$ be the path containing $v$, and let $R'\in \hrset$ be the path containing $v'$.

Assume first that $R=R'$. Since $v\in  Y_t\setminus S_t$, $\mu(v)<t$, and, since $v'\in  Z_t\setminus S_t$, $\mu(v')\geq t$. Moreover, there must be some other vertex $u$ of $R$ that belongs to $S_t$, and, from the definition of $S_t$, $u$ appears between $v$ and $v'$ on $R$. Let $a,b$ be the endpoints of the path $R$. Then we can construct a new path $R'$, connecting $a$ to $b$, as follows: path $R'$ consists of the segment of $R$ from $a$ to $v$; the edge $(v,v')$, and the segment of $R$ from $v'$ to $b$. But then, replacing $R$ with $R'$ in $\hrset$, we obtain an $(\hat A,\hat B)$-linkage in $\hG$ that is different from $\hrset$,  contradicting the unique linkage property of $\hat G$.

Therefore, $R\neq R'$ must hold. Let $x$ be the vertex of $R'$ that belongs to $S_t$. From the definition of $S_t$, $\mu(x)\geq t$, and $x$ appears before $v'$ on path $R'$. But then there must be an edge $(x,v)$ in $E_2\subseteq E(G')$, whose witness is $v'$. From the definition of $\mu$, if such an edge is present in $G'$, then $\mu(x)<\mu(v)$ must hold. But $\mu(x)\geq t$, while $\mu(v)<t$, a contradiction.

\end{document}